\title{Stabbing Pairwise Intersecting Disks by Four Points}
\author{Paz Carmi}{Department of Computer Science, Ben-Gurion University of the Negev, Israel}{carmip@cs.bgu.ac.il}{}{}
\author {Matthew J. Katz}{Department of Computer Science, Ben-Gurion University of the Negev, Israel}{matya@cs.bgu.ac.il}{}{}
\author{Pat Morin}{School of Computer Science, Carleton University, Canada}{morin@scs.carleton.ca}{}{}
\authorrunning{P. Carmi, M. Katz, P. Morin }
\keywords{disks in the plane, Helly-type theorems, piercing set}
\newcommand{\etal}{{et~al. }}
\definecolor{mycolor}{rgb}{0,0,1}
\newcommand{\dmin}{d_{\min}}
\newcommand{\rmin}{r_{\min}}
\newtheorem{Observation}{Observation}
\newcommand{\old}[1]{{{}}}
\begin{document}

\maketitle

\begin{abstract}
	In their seminal work, Danzer (1956, 1986) and Stach\'{o} (1981) established that every set of pairwise intersecting disks in the plane can be stabbed by four points. However, both these proofs are non-constructive, at least in the sense that they do not seem to imply an efficient algorithm for finding the stabbing points, given such a set of disks $D$. Recently, Har-Peled \etal (2018) presented a relatively simple linear-time algorithm for finding five points that stab $D$. We present an alternative proof (and the first in English) to the assertion that four points are sufficient to stab $D$. Moreover, our proof is constructive and provides a simple linear-time algorithm for finding the stabbing points. As a warmup, we present a nearly-trivial liner-time algorithm with an elementary proof for finding five points that stab $D$.

	\old{
	Following the seminal works of Danzer (1956, 1986) and  Stach\'{o} (1965, 1981), and the recent
	result of Har-Peled \etal (2018), we study the problem of stabbing disks by points. We
	prove that any set of pairwise intersecting disks in the plane can be stabbed by four points. Our
	proof is constructive and yields a linear-time algorithm for finding the points.
	}
\end{abstract}

\newpage
\section{Introduction}
Let $D$ be a set of pairwise intersecting disks in the plane. We say that a set $S$ of points in the plane {\em stabs} $D$, if every disk in $D$ is stabbed by a point of $S$, i.e., every disk in $D$ contains a point of $S$.
The problem of stabbing disks with as few points as possible has attracted the attention of mathematicians for the past century.

The famous Helly’s theorem states that if every three disks in $D$ have a nonempty intersection, then all disks in $D$ have a nonempty intersection~\cite{Helly1923, Helly1930, Radon1921}. Thus, any point in this intersection stabs $D$. The problem becomes more interesting if $D$ contains triplets of disks with an empty intersection. In this case, Danzer (1986) \cite{Danzer1986} and Stach\'{o} (1981) \cite{Stacho1981} have shown that $D$ can be stabbed by four points. Danzer's proof uses some ideas from his first (unpublished) proof from 1956, and Stach\'{o}'s proof uses similar arguments to those in his
construction of five stabbing points from 1965 \cite{Stacho1965}. If the disks in $D$ are unit disks, then three points suffice to stab $D$ \cite{Hadwiger1955}.

As noted in the recent paper by Har-Peled \etal \cite{Har-Peled2018}, Danzer’s proof is fairly involved, and there seems to be no obvious
way to turn it into an efficient algorithm. The construction of Stach\'{o} is apparently more comprehensible, but does not seem to lead to an efficient algorithm for generating the stabbing points. Consequently, several teams of researchers have put significant and continuous effort in obtaining an efficient algorithm for generating the stabbing points, alas without success.
Har-Peled \etal \cite{Har-Peled2018} presented a simple linear-time algorithm for finding a set of five stabbing points, and left the problem of efficiently finding a set of four stabbing points as an open problem.

In this paper, we present a new proof (and the first in English) to the claim that four points are sufficient to stab $D$. Moreover, our proof is constructive and provides a linear-time algorithm for finding the stabbing points. Even though the algorithm itself is quite simple, the proof is rather involved; it is based on a careful case analysis, which is perhaps not as elegant as one would hope for, however, coming up with the right set of cases is highly non-trivial. Moreover, we use a small set of elementary geometric observations to resolve all the cases. As a warmup, we present a nearly-trivial linear-time algorithm for stabbing $D$ by five points. It's proof is elementary, and it serves as an introduction to our proof technique for the four point case.

As for lower bounds, Gr\"{u}nbaum gave an example of 21 pairwise intersecting disks that cannot be stabbed by three points \cite{Grunbaum1959}. Danzer reduced the number of disks to ten \cite{Danzer1986}, however, according to \cite{Har-Peled2018}, this bound is hard to verify. Har-Peled \etal gave a simple construction that uses 13 disks \cite{Har-Peled2018}. It would be interesting to find a simple construction for ten disks. Since every set of eight disks can be stabbed by three points \cite{Stacho1965}, it remains open to verify whether or not every set of 9 disks can be stabbed by three points.


\section{The Setup and Preliminaries}

We denote by $x(p)$ and $y(p)$ the $x$ and $y$ coordinates of a point $p$ in the plane.
We denote by $r(d)$ the radius of a disk $d$ in the plane.

Let $d^*$ be the smallest disk (not necessarily in $D$) that intersects every disk in $D$.
We find a set of four points that stab $D\cup \{d^*\}$.
Thus, without loss of generality, we assume that $d^*$ is in $D$.
By scaling and translating the scene, we may assume that $r(d^*)=1$ and that $c^*$, $d^*$'s center, is at the origin $(0,0)$.
Let $D^{-}$ be the set of all disks in $D$ that do not contain $c^*$.
For a disk $d \in D^{-}$, let $\delta(d) = |c c^*|- r(d)$, where $c$ is the center of $d$; see Figure~\ref{fig:alpha} for an illustration.
We denote by $D^{-}_{ \leq k}$, the set of all disks in $D^{-}$ whose radius is at most $k$.

Our choice of $d^*$ ensures that it is tangent to three disks, say $d_1,d_2,d_3$, of $D$.
Let $x_1,x_2,x_3$ be the points of tangency of $d^*$ with $d_1,d_2,d_3$, respectively.
Our choice of $d^*$ also ensures that $c^*$ is in the triangle $\triangle x_1 x_2 x_3$.
For each $i \in \{1,2,3\}$, we denote by $\ell_i$ the tangent line to $d^*$
at $x_i$, by $\ell_i'$ the reflection of $\ell_i$ with respect to
$c^*$, and by $h_i$ the closed halfplane defined by $\ell_i$ that contains $d_i$.
See Figure~\ref{fig:notation}.

\begin{figure}[hbt]
	\centering
	\includegraphics[scale=0.55]{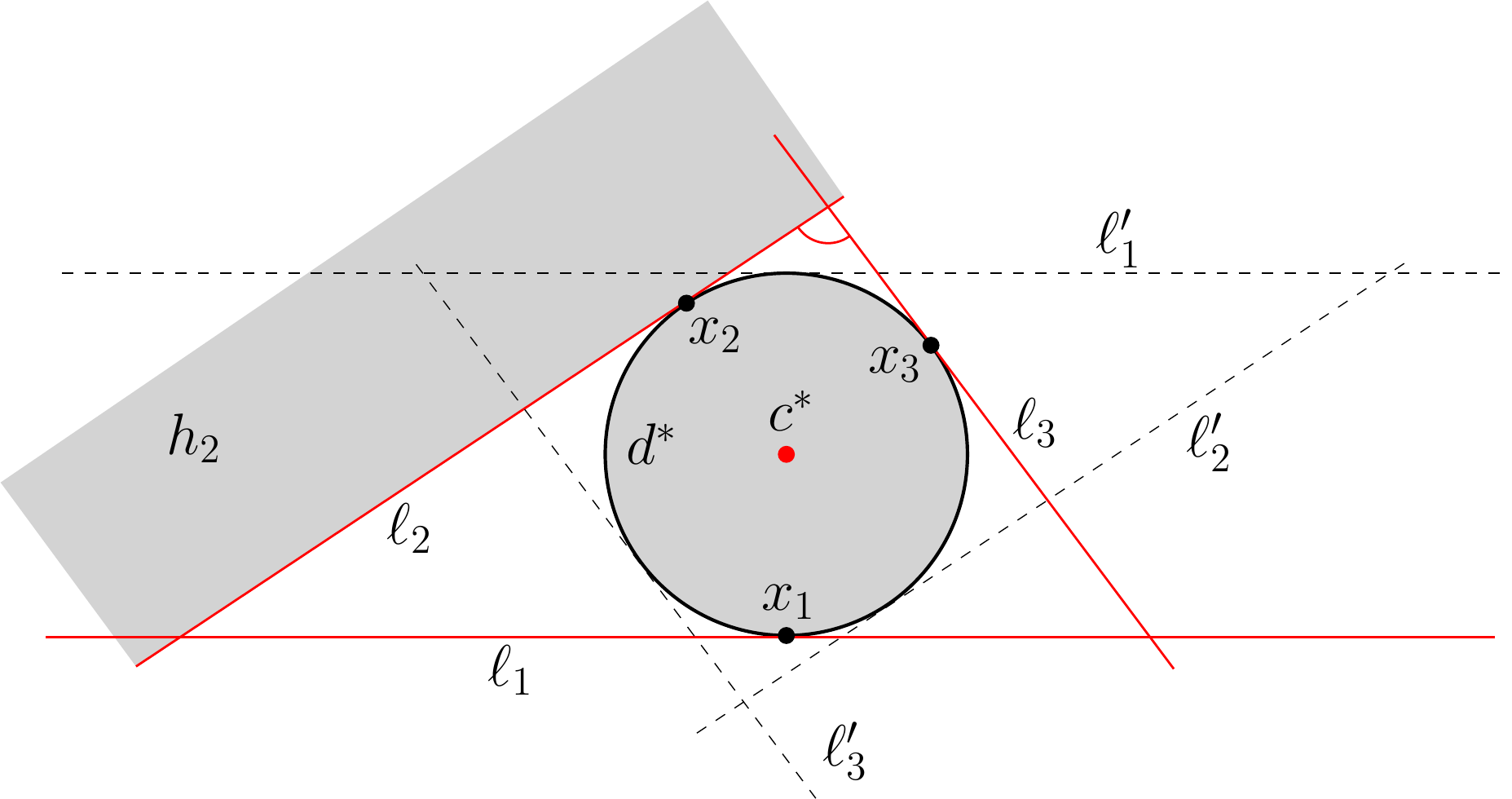}
	\caption{Setup and notation.}
	\label{fig:notation}
\end{figure}

{\color{mycolor}Base setting:}
Let $\Delta$ be the triangle formed by the intersection points of $\ell_1$, $\ell_2$, and $\ell_3$.
After rotating the scene and relabeling the disks $d_1,d_2,d_3$, we may assume that $x_1=(0,-1)$, the line $\ell_2$ has positive slope, the line $\ell_3$ has negative slope, and the largest angle of $\Delta$ is at the intersection of $\ell_2$ and $\ell_3$, as depicted in Figure~\ref{fig:notation}.


	{\color{mycolor}Alternative setting:}  Here we assume without loss of generality that the center of
	$d'$ is on the positive $x$-axis, where $d'$ is a disk in $ D^{-}_{ \leq k}$ with maximum $\delta(d')$, for some value of $k$.
	Later, whenever we use this setting, we specify the appropriate value of $k$.


\section{Stabbing by Five Points}
In this section we present a very simple linear-time algorithm for stabbing pairwise intersecting disks with five points. Its proof is similar in spirit to the proof of the much more involved stabbing by four points algorithm, thus it also serves as an introduction to the proof of the latter algorithm.
In this section we assume the {\color{mycolor}base setting}.

Let $S = \{(0,0), (2,0), (-2,0), (0,2), (0,-2)\}$. Then, $S$ is the set of corners plus center of the square of edge length $2\sqrt{2}$, whose center is at the origin and whose corners lie on the axes.
We claim that $S$ stabs $D$.

Indeed, let $d \in D$ be a disk and let $c$ denote its center. We prove that $d$ is stabbed by one of the points in $S$.
Assume first that $c$ lies in the second quadrant (i.e., in the north-west quadrant).
We distinguish between two cases.

\noindent
Case~1. If $-2 \le x(c) \le 0$, then since $d$ intersects $d_1$ it must also intersect $\ell_1$, and by Observation~\ref{radius}, $d$ is stabbed by $(0,0)$ or $(-2,0)$. To see this, let $(-2,0)$, $(0,0)$, and the $x$-axis play the roles of $a$, $b$, and $\ell$ in Observation~\ref{radius}, respectively. Moreover, let the closed halfplane below $\ell_1$ and the unit disk centered at $(-1,0)$ play the roles of $\eta$ and $\delta$, respectively, and let $d$ (whose radius is clearly at least 1) play the role of $\epsilon$.

\noindent
Case~2. If $x(c) < -2$, then $c$ lies above the line $\ell$ that passes through the points $(-2,0)$ and $(0,2)$. Let $V$ be the strip that is defined by the lines perpendicular to $\ell$ through $(-2,0)$ and through $(0,2)$, respectively.
If $c$ is outside $V$, then by Observation~\ref{outside-strip}, $d$ is stabbed by $(-2,0)$ or $(0,2)$. To see this, let $(-2,0)$, $(0,2)$, $\ell$, and $d^*$ and $d$ play the roles of $a$, $b$, $\ell$, and $\eta$ and $\epsilon$ in Observation~\ref{outside-strip}, respectively.
Hence, assume that $c$ is in $V$ (and above $\ell$).

We show that $d$ is stabbed by one of the points $(-2,0)$ and $(0,2)$.
Assume to the contrary that $d$ avoids both $(-2,0)$ and $(0,2)$.
Let $d'$ be the disk tangent to $d^*$ and whose boundary passes through $(-2,0)$ and $(0,2)$. Let $c'=(-x,x)$ be the center of $d'$. Then, we have
\begin{enumerate}[leftmargin=1.5\parindent]
		\item $-x < -2$, $x \ge 0$;
		\item $(-x - 0)^2 + (x - 2)^2 = r(d')^2$;
		\item $(-x - 0)^2 + (x - 0)^2 = (r(d') + 1)^2$.
\end{enumerate}
Solving these equations, we get that $x = \frac{3}{2} + \frac{3}{2\sqrt{2}}$ and $r(d') = \frac{1}{2} + \frac{3}{\sqrt{2}}$.
We show below that $r(d) \ge r(d')$ and apply Observation~\ref{radius} to conclude that $d$ is necessarily stabbed by $(-2,0)$ or $(0,2)$ --- contradiction. To apply Observation~\ref{radius}, we let $(-2,0)$, $(0,2)$, $\ell$, $d^*$ and $d'$ play the roles of $a$, $b$, $\ell$, $\eta$ and $\delta$, respectively.

It remains to show that $r(d) \ge r(d')$. Consider the disk $d''$, the smallest disk with center in $V$ (to the left of the line $x=-2$) that avoids both $(-2,0)$ and $(0,2)$ and is tangent to $\ell_1$. Clearly, $r(d) \ge r(d'')$, but as the calculation below shows $r(d'') = 2+\sqrt{2} \ge r(d')$, so $r(d) \ge r(d')$.

Indeed, let $c'' = (a, b)$ be the center of $d''$. Observe that the boundary of $d''$ must pass through $(-2,,0)$, since, otherwise, we can shrink $d''$, by moving $c''$ slightly to the right and down. Moreover, $c''$ must lie on $V$'s boundary, i.e., on the line $y = -x -2$, since otherwise we can shrink $d''$ by moving $c''$ slightly to the left and down. Thus, we have
\begin{enumerate}[leftmargin=1.5\parindent]
		\item $-a < -2$, $b \ge 0$;
		\item $b = -a - 2$;
		\item $(a + 2)^2 + (b - 0)^2 = r(d'')^2$;
		\item $b + 1 = r(d'')$.
\end{enumerate}
Solving these equations, we get that $a = -3 - \sqrt{2}$, $b = 1 + \sqrt{2}$, and $r(d'') = 2 + \sqrt{2}$.
This completes the proof under the assumption that $c$ is in the second quadrant.

Assume now that $c$ lies in the third quadrant.
Let $p=(1,0)$ and $q=(0,1)$ and let $\ell_p$ be the vertical line through $p$ and let $\ell_q = \ell_1'$ be the horizontal line through $q$.
If $d$ is intersected by $\ell_q$ or by $\ell_p$, then, by either reflecting the scene with respect to the $x$-axis or by rotating it clockwise about the origin, we get the previous case where $c$ lies in the second quadrant.
We thus assume that neither $\ell_q$ nor $\ell_p$ intersect $d$.
We show below that $d$ is stabbed by $(0,0)$.

Since $d$ intersects the disk $d_3$, it intersects $\ell_3$. Consider the line through $c$ that is perpendicular to $\ell_3$, and let $y$ be its intersection point with $\ell_3$. Then $r(d) \ge |cy|$. Now, since both $\ell_q$ and $\ell_p$ do not intersect $d$, we have that the segment $\overline{cy}$ intersects $\overline{c^*p}$ or $\overline{c^*q}$. Assume, w.l.o.g., that $\overline{cy}$ intersects $\overline{c^*p}$ and denote the intersection point by $p'$. Clearly, $|cp'| \ge |cc^*|$ (as $\overline{cp'}$ is the largest edge in the triangle $\Delta cc^*p'$), and therefore $r(d) > |cp'| \ge |cc^*|$, so $d$ is stabbed by $c^*=(0,0)$.

Finally, the cases where $c$ lies in the first or in the fourth quadrant are analogous to the two cases considered above.

The analysis of the algorithm's running time is similar to that of the stabbing by four points algorithm, which can be found in Section~\ref{time-section}.

\section{Stabbing by Four points --- The Algorithm}

Find the disk $d^*$, add it to $D$, and compute the set $D^{-}$. Let $\dmin$ denote the smallest disk in $D^{-}$, and let $\rmin$ denote its radius.
We show how to compute a set $S$ of four points that stabs $D$.

\subsection{Outline and Intuition}
The algorithm distinguishes between three main cases, depending on the value of $\rmin$. In the first two cases we use the {\color{mycolor}base setting}, while in the third case we use both settings.

In the first case all the disks in $D^{-}$ are large, specially their radius is at least 4. (Recall that the radius of $d^*$ is 1). This is the easiest case, since it is relatively easy to show that the constraint that every disk in $D^{-}$ must intersect each of the disks $d_1, d_2, d_3$, implies that every disk in $D^{-}$ is stabbed by one of the points $(-4,1), (4,1), (0,-3)$.

In the second case there are small disks in $D^{-}$, i.e., disks with radius smaller than 2. Among the small disks, let $d$ be the disk whose boundary is the farthest from $c^*$ and let $c$ be its center. Then, we know that all the disks in $D^{-}$ intersect the disk of radius $\delta(d)$ and center $c^*$. We use this fact, together with information on the (acute) angle $\alpha$ between the segment $cc^*$ and the $x$-axis to determine the set $S$. Specifically, assume without loss of generality that $x(c) \ge 0$. If $\alpha$ is small, then $c$ (which is either above or below the $x$-axis) is relatively close to the $x$-axis, and we find a set of four points that stabs $D$. If, on the other hand, $\alpha$ is large, then we have two symmetric sub-cases, depending on whether $c$ is above or below the $x$-axis.

In the third case there are no small disks in $D^{-}$, but there are disks of medium size, i.e., of radius between 2 and 4. This is the most difficult case, and we split it into four sub-cases. If there exists a disk $d \in D^{-}$ whose boundary is relatively far from $c^*$, then, if there exists such a disk which is not too large, then we are in sub-case 1, and otherwise we are in sub-case 2, provided there exists such a disk which is not huge. If we did not land in one of theses two sub-cases, we proceed to check whether there exists a not too large disk $d \in D^{-}$ whose boundary is close but not extremely close to $c^*$. If there exists such a disk, then we are in sub-case 3 and otherwise we are in sub-case 4. Our partition into sub-cases is not unique, in the sense that some of the values that we specify to distinguish between the cases are somewhat arbitrary, but we strongly believe that some similar partition into sub-cases is necessary.

The proof of correctness is essentially based on a small set of elementary geometric observations (see Section~\ref{sec:geom_observations}), and it is similar in spirit to the proof of the stabbing by five points algorithm. The ingenuity of the proof is reflected in the creative ways in which the settings to which these observations can be applied are defined.

\subsection{The Algorithm}

We consider three cases, depending on the value of $\rmin$.


\begin{itemize}
	\item $\rmin \geq 4$: Assuming the {\color{mycolor}base setting}, set $S=\{(0,0), (-4,1), (4,1), (0,-3)\}$.

	\item $\rmin \le 2$: Assuming the {\color{mycolor}base setting}, let $d \in D^{-}_{ \leq 2}$ be the disk with maximum $\delta(d)$ and let $c$ be its center. We may assume that $x(c) \ge 0$, since this can be guaranteed by a suitable reflection and relabeling.
	Let $\alpha$ be the convex angle between the segment $cc^*$ and the $x$-axis; see Figure~\ref{fig:alpha}.
	Depending on $\alpha$ and the $y$-coordinate of $c$, do

	\begin{enumerate}[leftmargin=1.5\parindent]
		\item $\alpha\le 17^{\circ}$: Set $S= \{ (-0.5,0), (0, -1.7),(0,1.7),(1.5,0) \}$.
		\item $\alpha > 17^{\circ}$ and $y(c) > 0$: Set $S= \left\{\left(-0.5,0\right),(0.5, -2.5),(-0.5,1.83),\left(\frac{1}{2} + \frac{2 \sqrt{6} }{5} ,\frac{1}{5}\right) \right\}$.
		\item  $\alpha > 17^{\circ}$ and $y(c) < 0$: Set $S = \left\{ (-0.5,0),(0.5, 2.5),(-0.5,-1.83),\left(\frac{1}{2} + \frac{2 \sqrt{6} }{5} ,-\frac{1}{5}\right) \right\}$.
	\end{enumerate}

	\item $2 < \rmin < 4$: This case involves four sub-cases. In the first three sub-cases,
	we assume the {\color{mycolor} alternative setting}, while in the fourth we assume the {\color{mycolor}base setting}. Thus, in the first three sub-cases we pick a specific disk $d'$ and assume that its center is on the positive $x$-axis. (This can be guaranteed by a suitable rotation.)

	\begin{enumerate}[leftmargin=1.5\parindent]
		\item If there exists a disk $d \in D^{-}_{ \leq 5}$ with $\delta(d) \geq 0.5$, then let $d'$ be such a disk with maximum $\delta(d')$ and set $S=\{(0,0), (2,0), (0.4,2), (0.4,-2)\}$.
		\item Else, if there exists a disk $d \in D^{-}_{ \leq 20}$ with $\delta(d) \geq 0.5$, then let $d'$ be such a disk with maximum $\delta(d')$ and set $S=\{(0,0), (2,0), (-0.15,2.7), (-0.15,-2.7)\}$.

		\item Else, if there exists a disk $d \in D^{-}_{ \leq 5}$ with $\delta(d) \geq 0.11$, then let $d'$ be such a disk with maximum $\delta(d')$ and set $S=\{(0,0), (2,0), (-0.15,1.75), (-0.15,-1.75)\}$.

		\item Otherwise, set $S=\{(0,0), (2.5,1), (-2.5,1), (0,-1.52)\}$, assuming the {\color{mycolor}base setting}.
	\end{enumerate}

\end{itemize}

This is the end of our algorithm and the computation of $S$. We claim that $S$ stabs $D$. In Section~\ref{time-section} we establish the linear bound on the algorithm's running time, and in Section~\ref{correctness-section} we prove its correctness.

\begin{figure}[hbt]
	\centering
	\includegraphics[scale=0.8]{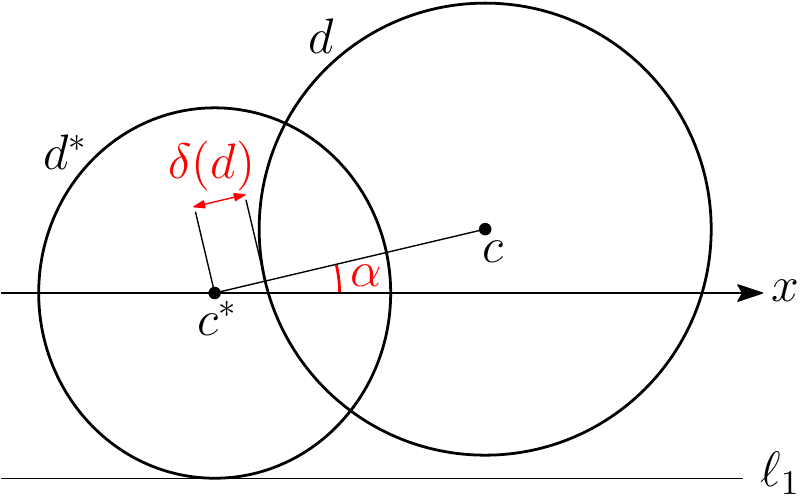}
	\caption{Illustration of $\delta(d)$ and the angle $\alpha$.}
	\label{fig:alpha}
\end{figure}

\subsection{Running Time}
\label{time-section}
All transformations, i.e., translations, rotations, and reflections, can be performed in $O(|D|)$ time.
The smallest disk $d^*$ that intersects all disks in $D$, can be found in $O(|D|)$ time, as it is an LP-type problem; see~\cite{CM96,MSW96}.
The smallest disk $\dmin$ that does not contain $c^*$, as well as the disks $d$ and $d'$ with maximum $\delta(d)$ and $\delta(d')$, can be found in $O(|D|)$ time. Given $d^*$, $\dmin$, $d$, and $d'$, the set $S$ can be found in constant time. Therefore, the total running time of the algorithm is $O(|D|)$.

\section{Correctness}
\label{correctness-section}
In this section we prove the correctness of the algorithm, that is, we prove that the set $S$ computed by the algorithm stabs $D$.
We prove this for the three cases ($\rmin \ge 4$, $\rmin \leq 2$, and $2 < \rmin < 4$) separately in Subsections~\ref{secDminBig}, \ref{secDminLessThanTwo}, and \ref{secDminTwoToFour}. Recall the three disks, $d_1$, $d_2$, and $d_3$, that are tangent to $d^*$. Any disk $d\in D$ that we consider in our proofs should intersect these three disks. Finally, the elementary geometric observations to which we refer in our proofs can be found in Section~\ref{sec:geom_observations}.

%
%
\subsection{Proof of Case $\rmin\ge 4$}
\label{secDminBig}
Recall that we are in the {\color{mycolor}base setting}, and $S=\{c^* =(0,0), (-4,1), (4,1), (0,-3)\}$ as depicted in Figure~\ref{fig:Case1}. We prove that each disk in $D$ contains at least one point of $S$.

\begin{figure}[hbt]

	\centering
	\includegraphics[scale=0.5]{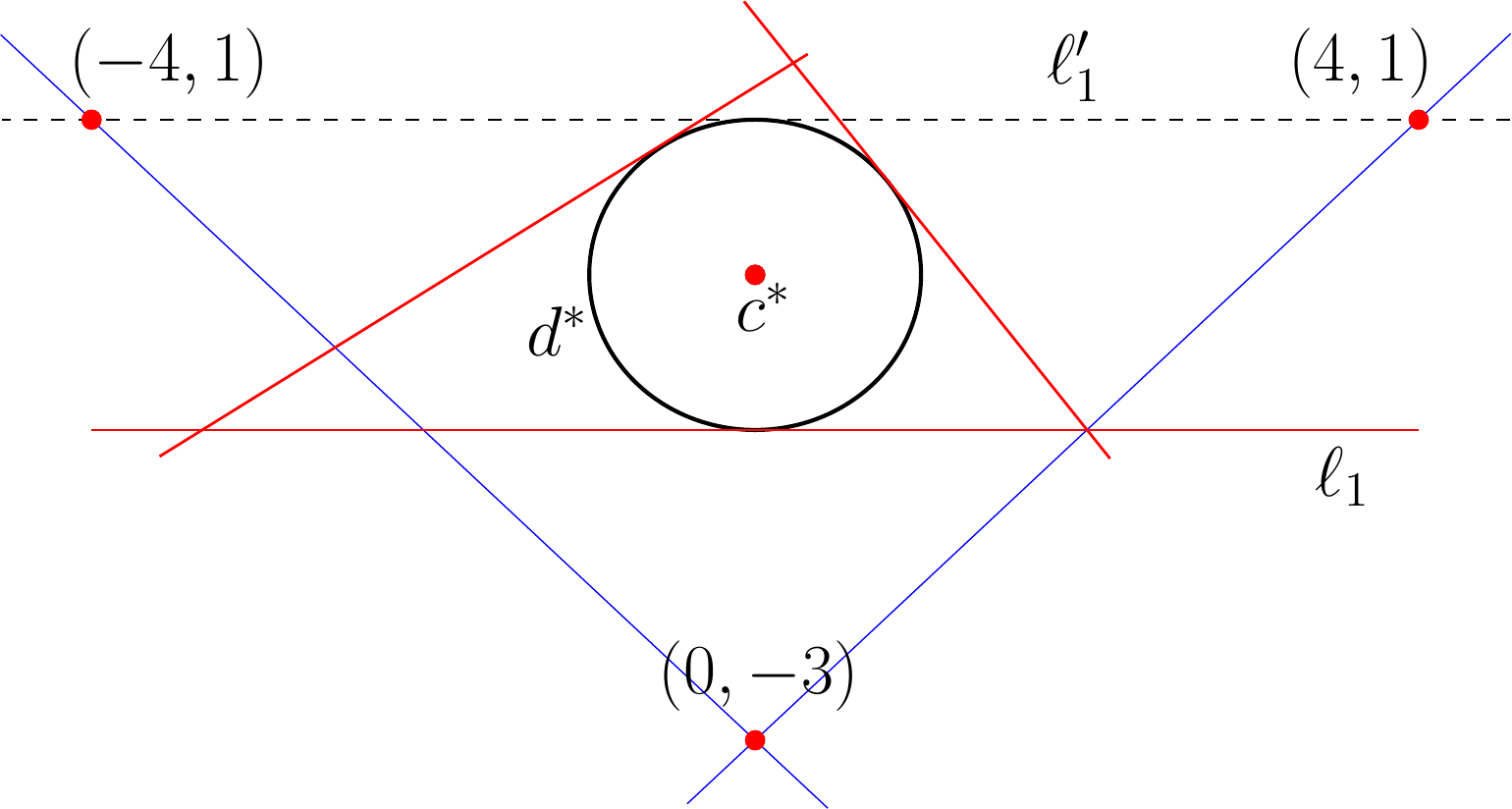}
	\caption{Four stabbing points for the case where $\rmin \geq 4$.}
	\label{fig:Case1}
\end{figure}

By the definition of $D^-$, any disk in $D\setminus D^-$ contains $c^*=(0,0)$. Now consider any disk $d\in D^-$ and let $c$ denote its center.
Since $\rmin \ge 4$, we have $r(d)\ge 4$.


\begin{enumerate}
	\item If $c$ lies below or on the line through $(4,1)$ and $(0,-3)$, then by Observations~\ref{outside-strip} and \ref{radius}, $d$ contains $(4,1)$ or $(0,-3)$.
	To see this, let $l$ be the line through $a=(4,1)$ and $b=(0,-3)$, and $V$ be the strip defined by the lines perpendicular to $l$ through $a$ and $b$, respectively.
	The role of the disk $\eta$ in these observations is filled by the disk $d^*$, which is contained in the lower part of $V$.
	If $c$ (the center of $d$) lies outside $V$, then, by Observation~\ref{outside-strip}, $d$ contains $(4,1)$ or $(0,-3)$.
	Otherwise ($c$ lies inside $V$), consider the disk $f$ of radius 4 centered at $(4,-3)$ (which is in the upper part of $V$), and notice that its boundary passes through $a$ and $b$ and is tangent to $d^*$. Now, by Observation~\ref{radius}, where the role of the disk $\delta$ is filled by the disk $f$, $d$ contains $(4,1)$ or $(0,-3)$,
	since $r(d) \ge r(f) = 4$.
	\item If $c$ lies below or on the line through $(-4,1)$ and $(0,-3)$, then by an argument similar to the previous case, $d$ contains $(-4,1)$ or $(0,-3)$.
	\item If $c$ lies on or above the line through $(-4,1)$ and $(4,1)$ (i.e., the line $\ell'_1$) and $x(c) \notin [-4,4]$, then, by Observation~\ref{outside-strip}, $d$ contains $(-4, 1)$ or $(4,1)$, where the role of the disk $\eta$ is filled by the disk $d^*$.
	\item If $c$ lies on or above $\ell'_1$ and $x(c) \in [-4,4]$, then we distinguish between two simple cases.
	If $y(c) \le 3$, then it is easy to verify that $d$ contains at least one of the points $(-4,1)$ and $(4,1)$ (since $c$ lies in the union of the disks of radius 4 around these points, as $d \in D^{-}$).
	If $y(c) > 3$, then, by Observation~\ref{b}, $d$ contains $(-4,1)$ or $(4,1)$. To see this, assume w.l.o.g. that $x(c) \in [0,4]$, and let $c^*=(0,0)$, $(4,1)$, and $\ell_1$ play the roles of $a$, $b$, and $l$ in Observation~\ref{b}, respectively. Then $V$ is the vertical strip whose sides pass through $a$ and $b$, respectively. Moreover, let $d'$ be the disk with center in $V$ whose boundary passes through $c^*$ and $(4,1)$ and is tangent to $\ell_1$ (see Figure~\ref{fig:biggerThanFour4}); it is easy to verify that $d'$'s center lies below the horizontal line $y=2$. Then, $d'$ and $d$ play the roles of $\delta$ and $\epsilon$ in Observation~\ref{b}, and according to the observation $d$ contains $c^*$ or $(4,1)$. But since we are assuming that $c^* \notin d$, we conclude that $d$ contains $(4,1)$.
	\item  If $c$ is in the triangle with corners $(-4,1)$, $(4,1)$, and $(0,-3)$, then since $\rmin \ge 4$ it is stabbed by at least one of the points in $S$.
\end{enumerate}

\begin{figure}[hbt]
	\centering
	\includegraphics[scale=0.4]{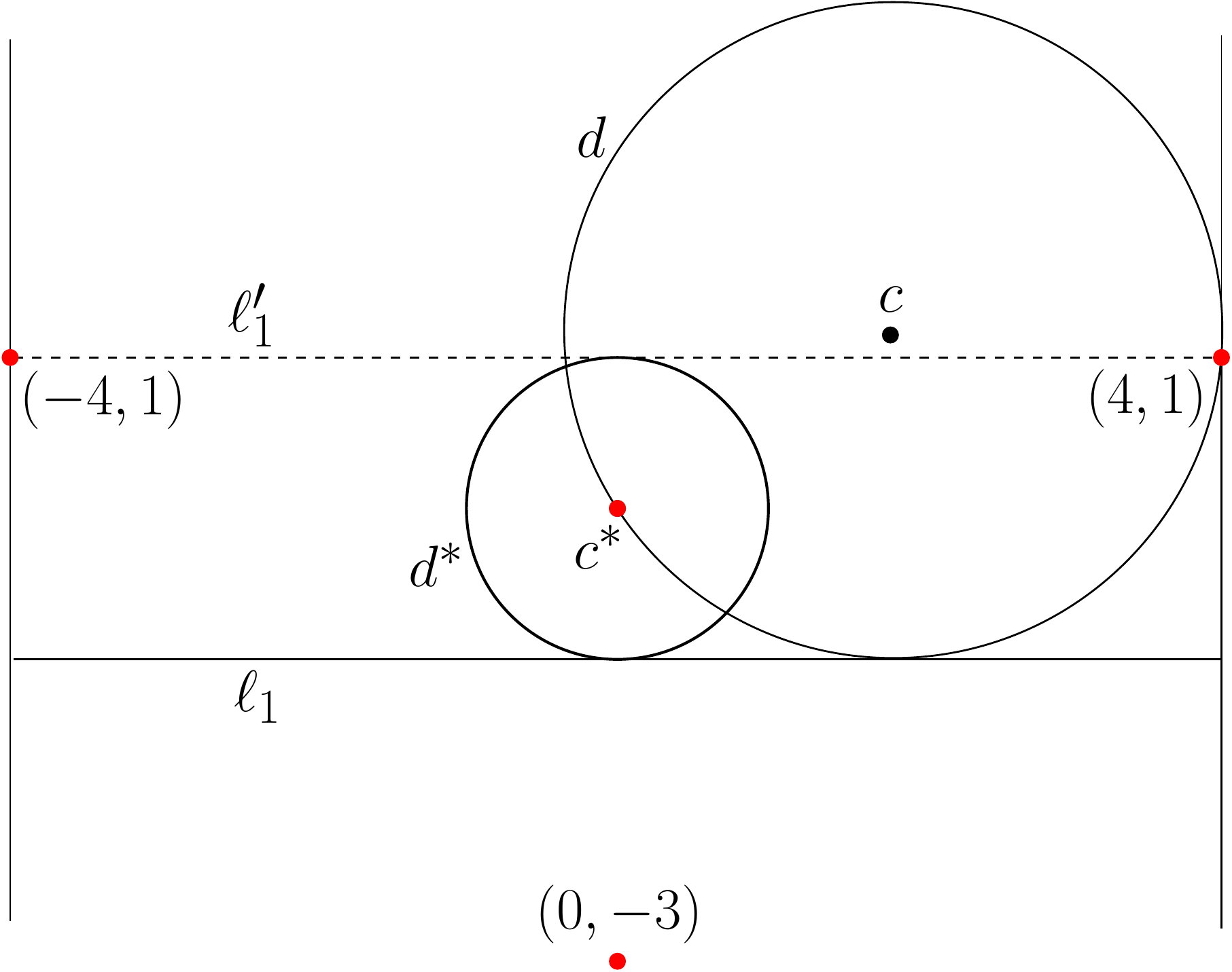}
	\caption{Illustration of the case where $c$ lies above $\ell'_1$ and $c(x) \in [-4,4]$ (more precisely, $x(c) \in [0.4]$ and $y(c) > 3$).}
\old{
	\caption{Illustration of the case where $c$ lies on or above $\ell'_1$ and $x(c)\in [0,4]$.}
}
	\label{fig:biggerThanFour4}
\end{figure}

%
\subsection{Proof of Case $\rmin\le 2$}
\label{secDminLessThanTwo}
Recall that we are in the {\color{mycolor}base setting}, that
$d$ is the disk in $D^{-}_{ \leq 2}$ with maximum $\delta(d)$, that
$d$'s center, $c$, has positive $x$-coordinate, and that $\alpha$ is the convex angle between the segment $cc^*$ and the $x$-axis (see
Figure~\ref{fig:alpha}).
Depending on $\alpha$ and $y(c)$, our algorithm sets $S=\{p^l,q^-,q^+,p^r\}$ as follows:

\begin{enumerate}[leftmargin=1.5\parindent]
	\item $\alpha\le 17^{\circ}$: \[p^l=(-0.5,0), \ q^-=(0, -1.7), \ q^+=(0,1.7),\ p^r=(1.5,0).\]
	\item $\alpha > 17^{\circ}$ and $y(c) > 0$:
	\[p^l=(-0.5,0), \ q^-=(0.5, -2.5), \ q^+=(-0.5,1.83), \ p^r=\left(\frac{1}{2} + \frac{2 \sqrt{6} }{5} ,\frac{1}{5}\right).\]
	\item  $\alpha > 17^{\circ}$ and $y(c)< 0$:
	\[p^l=(-0.5,0), \ q^+=(0.5, 2.5), \    q^-=(-0.5,-1.83), \ p^r=\left(\frac{1}{2} + \frac{2 \sqrt{6} }{5} ,-\frac{1}{5}\right).\]
\end{enumerate}

We prove for each of the above cases that $S$ stabs $D$.

Any disk $e \in D$ intersects the line $\ell_1$.
To see this, if $e$  is centered at a point in $h_1$,
where $h_1$ is the half plane defined by $\ell_1$ that does not contain $c^*$, then since $e$ must intersect $d^*$ it intersects $\ell_1$.
If $e$  is centered at a point not in $h_1$, then since $e$ must intersect $d_1$ it intersects $\ell_1$.
Moreover, if $e$'s center is with positive $y$-coordinate, then it also intersects $\ell'_1$.
In Section~\ref{sec:blabla} we prove the following claim.
\begin{claim}\label{cl:blabla}
Any disk in $D$ that is centered at a point of negative $y$-coordinate and is not stabbed by $q^-$ nor $p^l$
intersects $\ell'_1$.
\end{claim}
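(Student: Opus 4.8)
The plan is to argue by contradiction, following the template of Case~2 of the five‑point proof. Suppose some $e\in D$ is centered at a point $c_e$ with $y(c_e)<0$, is stabbed neither by $p^l$ nor by $q^-$, and yet fails to meet $\ell_1'$; I will derive a contradiction. First I would collect the hard constraints already available on $e$: since $e\in D$ it meets $\ell_1$ (just established above the claim), and since its center lies below the $x$-axis while its top lies strictly below $\ell_1'=\{y=1\}$, the disk is confined to a horizontal band that reaches down to $y=-1$ but not up to $y=1$. The two non‑stabbing hypotheses, that $e$ avoids $p^l=(-0.5,0)$ and avoids $q^-$ (a point lying below $\ell_1$), supply two additional separating inequalities.

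The core step is to show that these constraints force $r(e)$ to be so large that $e$ must in fact contain $p^l$ or $q^-$. Mirroring the role of the extremal disk $d''$ in the five‑point proof, I would introduce the \emph{threading disk}: the smallest disk with center below the $x$-axis that avoids both $p^l$ and $q^-$ and whose topmost point lies on $\ell_1'$. Its center and radius $\rho$ are obtained from a small system of equations, exactly as for $d'$ and $d''$ earlier. I would then invoke Observation~\ref{radius} (or Observation~\ref{outside-strip} when $c_e$ falls outside the relevant strip), with $a=p^l$, $b=q^-$, the line through them as $\ell$, and the threading disk in the role of the reference disk, so that any admissible $e$ of radius at least $\rho$ is forced to contain $p^l$ or $q^-$ --- contradicting the hypothesis. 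Equivalently, the threading disk already reaches $\ell_1'$, so the assumption that $e$ misses $\ell_1'$ while avoiding both guard points is untenable.

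The main obstacle is securing the radius lower bound $r(e)\ge\rho$ without circularity, and the delicate point is that meeting $\ell_1$ alone does not suffice: a tiny disk placed near the tangency point $(0,-1)$ would meet $\ell_1$, avoid both $p^l$ and $q^-$, and miss $\ell_1'$. To exclude such degenerate disks I would use that $e$, like every disk of $D$, must intersect the extremal disk $d$ itself. In the present case the hypothesis on $\alpha$ (and on $y(c)$ in the two large‑angle sub‑cases) pins the center of $d$ away from the tangency point, so a disk that threads between $p^l$ and $q^-$ with top below $\ell_1'$ simply cannot also reach $d$ unless its radius is at least $\rho$; this is what converts ``$e\in D$'' into the quantitative bound needed for the observation. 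Finally, since $q^-$ --- and hence the threading disk and the threshold $\rho$ --- differ among the three sub‑cases, I expect to run the threading‑disk computation separately for each choice of $q^-$, checking in each that the extremal disk attains $\ell_1'$ and that the corresponding admissible $e$ is at least as large.
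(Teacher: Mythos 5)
Your plan has a genuine gap at its crux: the radius lower bound $r(e)\ge\rho$ simply does not follow from the constraints you allow yourself. You correctly notice that meeting $\ell_1$ is not enough, but your proposed fix --- using that $e$ must intersect the extremal disk $d$ --- fails. Concretely, take the sub-case $\alpha\le 17^{\circ}$, so $p^l=(-0.5,0)$ and $q^-=(0,-1.7)$, and suppose $d$ has radius $2$ and center at distance $2.9$ from the origin at angle $10^{\circ}$, i.e.\ $c\approx(2.86,0.50)$ (so $\delta(d)=0.9$, a legitimate configuration). The disk $e_0$ of radius $\frac{1}{2}$ centered at $(1,-1)$ satisfies every condition you use: its center has negative $y$-coordinate, it meets $\ell_1$, it meets $d^*$ (distance $\sqrt{2}<1.5$), it meets $d$ (distance $\approx 2.39<2.5$), it avoids both $p^l$ and $q^-$, and its topmost point is at $y=-\frac{1}{2}$, far below $\ell_1'$. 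Meanwhile any ``threading disk'' with center below the $x$-axis whose top reaches $\ell_1'$ has radius greater than $1$, so your threshold $\rho>1$ while $r(e_0)=\frac{1}{2}$. The reason $e_0$ does not contradict the claim itself is that $e_0$ cannot lie in $D$: it never reaches $\ell_2$, hence cannot intersect the tangent disk $d_2$. That is precisely the ingredient your proof never invokes and on which the paper's proof is built: the contradiction comes from the requirement that $e$ intersect $d_2$ and $d_3$, i.e.\ cross $\ell_2$ and $\ell_3$, which the paper replaces by conservative tangents to $d^*$ (the vertical lines through $(\pm 1,0)$, or tangents at $30^{\circ}$ and $67.5^{\circ}$, chosen according to the sign of $x(c(e))$ and the angles of the tangent triangle $\Delta$).

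There is also a structural mismatch: a single extremal disk per choice of $q^-$ cannot carry the whole proof, because the paper needs two regimes with different logic. When $c(e)$ lies below $q^-$, the paper computes, for each choice of $q^-$ and each conservative tangent line $t$, the extremal disk tangent to $t$ and to $d^*$ with $q^-$ on its boundary, and verifies that its top already exceeds $y=1$; this is the regime closest to your plan, except the quantitative driver is ``must cross $t$,'' not ``must meet $d$.'' When $c(e)$ has negative $y$-coordinate but lies \emph{above} $q^-$, no useful radius lower bound exists (such disks can be small), and no argument forcing $e$ up to $\ell_1'$ can work; instead the paper shows that a disk avoiding $p^l$ and $q^-$ whose contact with $\ell_3$ is below $\ell_1'$ cannot also reach $\ell_2$ (with a case analysis $\beta\le 45^{\circ}$, $45^{\circ}<\beta\le 75^{\circ}$, $75^{\circ}<\beta<90^{\circ}$ on the angle between $\ell_1$ and $\ell_3$), hence cannot intersect $d_2$ and cannot belong to $D$. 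Your proposal has no counterpart for this second regime, and Observation~\ref{radius} with $a=p^l$, $b=q^-$ is not set up to handle it: the disks in question need not be centered on the correct side of the line $p^lq^-$, and you never specify which disk plays the role of $\eta$.
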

Therefore, we may restrict our attention only to disks in $D$ that intersect both $\ell_1$ and $\ell'_1$.
Consider the vertical strip $V$ whose left and right boundaries are the vertical lines through $p^l$ and $p^r$, respectively.

\begin{claim}\label{cl:inStripV}
Any disk $e\in D$, with center in $V$, contains $p^l$ or $p^r$.
\end{claim}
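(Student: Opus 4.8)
The plan is to reduce the claim to an elementary distance computation resting on one structural fact about every disk $e$ we must consider: by the reduction made just before the claim, $e$ meets both horizontal lines $\ell_1\colon y=-1$ and $\ell_1'\colon y=1$. Writing $r=r(e)$ and $(x_c,y_c)$ for the center of $e$, the conditions that $e$ meets $y=1$ and $y=-1$ read $|y_c-1|\le r$ and $|y_c+1|\le r$, which together force $r\ge 1$ and, crucially, $|y_c|\le r-1$. Thus the center of $e$ is pinned within vertical distance $r-1$ of the $x$-axis, and the larger $r$ is, the more slack this leaves. This is exactly what makes the points $p^l$ and $p^r$, which lie on or just off the $x$-axis, reachable.

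First I would split the strip $V$ at its midline $x=\tfrac12$ and show that the left half is covered by $p^l$ and the right half by $p^r$. For the left half, assume $-\tfrac12\le x_c\le\tfrac12$ (the left boundary of $V$ passes through $p^l=(-\tfrac12,0)$). Then $0\le x_c+\tfrac12\le 1$, so the squared distance from the center of $e$ to $p^l$ is $(x_c+\tfrac12)^2+y_c^2\le 1+(r-1)^2=r^2-2(r-1)\le r^2$, the last step using $r\ge 1$. Hence $p^l\in e$. The point of this computation is that it is exactly tight when $r=1$ (which forces $y_c=0$) and $x_c=\tfrac12$, which is why $\tfrac12$ is the correct splitting abscissa.

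Next I would treat the right half, $\tfrac12\le x_c\le x(p^r)$, which is the only step that differs between the cases. In Case~1 we have $p^r=(\tfrac32,0)$, and the reflection $x\mapsto 1-x$ about the midline makes the computation identical to the one for $p^l$, giving $p^r\in e$. In Cases~2 and~3 the point $p^r=\bigl(\tfrac12+\tfrac{2\sqrt6}{5},\pm\tfrac15\bigr)$ lies slightly off the axis, so I would bound the squared distance from the center of $e$ to $p^r$ by $(x(p^r)-x_c)^2+(|y_c|+\tfrac15)^2\le\bigl(\tfrac{2\sqrt6}{5}\bigr)^2+\bigl(r-1+\tfrac15\bigr)^2=\tfrac{24}{25}+\bigl(r-\tfrac45\bigr)^2$, using $x(p^r)-x_c\le x(p^r)-\tfrac12=\tfrac{2\sqrt6}{5}$ and $|y_c|\le r-1$. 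A one-line simplification shows this equals $r^2-\tfrac{8}{5}(r-1)\le r^2$ for $r\ge 1$, so again $p^r\in e$. The coordinates of $p^r$ are reverse-engineered precisely so that this bound is tight at $r=1$: then $y_c=0$, the unit disk centered at $(\tfrac12,0)$ carries both $p^l$ and $p^r$ on its boundary, and the two half-strip arguments dovetail exactly at $x_c=\tfrac12$.

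Since the computations are short, the real obstacle is conceptual rather than technical: recognizing that the two-parallel-lines condition is equivalent to the pair $r\ge 1$ and $|y_c|\le r-1$, that this is exactly strong enough, and then locating the correct split abscissa $x=\tfrac12$ together with the off-axis placement of $p^r$ that makes the worst case (namely $r=1$ with the center on the axis) tight. Each half-strip bound can equivalently be read as an instance of the radius-monotonicity principle of Observation~\ref{radius} (with $\ell$ the $x$-axis, $\eta$ the halfplane below $\ell_1$, and $\delta$ the unit disk realizing the tight case), which is the uniform language I would ultimately use to phrase both Case~1 and Cases~2--3.
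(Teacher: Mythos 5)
Your proof is correct, but it takes a more self-contained route than the paper's. The paper proves this claim by invoking its Observation~\ref{b}: it splits according to whether the center of $e$ lies above or below the $x$-axis, lets $\ell_1$ (resp.\ $\ell'_1$) play the role of $\ell$, and lets the unit disk $d'$ centered at $(\tfrac12,0)$ --- which is tangent to $\ell_1$ and has $p^l$ and $p^r$ on its boundary --- play the role of $\delta$ (for $\alpha\le 17^{\circ}$ the paper actually writes $d^*$ in the role of $\delta$, which appears to be a slip, since $d^*$ does not pass through $p^l$ or $p^r$; the intended disk is this same $d'$). You instead exploit the full strength of the reduction stated before the claim (that $e$ meets both $\ell_1$ and $\ell'_1$) to extract the clean constraint $r\ge 1$, $|y_c|\le r-1$, split the strip at $x=\tfrac12$ rather than by the sign of $y_c$, and verify containment by two explicit distance bounds; in effect you reprove exactly the special case of Observation~\ref{b} that is needed, and your extremal configuration (the unit disk centered at $(\tfrac12,0)$ through both $p^l$ and $p^r$) is precisely the paper's $d'$. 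What the paper's route buys is uniformity, since Observation~\ref{b} is a black box reused elsewhere; what yours buys is an elementary, fully checkable argument that also explains why the coordinates of $p^r$ are what they are. One caveat: your closing remark that each half-strip bound can ``equivalently be read as an instance of Observation~\ref{radius}'' is not literally correct --- that observation requires $a$ and $b$ to lie on the line $\ell$ and requires $\eta$ to be a disk, whereas in Cases~2--3 the point $p^r$ lies off the $x$-axis and you propose a halfplane for $\eta$ --- but this remark is inessential, as your direct computation stands on its own.
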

\begin{proof}
We distinguish between two cases w.r.t. $\alpha$ and apply Observation~\ref{b} in these two cases.
In both cases, if the center of $e$ is above the $x$-axis then $\ell_1$ plays the role of $\ell$ (in Observation~\ref{b}),
otherwise $\ell'_1$ plays the role of $\ell$.

If $\alpha \leq 17^{\circ}$, then $x(p^r) =\frac{3}{2}$ and the claim follows from Observation~\ref{b} where
$d^*$,      $e$,       $p^l$, and $p^r$ play the roles of
$\delta$, $\epsilon$, $a$, and   $b$, respectively.

If $\alpha > 17^{\circ}$ then $x(p^r)= \frac{1}{2}+ \frac{2 \sqrt{6}}{5}$.
Let $d'$ be the disk (not necessarily in $D$) of radius $1$ that is centered at $(0.5,0)$.
Observe that the points $\left(\frac{1}{2} + \frac{2 \sqrt{6} }{5} ,\frac{1}{5}\right)$ and
$\left(\frac{1}{2} + \frac{2 \sqrt{6} }{5} , - \frac{1}{5}\right)$ lie on the boundary of $d'$.
The claim follows from Observation~\ref{b} where  $d'$,     $e$,        $p^l$, and $p^r$ play the roles of
$\delta$, $\epsilon$, $a$, and   $b$, respectively.
\end{proof}

Let $Q_1$, $Q_2$, $Q_3$, and $Q_4$ be the four quadrants of the plane in counterclockwise order around $c^*$
such that $Q_1$ contains all points with positive $x$- and $y$-coordinates.
We define four disks $d^{r+}$, $d^{l+}$, $d^{l-}$, and $d^{r-}$ as follows
(see Figures~\ref{fig:SmallerThan2Case1} and \ref{fig:SmallerThan2Case2}):

\begin{itemize}
	\item $d^{r+}$ is the disk, with center in $Q_1$,
	with $x$- and $y$-coordinates more than 1,
	that is tangent to $\ell_1$, and has $p^r$ and $q^+$ on its boundary.

	\item $d^{l+}$ is the disk, with center in $Q_2$, that is tangent to  $\ell_1$, and has $p^l$ and $q^+$ on its boundary.

	\item $d^{l-}$ is the disk, with center in $Q_3$, that is tangent to  $\ell'_1$, and has $p^l$ and $q^-$ on its boundary.

	\item $d^{r-}$ is the disk, with center in $Q_4$
	with $x$-coordinate more than 1 and with $y$-coordinate less than $-1$, that is tangent to  $\ell'_1$, and has $p^r$ and $q^+$ on its boundary.
\end{itemize}

In Sections~\ref{alpha-small},~\ref{alpha-big-y-positive} and~\ref{alpha-big-y-negative} we prove the following two lemmas.

\begin{lemma}\label{lem:qLieAboveTangent}
 Point $q^+$ lies above the mutual tangent of $d^*$ and $d$.
\end{lemma}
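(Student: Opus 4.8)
The statement reduces to a position computation for a single line once the two disks are fixed. Write $d^*$ as the unit disk centred at $c^*=(0,0)$, and let $c$ and $\rho=r(d)$ be the centre and radius of $d$, so that $|c^*c| = \rho + \delta(d)$ with $0 < \delta(d) \le 1$ and $\rho \le 2$. Since $d$ and $d^*$ overlap, they have exactly two common (outer) tangent lines; the plan is first to single out the one meant by ``the mutual tangent'' (the upper tangent, on the side of $d$ dictated by the base setting and read off from Figure~\ref{fig:SmallerThan2Case1}), and to write it in normal form $x\cos\psi + y\sin\psi = 1$ with $\sin\psi>0$ chosen so that $d^*$ lies in the half-plane $x\cos\psi+y\sin\psi\le 1$. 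With this normalization, ``$q^+$ lies above the mutual tangent'' is exactly the scalar inequality $x(q^+)\cos\psi + y(q^+)\sin\psi > 1$, which is what I would establish.

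The angle $\psi$ is pinned down by tangency to $d$. Writing $c = |c^*c|(\cos\beta,\sin\beta)$, the requirement that the line be tangent to $d$ on its outer side gives $x(c)\cos\psi + y(c)\sin\psi = 1-\rho$, that is,
\[
\cos(\psi-\beta) = \frac{1-\rho}{\rho+\delta(d)},
\]
and I would take the branch $\psi-\beta\in(0,\pi)$ that selects the upper tangent. The whole question thus becomes whether $x(q^+)\cos\psi + y(q^+)\sin\psi>1$ over all admissible triples $(\rho,\delta(d),\beta)$ in the current case. The essential observation is that these triples form a small compact region, so the inequality can be verified at its worst (boundary) configuration. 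The active constraints are $\rho\le 2$, $0<\delta(d)\le1$, the case hypothesis on $\alpha=|\beta|$ (at most $17^{\circ}$ in case~1, more than $17^{\circ}$ in cases~2 and~3, together with the sign of $y(c)$), and—crucially—the requirement that $d$ meet $\ell_1$, i.e.\ $|y(c)+1|\le\rho$. When $y(c)\ge 0$ this last constraint already forces $\rho\ge y(c)+1>1$, which confines $\beta$ and $\rho$ to a compact set and keeps $\psi$ bounded away from $0$ and $\pi$; I would then locate the extremal configuration (typically $\delta(d)=1$ together with the boundary value of $\alpha$, or $\rho=2$) and check the scalar inequality there. Cases~2 and~3 are handled identically with their own $q^+$, the orientation of the selected tangent flipping between the $y(c)>0$ and $y(c)<0$ sub-cases.

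The main obstacle is twofold. First, one must correctly identify which of the two common tangents is intended and fix the branch of $\psi$: the symmetric configurations (for instance $c$ on the $y$-axis) show that $q^+$ lies above one outer tangent and below the other, so the side-selection cannot be skipped. Second, the genuinely delicate regime is $y(c)<0$ with small $\rho$, where a naive estimate would permit a steeply tilted tangent whose height near $x(q^+)$ exceeds $y(q^+)$; there I would combine the angle relation $|y(c)| = (\rho+\delta(d))\sin\alpha$, the size bound $\delta(d)\le 1$, and the incidence $|y(c)+1|\le\rho$ to rule out the offending configurations, exactly as the surrounding case analysis does, before finishing with the elementary evaluation at the worst case.
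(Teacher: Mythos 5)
Your setup is correct and takes a genuinely different route from the paper's. You compute the upper common tangent directly, via $\cos(\psi-\beta)=\frac{1-\rho}{\rho+\delta(d)}$, and restate the lemma as the scalar inequality $x(q^+)\cos\psi+y(q^+)\sin\psi>1$. The paper (Claims~\ref{cl:q_plus_above1} and~\ref{cl:q_plus_above2}) never touches the mutual tangent at all: it draws the auxiliary line $t_{q^+}$ tangent to $d^*$ through $q^+$, notes that it suffices to show $d$ lies entirely below $t_{q^+}$, and proves that by showing that the region where $c$ would otherwise have to lie (the width-$2$ strip below $t_{q^+}$, the far side of the parabola bisecting $c^*$ and $t_{q^+}$, and the region coming from the angle hypothesis and the $\ell_1$-incidence) is empty.

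The genuine gap is that your proposal stops exactly where the proof should begin: no inequality is ever verified. Everything after the reduction is conditional (``I would then locate the extremal configuration \ldots and check''), yet the lemma is a purely quantitative statement about the specific data $q^+=(0,1.7)$ or $(-0.5,1.83)$, the threshold $17^{\circ}$, and the radius bound $2$, so an argument containing no checked inequality has not established it. Moreover, the deferred step is not routine as you describe it. First, the admissible set of $(\rho,\delta(d),\beta)$ is not a box --- the incidence $|y(c)+1|\le\rho$ couples all three parameters --- so ``verify at the boundary'' needs an argument, and your guessed extremal configurations ($\delta(d)=1$ with extreme $\alpha$, or $\rho=2$) are not the binding ones: in the very regime you call delicate, the binding configurations are those where the incidence is tight ($d$ tangent to $\ell_1$), a one-parameter family along which $\delta(d)$ is free. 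Second, the case-3 ``orientation flip'' you mention in passing is substantive, not cosmetic: read literally (upper tangent, $\sin\psi>0$), the claim is \emph{false} in case 3. For example, the unit disk $d$ centered at $1.5(\cos 80^{\circ},-\sin 80^{\circ})\approx(0.26,-1.48)$ satisfies every hypothesis there ($\alpha=80^{\circ}$, $y(c)<0$, $\rho=1\le 2$, $\delta(d)=0.5$, it meets $\ell_1$ and avoids $c^*$), but its upper mutual tangent with $d^*$ is $x\cos 10^{\circ}+y\sin 10^{\circ}=1$, and $q^+=(0.5,2.5)$ gives $0.5\cos 10^{\circ}+2.5\sin 10^{\circ}\approx 0.93<1$; the correct case-3 statement is the mirror of case 2's (about $q^-$ and the lower tangent), and that is what must actually be proved. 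For what it is worth, your coordinates do close case 1 in a few lines, which shows the route is viable: the incidence gives $1-\rho\le(\rho+\delta(d))\sin 17^{\circ}$ when $y(c)\le 0$ (and $1-\rho\le 0$ when $y(c)\ge 0$), hence $\theta=\psi-\beta\ge 73^{\circ}$, while $\rho\le 2$ and $\delta(d)>0$ give $\cos\theta>-\tfrac12$, hence $\theta<120^{\circ}$; with $|\beta|\le 17^{\circ}$ this puts $\psi\in[56^{\circ},137^{\circ}]$, where $1.7\sin\psi\ge 1.7\sin 137^{\circ}>1.15>1$. That missing half page of derivation is the proof, and it is absent from your submission.
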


\begin{lemma}\label{lem:intersect1}
  Disks $d^{l+}$ and $d^{l-}$ do not intersect $d$; and
  disks   $d^{r+}$ and $d^{r-}$ do not intersect $d^*$.
\end{lemma}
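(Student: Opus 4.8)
The plan is to prove Lemma~\ref{lem:intersect1} by treating the three cases of the algorithm separately, exactly as announced in Sections~\ref{alpha-small}, \ref{alpha-big-y-positive}, and \ref{alpha-big-y-negative}: $\alpha\le 17^\circ$; $\alpha>17^\circ$ with $y(c)>0$; and $\alpha>17^\circ$ with $y(c)<0$. In each case the four points $p^l,p^r,q^+,q^-$ have fixed closed-form coordinates, so each of the four disks $d^{r+},d^{l+},d^{l-},d^{r-}$ is pinned down by its defining data: tangency to $\ell_1$ (the line $y=-1$) or to $\ell_1'$ (the line $y=1$), the requirement that two prescribed points lie on its boundary, and the quadrant/coordinate constraint on its center. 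I would first solve these small systems to obtain, in each case, the explicit center and radius of each of the four disks.

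The two claims about $d^{r+}$ and $d^{r-}$ are the easy half. Since $d^*$ is the fixed unit disk centered at the origin and the disks $d^{r+},d^{r-}$ do not depend on $d$, I would simply verify $|c^*c_{r\pm}|>r(d^{r\pm})+1$, where $c_{r\pm}$ denotes the computed center; this is a finite numeric check, one per case, and establishes that neither disk meets $d^*$. The coordinate constraints --- the center of $d^{r+}$ has both coordinates exceeding $1$, and the center of $d^{r-}$ has $x>1$ and $y<-1$ --- are precisely what pushes these disks far enough from the origin for the inequality to hold.

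The claims about $d^{l+}$ and $d^{l-}$ are the substantive part, because here the second disk $d$ is \emph{not} fixed: it ranges over all disks with $r(d)\le 2$, $c^*\notin d$, $x(c)\ge 0$, convex angle $\alpha$ in the prescribed range, and reaching each of $\ell_1,\ell_2,\ell_3$, subject to $\delta(d)$ being maximal. My approach is to parametrize $d$ by $(\alpha,|cc^*|,r(d))$ through $x(c)=|cc^*|\cos\alpha$, $|y(c)|=|cc^*|\sin\alpha$, and $|cc^*|=r(d)+\delta(d)$, and then to extract from the constraints a bound on how far $d$ can reach toward $Q_2$ (resp.\ $Q_3$). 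The requirement that $d$ reach $\ell_1$, namely $y(c)\le r(d)-1$, together with $\alpha>17^\circ$ and $r(d)\le 2$, bounds $|cc^*|$ and hence confines the leftward extent of $d$; in the small-angle case the near-axis position of $c$ plays the analogous role. I would then exhibit a separating line between $d$ and $d^{l+}$ (resp.\ $d^{l-}$): the natural candidate is the mutual tangent of $d^*$ and $d$ supplied by Lemma~\ref{lem:qLieAboveTangent}, which certifies that $q^+$ --- a boundary point of the ``$+$'' disks --- lies on the far side of $d$, while the left anchor $p^l$ and the tangency to $\ell_1$ or $\ell_1'$ keep $d^{l\pm}$ from crossing back toward $d$ (cf.\ Figures~\ref{fig:SmallerThan2Case1} and \ref{fig:SmallerThan2Case2}).

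The \textbf{main obstacle} is precisely this uniform treatment of the variable disk $d$: unlike the $d^*$ claims, I cannot reduce to a single numeric inequality but must verify non-intersection over the whole feasible family. I expect the tightest configurations to be the extreme ones --- $r(d)=2$ with $\alpha$ at its boundary value $17^\circ$ in cases~2 and~3, where $d$ is largest and highest and therefore reaches furthest into the upper-left (resp.\ lower-left) quadrant --- so the crux is to show that even these extremal disks stay clear of $d^{l+}$ (resp.\ $d^{l-}$). Reducing the family to these worst cases, via monotonicity of the leftward reach of $d$ in $r(d)$ and in $|cc^*|$ (the latter controlled through the $\ell_1$-reaching inequality and the maximality of $\delta(d)$), is the step I expect to require the most care; once the worst case is isolated, the remaining separation becomes a bounded computation of the same flavor as the $d^*$ claims.
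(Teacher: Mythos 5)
Your handling of $d^{r+}$ and $d^{r-}$ is exactly the paper's: solve the small system for each center and run one distance-versus-sum-of-radii check against the fixed unit disk $d^*$. The genuine gap is in the half involving the variable disk $d$, and your ``natural candidate'' separator cannot work. The upper mutual tangent $t$ of $d^*$ and $d$ never separates $d$ from $d^{l+}$, because $d^{l+}$ straddles $t$: the point $p^l=(-0.5,0)$ lies in the interior of $d^*$, hence strictly on the same side of $t$ as $d^*$, while $q^+$ lies strictly on the other side of $t$ --- that is precisely the content of Lemma~\ref{lem:qLieAboveTangent} --- and both $p^l$ and $q^+$ are boundary points of $d^{l+}$. (The analogous failure occurs for $d^{l-}$ with the lower tangent.) Indeed, in the paper Lemma~\ref{lem:qLieAboveTangent} plays no role in the proof of Lemma~\ref{lem:intersect1}; it is proved so that Observation~\ref{Obs:5Plus} can be invoked later, inside the proof of Lemma~\ref{lem:mainRlessThan2}.

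That leaves your reduction to extremal configurations, which you correctly identify as the crux but do not carry out, and whose extremes you partly misidentify: in cases 2 and 3 the binding configuration for $d^{l+}$ is the \emph{largest} admissible angle, not $\alpha=17^{\circ}$, since increasing $\alpha$ moves $c$ toward $Q_2$; the paper first derives $\alpha<30^{\circ}$ (from $y(c)\le 1$ and the fact that $c$ lies below the parabola $x^2-2y=1$) and only then compares against anything. More importantly, the paper never optimizes over the three-parameter family of admissible disks $d$ at all. It replaces $d$ by the fixed disk $d'$ of radius $2$ with $c^*$ on its boundary and center $c'$ on the ray from $c^*$ through $c$, via a two-case domination argument: if $c\in\overline{c^*c'}$ then $d\subseteq d'$ (this uses only $\delta(d)\ge 0$); and if $c'\in\overline{c^*c}$ then the center of $d^{l\pm}$ is closer to $c'$ than to $c$ while $r(d')\ge r(d)$, so avoiding $d'$ implies avoiding $d$. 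A bisector-through-$c^*$ argument then replaces $d'$ by a single extreme-angle disk (angle $30^{\circ}$ for $d^{l+}$, angle $0$ for $d^{l-}$), after which one numeric inequality per disk finishes the job. Without this domination step, or a worked-out substitute for it, your plan rests on an unproved monotonicity assertion over the whole family of admissible $d$, so the proof as proposed is incomplete.
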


The following lemma together with Claim~\ref{cl:inStripV} implies that $S=\{p^l,p^r, q^+, q^- \}$ stabs $D$.
\begin{lemma}\label{lem:mainRlessThan2}
Any disk $e \in D$, whose center is outside $V$, contains a point of $S$.
\end{lemma}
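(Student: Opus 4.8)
The plan is to prove Lemma~\ref{lem:mainRlessThan2} by taking an arbitrary disk $e\in D$ whose center lies outside the vertical strip $V$, and showing it must contain one of the four points in $S$. By the reductions established just before the lemma, I may assume $e$ intersects both $\ell_1$ and $\ell_1'$ (using Claim~\ref{cl:blabla} to handle disks of negative $y$-coordinate not already stabbed by $q^-$ or $p^l$). Since the center of $e$ lies outside $V$, it lies either to the left of the line through $p^l$ or to the right of the line through $p^r$; by the left-right structure of the construction these correspond to targeting the points $\{p^l,q^+\}$ or $\{p^l,q^-\}$ on the left, and $\{p^r,q^+\}$ or $\{p^r,q^-\}$ on the right, depending on whether the center of $e$ is above or below the $x$-axis.

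The core idea is to route each of these four situations through the ``reference disks'' $d^{r+}$, $d^{l+}$, $d^{l-}$, $d^{r-}$ and apply the geometric observations (Observation~\ref{radius} and Observation~\ref{outside-strip}) exactly as in the five-point proof. For concreteness, consider the case where the center $c_e$ of $e$ is outside $V$ on the right and above the $x$-axis. I would let $p^r$ and $q^+$ play the roles of $a$ and $b$, let $\ell$ be the line through them, and let $V'$ be the strip defined by the perpendiculars to $\ell$ at these two points. If $c_e$ lies outside $V'$, then Observation~\ref{outside-strip} (with $\eta$ the disk $d^*$, which does not reach past these points, and $\epsilon=e$) immediately gives that $e$ contains $p^r$ or $q^+$. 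If $c_e$ lies inside $V'$, then I invoke Observation~\ref{radius}: here $d^{r+}$ plays the role of the small comparison disk $\delta$, and the key inequality $r(e)\ge r(d^{r+})$ together with the fact that $d^{r+}$ is tangent to $\ell_1$ (which $e$ must cross) forces $e$ to contain $p^r$ or $q^+$. The analogous three cases are handled symmetrically, using $d^{l+}$, $d^{l-}$, and $d^{r-}$ respectively, with $\ell_1'$ replacing $\ell_1$ when the center is below the $x$-axis.

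Lemmas~\ref{lem:intersect1} and~\ref{lem:qLieAboveTangent} are what make these applications valid, so I would lean on them explicitly. Lemma~\ref{lem:intersect1} guarantees that the left reference disks $d^{l+},d^{l-}$ do not intersect $d$ and the right reference disks $d^{r+},d^{r-}$ do not intersect $d^*$; this is exactly the separation property needed to certify that any disk $e\in D$ (which must intersect $d$, since $d$ realizes the maximum $\delta$ among disks in $D^{-}_{\le 2}$, and must intersect $d^*$) has radius at least that of the relevant reference disk, supplying the hypothesis $r(e)\ge r(\delta)$ of Observation~\ref{radius}. Lemma~\ref{lem:qLieAboveTangent}, placing $q^+$ above the mutual tangent of $d^*$ and $d$, ensures the reference disks are well-defined and positioned so that the strips $V'$ actually cover the region outside $V$ without gaps, so that the two sub-cases (inside/outside $V'$) are exhaustive.

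The main obstacle I anticipate is not any single application of an observation, but verifying that the four strips $V'$ associated with the four reference disks, together with the strip $V$ itself, tile the plane with no uncovered wedge — in other words, that every center outside $V$ really does fall into the domain of one of the four cases. This is where the specific numerical choices of $q^-,q^+,p^l,p^r$ (which differ between the $\alpha\le 17^\circ$ and $\alpha>17^\circ$ regimes) are doing hidden work, and where Lemma~\ref{lem:qLieAboveTangent} is essential to rule out a gap near the tangent line between the ``above'' and ``below'' cases. I expect the cleanest write-up to first dispose of the easy outside-$V'$ situations via Observation~\ref{outside-strip} uniformly, then concentrate the remaining effort on the four inside-$V'$ situations, checking the radius comparison in each by combining Lemma~\ref{lem:intersect1} with the explicit radii of the reference disks, and finally confirming exhaustiveness of the case split as a short geometric argument rather than a long computation.
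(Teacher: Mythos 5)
Your proposal runs through the wrong machinery, and the step it leans on is a non sequitur. You route every case through Observations~\ref{outside-strip} and~\ref{radius} (the five-point toolkit), and you claim that the hypothesis $r(e)\ge r(\delta)$ of Observation~\ref{radius} is ``supplied'' by Lemma~\ref{lem:intersect1}. It is not: Lemma~\ref{lem:intersect1} asserts that the \emph{reference disks} $d^{l+},d^{l-}$ miss $d$ and that $d^{r+},d^{r-}$ miss $d^*$; it says nothing about the radius of an arbitrary disk $e\in D$. In the five-point proof the radius hypothesis was earned by a separate extremal computation (the smallest disk $d''$ tangent to $\ell_1$ avoiding both points, whose radius was then compared to $r(d')$); your proposal contains no analogue of that computation, and here it would be far from routine: the reference disks have radius roughly $5.2$ (and over $8$ in the $\alpha>17^{\circ}$ regime), while disks of $D$ in this case can be small (the disk $d$ itself has radius at most $2$), so any such bound can only come from the positional constraints on $e$ --- which is exactly what needs proof. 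There are further unchecked hypotheses: Observations~\ref{outside-strip} and~\ref{radius} need $e$ to meet $\eta$ \emph{inside} the perpendicular strip $V'$, and for the left-hand cases $\eta$ cannot be $d^*$ at all (the disks $d^{l\pm}$ pass through $p^l$, which lies inside $d^*$, so no separation holds there); one must use $d$, whose location is only loosely constrained. Finally, you yourself flag the exhaustiveness of the inside-/outside-$V'$ split as unresolved, so the case analysis is not closed.

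The paper avoids all of this by design. It splits on the four quadrants around $c^*$ and applies Corollary~\ref{smaller} and Observation~\ref{Obs:5Plus}, which carry \emph{no} radius hypothesis on $\epsilon$: they only require the center of $\epsilon$ to lie in the appropriate quadrant of $a$ and that $\epsilon$ intersect $\ell$ and $\eta'$ (resp.\ $\ell$, $\eta$, and $\gamma$); containment of $a$ or $b$ then follows because $\epsilon$ would otherwise cross the boundary of the reference disk more than twice. In that argument Lemma~\ref{lem:intersect1} plays a different role than the one you assign it: the separation is what allows $d^*$ (resp.\ $d$) to serve as the disk $\eta'$ contained in a disk $\eta$ tangent to the reference disk $\delta$. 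Likewise, Lemma~\ref{lem:qLieAboveTangent} is not about strips covering the plane; it certifies that $q^+$ lies on the correct arc of $d^{l+}$ (above the mutual tangent of $d^*$ and $d$), which is the hypothesis on the point $b$ in Observation~\ref{Obs:5Plus}, needed precisely in the sub-case where $q^+$ falls below $\ell_1'$ --- a sub-case distinction ($q^+$ above versus below $\ell_1'$, deciding between Corollary~\ref{smaller} and Observation~\ref{Obs:5Plus}) that your proposal does not make. So the proposal is not a correct alternative proof as it stands; to salvage your strip-based plan you would have to prove, per case, the missing radius lower bound and the ``meets $\eta$ inside $V'$'' condition, which amounts to redoing the work the paper's Observations~\ref{big}--\ref{Obs:5Plus} were built to bypass.
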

\begin{proof}
We consider four cases according to which quadrant (with apex at $c^*$) the center of $e$ belongs to.

\noindent
\textbf{Case 1:} $e$'s center is in the second quadrant and outside $V$.
If $q^+$ lies above $\ell'_1$, then
we have by Corollary~\ref{smaller} that $e$ contains $p^l$ or $q^+$.
To see that,     $p^l$, $q^+$, $d$,      $e$,       $d^{l+}$ and $\ell_1$,
play the role of $a$,   $b$,   $\eta'$, $\epsilon$, $\delta$ and $\ell$
in Corollary~\ref{smaller}.
Else, $q^+$ lies below $\ell'_1$ and above the mutual tangent of $d^*$ and $d$ as shown in Lemma~\ref{lem:qLieAboveTangent}.
We have by Observation~\ref{Obs:5Plus} that $e$ contains $p^l$ or $q^+$.
   To see that, $p^l$, $q^+$, $d$,    $d^{l+}$, $e$         $d^*$    and $\ell_1$,
play the role of $a$,  $b$,   $\eta$, $\delta$, $\epsilon$, $\gamma$ and $\ell$
in Observation~\ref{Obs:5Plus}.

\noindent
\textbf{Case 2:} $e$'s center is in the third quadrant and outside $V$.
Then,  since $e$ intersects $\ell'_1$ this case is analogous to the previous case.

\noindent
\textbf{Case 3:} $e$'s center is in the first quadrant and outside $V$.
Then, we have by Corollary~\ref{smaller} (reflected with respect to the $y$-axis)
that $e$ contains $p^r$ or $q^+$.
To see that,     $p^r$, $q^+$, $d^*$,      $e$,        $d^{r+}$ and $\ell_1$,
play the role of $a$,   $b$,   $\eta'$,    $\epsilon$, $\delta$   and $\ell$
in Corollary~\ref{smaller}.

\noindent
\textbf{Case 4:} $e$'s center is in the fourth quadrant and outside $V$.
Then,  since $e$ intersects $\ell'_1$ this case is analogous to the previous case.
\end{proof}

In the rest of this section we provide the proofs for the claims that we stated without proof.
We distinguish between the three main cases with respect to the location of $c$ ($d$'s center).
That is, (i) $\alpha \leq 17^{\circ}$ in Section~\ref{alpha-small}; (ii) $\alpha > 17^{\circ}$ and $y(c) > 0$ in Section~\ref{alpha-big-y-positive};
 (iii) $\alpha > 17^{\circ}$ and $y(c) < 0$ in Section~\ref{alpha-big-y-negative}.

\begin{figure}[hbt]
	\centering
	\includegraphics[scale=0.5]{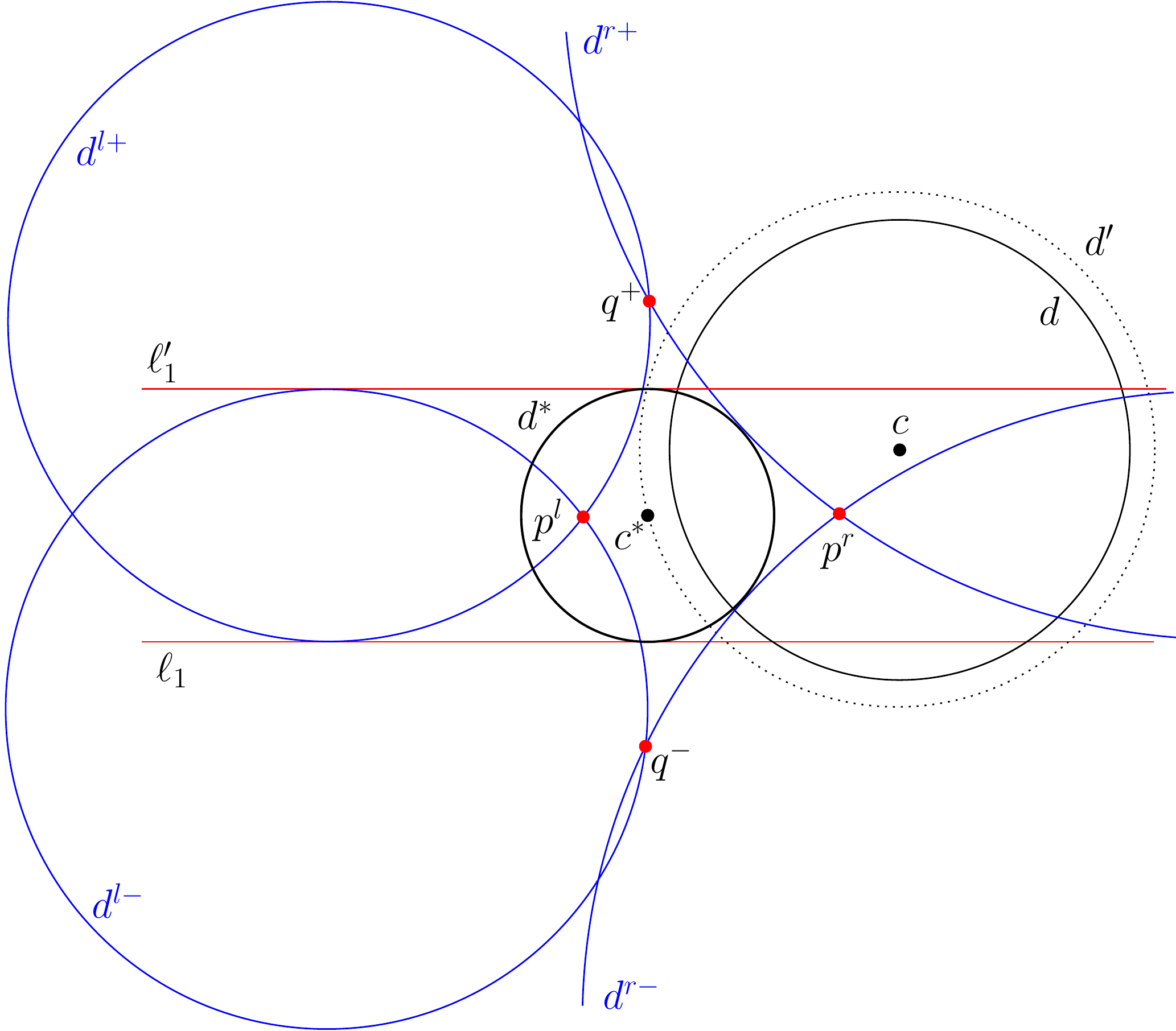}
	\caption{Illustration of $d$, $d^{r+}$, $d^{l+}$, $d^{l-}$, and $d^{r-}$ when $\alpha\le 17^{\circ}$.}
	\label{fig:SmallerThan2Case1}
\end{figure}

\subsubsection{$ \alpha \le 17^{\circ}$}
\label{alpha-small}

Recall that in this case $q^+ = (0,1.7)$. Let $t$ be the upper mutual tangent of $d^*$ and $d$. In order to be able to apply Observation~\ref{Obs:5Plus}, we first need to prove the following claim.

\begin{figure}[hbt]
	\centering
	\includegraphics[scale=0.75]{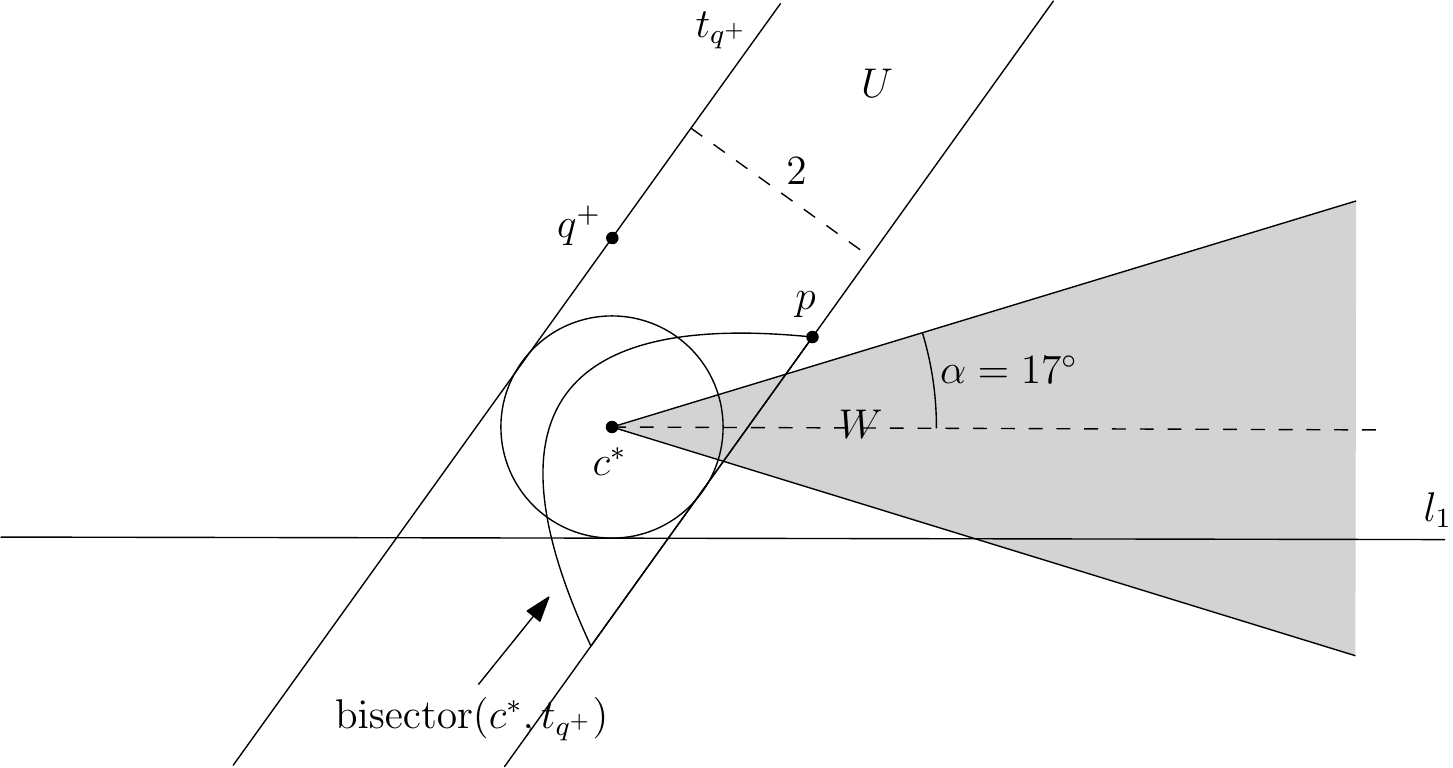}
	\caption{Proof of Claim~\ref{cl:q_plus_above1}.}
	\label{fig:q_plus_above1}
\end{figure}

\begin{claim}
\label{cl:q_plus_above1}
$q^+$ is above $t$.
\end{claim}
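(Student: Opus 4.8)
The plan is to prove that $q^+=(0,1.7)$ lies strictly above the upper mutual tangent $t$ of $d^*$ and $d$, under the standing assumptions of this subcase: $d\in D^{-}_{\le 2}$ has maximum $\delta(d)$, its center $c$ satisfies $x(c)\ge 0$, and the convex angle $\alpha$ between $\overline{cc^*}$ and the $x$-axis is at most $17^{\circ}$. Since $q^+$ is on the positive $y$-axis, it suffices to control how low $t$ can dip at $x=0$. First I would set up explicit coordinates for $c=(|cc^*|\cos\alpha,\,\pm|cc^*|\sin\alpha)$ and recall $\delta(d)=|cc^*|-r(d)$, so that $|cc^*|=\delta(d)+r(d)$. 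The worst case for us is when $t$ is as low as possible at $x=0$, which pushes the configuration toward small $\delta(d)$, $y(c)>0$ (so the upper tangent tilts down toward the $y$-axis), and $\alpha$ as large as allowed, i.e.\ $\alpha=17^{\circ}$.

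\textbf{Bounding the relevant parameters.}
Next I would extract the constraints that pin down the extremal configuration. Because $d$ must intersect $d^*$ (radius $1$, center origin) we have $\delta(d)\le\text{(something)}$, but more importantly $d$ intersects the three tangent disks $d_1,d_2,d_3$; in particular, intersecting $d_1$ forces $d$ to meet $\ell_1$, which bounds how far from $c^*$ the disk $d$ can sit relative to its radius. The upper mutual tangent $t$ of two disks depends only on their centers and radii: $d^*$ has center $(0,0)$, radius $1$, and $d$ has center $c$, radius $r(d)$. I would write the equation of the upper external tangent line of these two circles and evaluate its $y$-intercept as a function of $r(d)$ and $\alpha$ (with $\delta(d)$ fixed by the relation above, or treated as a free nonnegative parameter bounded by the intersection constraint). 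The claim reduces to showing this $y$-intercept is at most $1.7$ for all admissible $(r(d),\alpha,\delta(d))$ with $r(d)\le 2$, $\alpha\le 17^{\circ}$.

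\textbf{Reduction to a monotone one-variable bound.}
I expect the cleanest route is to argue monotonicity: the $y$-intercept of $t$ is increasing in $r(d)$ and in $\alpha$ over the admissible range, so the maximum is attained at $r(d)=2$, $\alpha=17^{\circ}$, and at the smallest admissible $\delta(d)$. Plugging these boundary values into the tangent-line formula gives a single numerical inequality to verify against $1.7$; I would present that one computation and appeal to Figure~\ref{fig:q_plus_above1} for the geometric picture. The degrees of freedom can also be reduced by the observation that the external tangent of two circles passes through their external homothety center, so its position is governed by the ratio $r(d)/1$ and the direction of $\overline{cc^*}$; this makes the monotonicity in $\alpha$ transparent, since increasing $\alpha$ rotates $c$ away from the $x$-axis and drags the homothety center's tangent downward on the $y$-axis side opposite to $c$.

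\textbf{The main obstacle.}
The hard part will be establishing the monotonicity claims rigorously rather than just checking the corner value $\alpha=17^{\circ}$, $r(d)=2$: the $y$-intercept of $t$ is a somewhat unwieldy function of three coupled parameters, and the constraint linking $\delta(d)$, $r(d)$, and $\alpha$ (coming from $d$ meeting $\ell_1$, and from $d$ having \emph{maximum} $\delta$ among small disks) must be invoked to rule out configurations where $t$ could otherwise be arbitrarily low. I would isolate the precise inequality that the intersection-with-$\ell_1$ condition yields, show it forces $t$'s $y$-intercept below $1.7$ at the extreme corner, and confirm the intermediate cases by monotonicity; the $17^{\circ}$ threshold is presumably chosen exactly so that this corner inequality holds with a small margin, which is the delicate point to get right.
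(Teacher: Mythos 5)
Your plan is correct, but it takes a genuinely different route from the paper's proof. The paper never computes the mutual tangent $t$ analytically: it introduces the fixed auxiliary line $t_{q^+}$ (the positive-slope tangent to $d^*$ through $q^+$), observes that if $d$ stays below $t_{q^+}$ then $q^+$ is automatically above $t$, and otherwise derives a contradiction --- the center $c$ would have to lie simultaneously in the width-$2$ strip below $t_{q^+}$ (from $r(d)\le 2$), on the far side of the parabola equidistant from $c^*$ and $t_{q^+}$ (from $c^*\notin d$), and in the wedge $|\alpha|\le 17^\circ$, and an explicit computation of where the parabola meets the strip's lower boundary shows the first two regions force an angle exceeding $23^\circ$. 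You instead parametrize $t$ itself and maximize its height at $x=0$, and this works, in fact more cleanly than you fear: writing tangent lines to $d^*$ as $x\cos\theta+y\sin\theta=1$, the upper common tangent of $d^*$ and $d$ (center at distance $\rho=r(d)+\delta(d)$ at angle $+\alpha$, the worse of the two orientations; note $r(d)\ge 1$ since $d$ meets $\ell_1$) corresponds to $\theta=\alpha+\arccos\bigl(\frac{1-r(d)}{\rho}\bigr)$, so its $y$-intercept is $\frac{1}{\sin\theta}=\frac{1}{\cos\left(\alpha+\arcsin\frac{r(d)-1}{\rho}\right)}\le\frac{1}{\cos(17^\circ+30^\circ)}<1.47<1.7$, using only $r(d)\le 2$, $\delta(d)\ge 0$, $\alpha\le 17^\circ$; monotonicity in all three parameters is manifest from this formula, the extremal corner is exactly the one you identified, and the maximality of $\delta(d)$ plays no role (the orientation $y(c)<0$ is even easier: there the $\ell_1$-intersection constraint gives $\theta\ge 90^\circ-2\alpha$, hence intercept at most $1/\cos 34^\circ$). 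Two caveats: you twice describe the worst case as $t$ being ``as low as possible'' at $x=0$ when you mean as high as possible --- a verbal slip only, since the configuration you extremize ($r(d)=2$, $\alpha=17^\circ$, $\delta(d)$ minimal, $y(c)>0$) is indeed the intercept-maximizing one; and your text is an outline, so the tangent-angle computation above (or an equivalent) must still be supplied. As for what each approach buys: the paper's argument avoids two-circle tangent algebra at the price of an ad hoc region-emptiness computation involving a parabola, while yours reduces to a transparent corner evaluation with a comfortable margin ($\approx 1.47$ versus $1.7$), and the same machinery transfers, with the evaluation point moved to $x=-0.5$ and the bound $y(c)\le 1$ carrying more of the load, to the companion Claim~\ref{cl:q_plus_above2}.
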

\begin{proof}
By our assumptions on the underlying setting we know that $x(c) > 0$. Moreover, $d$ intersects $d^*$ but is not stabbed by $c^*$, and $d$ intersects the line $l_1$, so $y(c) \le 1$ (since $r(d) \le 2$).
Let $t_{q^+}$ be the line with positive slope tangent to $d^*$ and passing through $q^+$; see Figure~\ref{fig:q_plus_above1}. If $d$ lies below $t_{q^+}$, then $q^+$ is clearly above $t$ and we are done. So assume that $d$ intersects $t_{q^+}$, i.e., the distance between $c$ and $t_{q^+}$ is at most $r(d)$. Since $c$ is below $t_{q^+}$ and $r(d) \le 2$, we conclude that $c$ lies in the strip $U$ defined by $t_{q^+}$ and the line parallel to $t_{q^+}$ at distance 2 from $t_{q^+}$ and below $t_{q^+}$. Moreover, $c$ is closer to $t_{q^+}$ than to $c^*$, so $c$ must lie on the `right' side of the bisector between $c^*$ and $t_{q^+}$. Finally, $c$ must also lie in the convex wedge $W$ defined by the rays emanating from $c^*$ of angles $17^\circ$ and $-17^\circ$, respectively. Our calculations below show that the intersection of these three regions is empty, leading to the conclusion that $d$ does not intersect $t_{q^+}$.

Indeed, $t_{q^+}$'s point of tangency is $(-\frac{3\sqrt{21}}{17},\frac{10}{17})$ and $t_{q^+}$'s equation is $y=\frac{6.3}{\sqrt{21}}x+1.7$. Thus, the equation of the lower boundary of the strip $U$ is $y=\frac{6.3}{\sqrt{21}}x-1.7$. The bisector between $c^*$ and $t_{q^+}$ is the parabola $P$:
\[ x^2 + y^2 = \frac{21}{6.3^2+21}\left(\frac{6.3}{\sqrt{21}}x - y + 1.7 \right)^2 \ . \]
Since $x(c) > 0$ and $c$ must lie in $U$ and on the `right' side of $P$, it is enough to show that any point satisfying these conditions does not lie in $W$. To see this, we calculate the point $p$ and the angle $\beta$ corresponding to it, where $p$ is the higher intersection point between $P$ and the lower boundary of $U$; see Figure~\ref{fig:q_plus_above1}. We get that
$p = (\sqrt{3}(\frac{10}{17}+\frac{3\sqrt{7}}{17}),\frac{9\sqrt{7}}{17}-\frac{10}{17})$ and $\beta >23^\circ$, and therefore any point satisfying these conditions corresponds to an angle greater than $\beta$.
\end{proof}

	First we prove that $d^{l+}$ and $d^{l-}$ do not intersect $d$. Because of symmetry, we prove this only for $d^{l+}$. To that end, let $(a,b)$ denote the center of $d^{l+}$; notice that $a\le 0$ and $b\ge 0$. Since $d^{l+}$ is tangent to $\ell_1$ and has $p^l=(-0.5,0)$ and $q^+=(0,1.7)$ on its boundary, the following equations hold:
	\begin{enumerate}[leftmargin=1.5\parindent]
		\item $r(d^{l+})=b+1$,
		\item $(a + 0.5 )^2 + b^2=  r(d^{l+})^2$,
		\item $a^2 + (b - 1.7)^2 =  r(d^{l+})^2$.
	\end{enumerate}

	These equations solve to:	$ a = \frac{-27}{34} - \frac{3 \sqrt{{471}/{5}} }{17}, b= \frac{2919}{2890} + \frac{3 \sqrt{2355 } }{289}.$

	Let $d'$ be the disk of radius 2 that has $c^*$ on its boundary and has its center $c'=(x',y')$ on the ray from $c^*$ to $c$, i.e., its center is on a line through the origin that makes angle $\alpha $ with the $x$-axis; see Figure~\ref{fig:transfrom4}. If fact we have $x'= 2\cos \alpha$ and $y'=2\sin \alpha$. We claim that if $d^{l+}$ does not intersect $d'$, then it also does not intersect $d$. To verify this claim we consider two cases: (i) $c$ is on the line segment $c^*c'$ (ii) $c'$ is on the line segment $c^*c$.
	In case (i) our claim holds because $d'$ contains $d$; see Figure~\ref{fig:transfrom4}(a).
	In case (ii) our claim holds because the center of $d^{l+}$ is closer to $c'$ than to $c$ (the center of $d^{l+}$ and the point $c'$ are on the same side of the the perpendicular bisector of  $c'c$) and $r(d') \geq r(d)$;
	 see Figure~\ref{fig:transfrom4}(b).

	\begin{figure}[htb]
		\centering
		\setlength{\tabcolsep}{0in}
		$\begin{tabular}{cc}
		\multicolumn{1}{m{.5\columnwidth}}{\centering\includegraphics[width=.3\columnwidth]{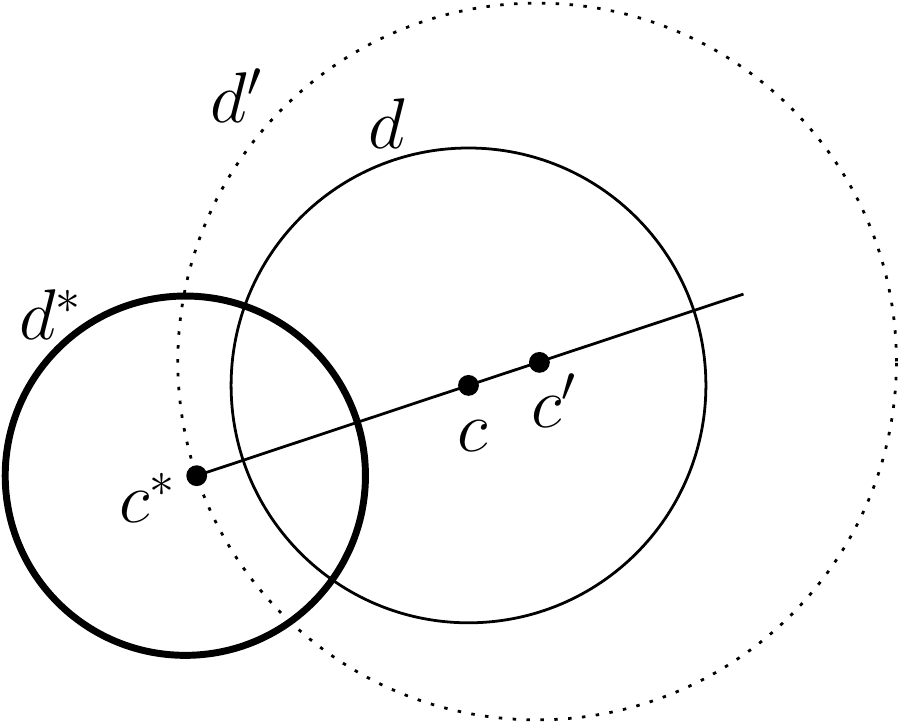}}
		&\multicolumn{1}{m{.5\columnwidth}}{\centering\includegraphics[width=.3\columnwidth]{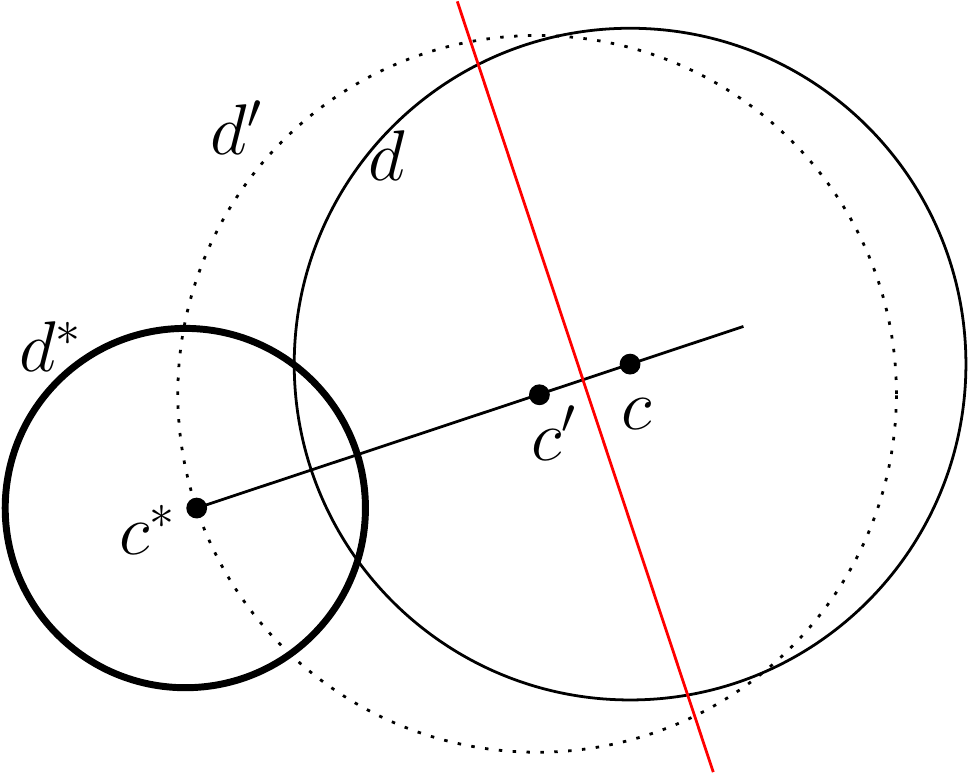}} \\
		(a) &(b)
		\end{tabular}$
		\caption{(a) The center $c$ lies on $c^*c'$, and (b) the center $c'$ lies on $c^*c$.}
		\label{fig:transfrom4}
	\end{figure}

%

	By the above claim it suffices to show that $d^{l+}$ does not intersect $d'$, that is the distance between their centers is more than the sum of their radii:
	$(a - x')^2 + (b -y')^2 > (2 + (b+1) )^2$. This can be verified by plugging $a$, $b$, $x'$, and $y'$ as follows:

	\begin{align}
	&\notag (a - x' )^2 + (b -  y')^2 = \ (a - x' )^2 + ( y' - b)^2 = \\
	&\notag \left(2 \cos \alpha +  \frac{27}{34} + \frac{3 \sqrt{{471}/{5}} }{17}   \right)^2 + \left(\frac{2919}{2890} + \frac{3 \sqrt{2355 } }{289} - 2 \sin \alpha \right)^2 \ge\\
	&\notag \left(2 \cos 17^{\circ} + \frac{27}{34} + \frac{3 \sqrt{{471}/{5}} }{17}   \right)^2 + \left(\frac{2919}{2890} + \frac{3 \sqrt{2355 } }{289} - 2 \sin 17^{\circ} \right)^2>\\
	&\notag \left(2 + \frac{2919}{2890} + \frac{3 \sqrt{2355 } }{289}  +1 \right)^2.
	\end{align}


	The above inequalities imply that $d^{l+}$ does not intersect $d'$. Therefore $d^{l+}$ does not intersect $d$.

	Now we prove that $d^{r+}$ and $d^{r-}$ do not intersect $d^*$. Because of symmetry, we show this only for $d^{r+}$.
	Let $(a,b)$ denote the center of $d^{r+}$, and recall that $a \ge 1 $ and $b \ge 1$. Since $d^{r+}$ is tangent to $\ell_1$ and has $p^r=(1.5, 0)$ and $q^+=(0,1.7)$ on its boundary, the following equations hold:
	\begin{enumerate}[leftmargin=1.5\parindent]
		\item $r(d^{r+})=b+1$,
		\item $(a - 1.5 )^2 + b^2=  r(d^{r+})^2$,
		\item $a^2 + (b - 1.7)^2 =  r(d^{r+})^2$.
	\end{enumerate}

	These equations solve to: $a = \frac{81}{34} + \frac{3 \sqrt{{771}/{5}} }{17}, b= \frac{6619}{2890} + \frac{9 \sqrt{3855 } }{289}.$

	To prove that $d^{r+}$ and $d^*$ do not intersect, it suffices to show that the distance between their centers is more than the sum of their radii: $ (a - 0)^2 + (b - 0)^2 > (1 + (b+1) )^2$. This can be verified
	by plugging $a$ and $b$ as follows:

	\[  \left(\frac{81}{34} + \frac{3 \sqrt{{771}/{5}} }{17}  \right)^2 + \left(\frac{6619}{2890} + \frac{9 \sqrt{3855 } }{289} \right)^2  > \left(1  + \frac{6619}{2890} + \frac{9 \sqrt{3855 } }{289}  +1 \right)^2.\]

	The above inequality implies that $d^{r+}$ does not intersect $d^*$.

\subsubsection{$ \alpha > 17^{\circ}$ and $y(c) > 0$}
\label{alpha-big-y-positive}

Recall that in this case $q^+ = (-0.5,1.83)$. Let $t$ be the upper mutual tangent of $d^*$ and $d$. In order to be able to apply Observation~\ref{Obs:5Plus}, we first need to prove the following claim.

\begin{figure}[hbt]
	\centering
	\includegraphics[scale=0.75]{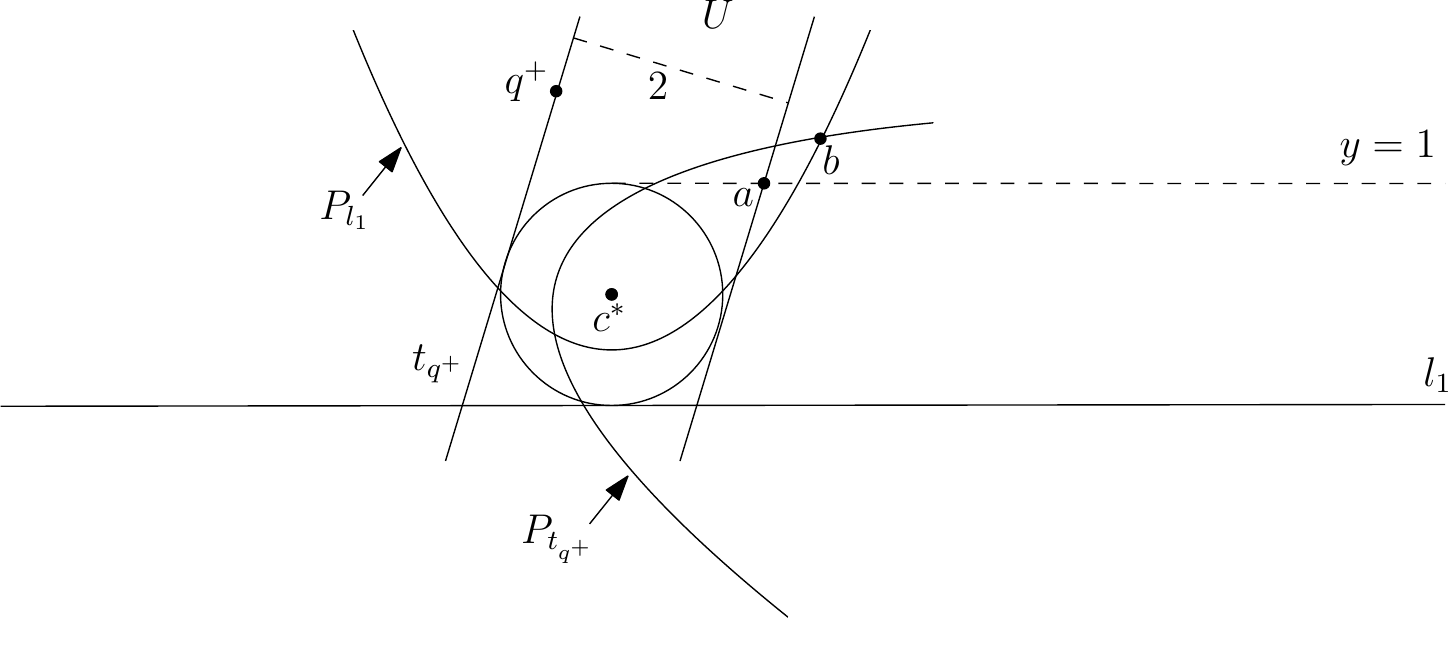}
	\caption{Proof of Claim~\ref{cl:q_plus_above2}.}
	\label{fig:q_plus_above2}
\end{figure}

\begin{claim}
\label{cl:q_plus_above2}
$q^+$ is above $t$.
\end{claim}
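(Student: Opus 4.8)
The plan is to follow the proof of Claim~\ref{cl:q_plus_above1} essentially verbatim, adapting the constants to the present value $q^+=(-0.5,1.83)$ and to the fact that now $c$ lies \emph{outside} the $17^{\circ}$-wedge rather than inside it. Let $t_{q^+}$ be the tangent line to $d^*$ of positive slope through $q^+$; a direct computation gives its slope as the positive root of $0.75\,m^2-1.83\,m-2.3489=0$ together with its point of tangency on $d^*$. As in Claim~\ref{cl:q_plus_above1}, if $d$ lies (weakly) below $t_{q^+}$ then $q^+$ is above the upper mutual tangent $t$ of $d^*$ and $d$, because $t_{q^+}$ is tangent to $d^*$ with $d$ beneath it and $d$ lies to the right of $q^+$; hence it suffices to rule out that $d$ meets $t_{q^+}$.

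First I would collect the constraints that $c$ must satisfy. By the underlying setting $x(c)\ge 0$, and in this subcase $y(c)>0$. Since $d$ intersects $\ell_1$ (the line $y=-1$) we have $r(d)\ge y(c)+1$, and since $c^*\notin d$ we have $r(d)<|cc^*|$; together with $r(d)\le 2$ these give $y(c)\le 1$ and, squaring $y(c)+1<|cc^*|$, the sharper bound $x(c)^2>2\,y(c)+1$. Finally $\alpha>17^{\circ}$ places $c$ strictly above the ray from $c^*$ making angle $17^{\circ}$ with the positive $x$-axis.

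Next, arguing by contradiction, suppose $d$ meets $t_{q^+}$. Then $\mathrm{dist}(c,t_{q^+})\le r(d)\le 2$, so $c$ lies in the strip $U$ bounded by $t_{q^+}$ and its parallel translate at distance $2$ below it; and since $c^*\notin d$ forces $\mathrm{dist}(c,t_{q^+})\le r(d)<|cc^*|$, the center $c$ is closer to the line $t_{q^+}$ than to the point $c^*$, i.e.\ it lies on the far side of the bisecting parabola $P$ of $c^*$ and $t_{q^+}$. I would then intersect $U$ and the far side of $P$ with the region $x(c)^2>2y(c)+1$, $0<y(c)\le 1$, $x(c)\ge 0$, and show that this set is confined to angles at most $17^{\circ}$ from the positive $x$-axis, exactly as Claim~\ref{cl:q_plus_above1} locates its feasible points beyond an angle $\beta$. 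This contradicts $\alpha>17^{\circ}$, so $d$ must lie below $t_{q^+}$ and $q^+$ is above $t$.

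The step I expect to be the main obstacle is the final coordinate computation: writing down $t_{q^+}$, the parabola $P$, and the boundary curve $x^2=2y+1$ explicitly, identifying the extreme (largest-angle) point of their common feasible region, and verifying that its angle does not exceed $17^{\circ}$. This is where the specific choice $y(q^+)=1.83$ is used, and, as in Claim~\ref{cl:q_plus_above1}, it is a somewhat delicate numerical verification rather than a conceptual difficulty; the rest of the argument is a routine transcription of the $\alpha\le 17^{\circ}$ case with the roles of ``inside'' and ``outside'' the wedge interchanged.
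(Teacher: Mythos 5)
Your proposal is correct, but it closes the argument differently from the paper, and the difference is worth recording. The paper's proof of Claim~\ref{cl:q_plus_above2} never uses $\alpha>17^{\circ}$: it gathers the same constraints you list --- the strip $U$, the bound $y(c)\le 1$, closeness to $t_{q^+}$ (the bisector parabola $P_{t_{q^+}}$), and closeness to $\ell_1$ (the parabola $P_{\ell_1}$, which is exactly your inequality $x(c)^2>2y(c)+1$) --- and shows this feasible set is \emph{empty} by an $x$-coordinate comparison: membership in $U$ with $y(c)\le 1$ forces $x(c)\le x(a)<1.34$, while lying to the right of both parabolas in the first quadrant forces $x(c)\ge x(b)>1.99$. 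You instead transpose the wedge argument of Claim~\ref{cl:q_plus_above1}: keep $\alpha>17^{\circ}$ as the active hypothesis and show the remaining constraints pin $c$ at small angles. This does work, with a large margin: $x(c)^2>2y(c)+1$ and $y(c)>0$ give $x(c)>1$; the lower boundary of $U$ is $y=mx-\sqrt{1+m^2}$ with $m\approx 3.37$, and requiring it to lie below the parabola $y=(x^2-1)/2$ confines $x(c)$ to the interval from $1$ to $m-\sqrt{b^2-2b}\approx 1.06$, where $b=\sqrt{1+m^2}$ (the other root, near $5.68$, is excluded by $y(c)\le 1$); hence $y(c)<(x(c)^2-1)/2<0.065$ and the angle of $c$ is under $4^{\circ}$, far below $17^{\circ}$. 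Two execution caveats. First, the ``largest-angle point of the common feasible region'' you plan to compute does not exist: once the far side of $P_{t_{q^+}}$ is also imposed, the region is empty (that is precisely the paper's route), so you should either bound the angle of the larger region obtained by dropping that constraint, as above, or be prepared to detect emptiness --- either finishes the proof. Second, your constraint $x(c)^2>2y(c)+1$ is not optional: without it the region still contains points such as $(0,1)$, at angle $90^{\circ}$, so a literally verbatim copy of Claim~\ref{cl:q_plus_above1} (which uses only $U$, $P_{t_{q^+}}$, and the wedge) would fail here; your ``sharper bound'' is what saves the transposition. As for what each route buys: the paper's argument shows this sub-case holds with no lower bound on $\alpha$ at all, while yours makes Claims~\ref{cl:q_plus_above1} and~\ref{cl:q_plus_above2} two instances of a single wedge computation with the roles of inside and outside interchanged.
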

\begin{proof}
By our assumptions on the underlying setting we know that $c$ is in the first quadrant. Moreover, $d$ intersects $d^*$ but is not stabbed by $c^*$, and $d$ intersects the line $l_1$, so $y(c) \le 1$ (since $r(d) \le 2$).
Let $t_{q^+}$ be the line with positive slope tangent to $d^*$ and passing through $q^+$; see Figure~\ref{fig:q_plus_above2}. If $d$ lies below $t_{q^+}$, then $q^+$ is clearly above $t$ and we are done. So assume that $d$ intersects $t_{q^+}$, i.e., the distance between $c$ and $t_{q^+}$ is at most $r(d)$. Since $c$ is below $t_{q^+}$ and $r(d) \le 2$, we conclude that $c$ lies in the strip $U$ defined by $t_{q^+}$ and the line parallel to $t_{q^+}$ at distance 2 from $t_{q^+}$ and below $t_{q^+}$. Moreover, $x(c) \le x(a)$, where $a$ is the intersection point between the lower boundary of $U$ and the line $y=1$ (since $y(c) \le 1$). On the other hand, $c$ is closer to $t_{q^+}$ than to $c^*$, so $c$ must lie on the `right' side of the bisector $P_{t_{q^+}}$ between $c^*$ and $t_{q^+}$, and $c$ is closer to $l_1$ than to $c^*$, so $c$ must also lie on the `right' side of the bisector $P_{l_1}$ between $c^*$ and $l_1$. But, any point in the first quadrant that is on the `right' side of both these bisectors has $x$-coordinate greater than that of $b$, the right intersection point between $P_{t_{q^+}}$ and $P_{l_1}$. Our calculations below show that $x(a) < x(b)$, so $d$ cannot intersect $t_{q^+}$.

Indeed, $t_{q^+}$'s equation is $y=(\frac{61}{50}+\frac{\sqrt{\frac{8663}{3}}}{25})x+\frac{61}{25}+\frac{\sqrt{\frac{8663}{3}}}{50}$, and therefore the equation of the lower boundary of the strip $U$ is $y=(\frac{61}{50}+\frac{\sqrt{\frac{8663}{3}}}{25})x-(\frac{61}{25}+\frac{\sqrt{\frac{8663}{3}}}{50})$ and $x(a) =  \frac{\sqrt{25989}}{83} - \frac{50}{83} < 1.34$.
The bisector between $c^*$ and $t_{q^+}$ is the parabola $P_{t_{q^+}}$:

 \[ x^2 + y^2 = \frac{((\frac{61}{50} + \frac{\sqrt{\frac{8663}{3}}}{25})x - y + (\frac{61}{50} + \frac{\sqrt{\frac{8663}{3}}}{50}))^2}{(\frac{61}{50} + \frac{\sqrt{\frac{8663}{3}}}{25})^2 + 1} \ , \]
and the bisector between $c^*$ and $l_1$ is the parabola $P_{l_1}$: $x^2 - 2y - 1 = 0$.
Finally,

$ x(b) = \frac{1}{283}\left(50 + \sqrt{25989} + \sqrt{2(54289 + 50\sqrt{25989})}\right) > 1.99 $,
so clearly $x(a) < x(b)$.
\end{proof}

First we prove that $d^{l+}$ does not intersect $d$. Let $(a,b)$ denote the center of $d^{l+}$; notice that $a\le 0$ and $b\ge 0$. Since $d^{l+}$ is tangent to $\ell_1$ and has $p^l=(-0.5,0)$ and $q^+=(-0.5,1.83)$ on its boundary, the following equations hold:
\begin{enumerate}[leftmargin=1.5\parindent]
	\item $r(d^{l+})=b+1$,
	\item $(a + 0.5 )^2 + b^2=  r(d^{l+})^2$,
	\item $(a+0.5)^2 + (b - 1.83)^2 =  r(d^{l+})^2$.
\end{enumerate}

These equations solve to: $ a = - \frac{1}{2} - \frac{ \sqrt{283 } }{10}, b= \frac{183}{200}.$

	\begin{figure}[hbt]
		\centering
		\includegraphics[scale=0.55]{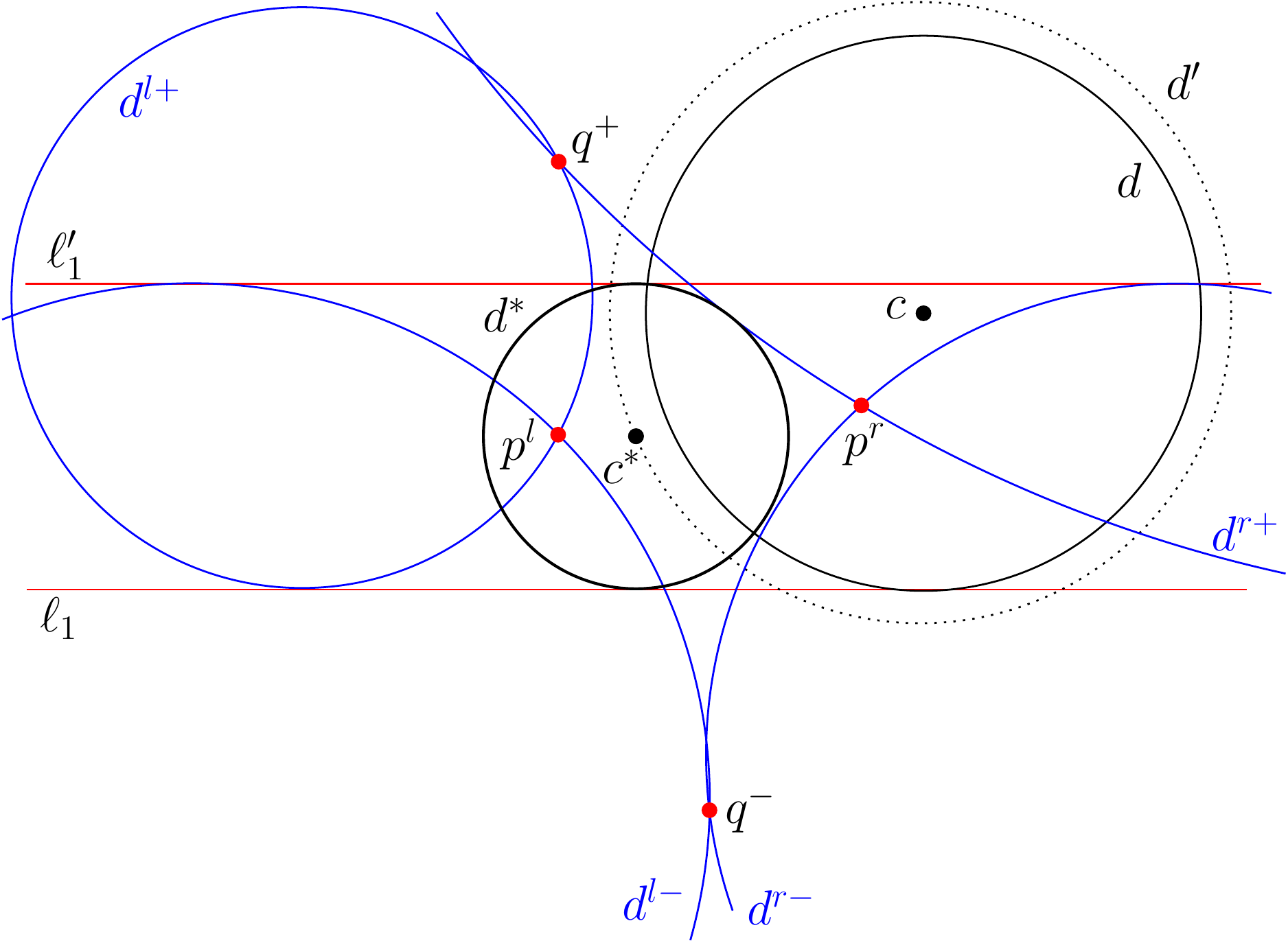}
		\caption{Illustration of $d$, $d^{r+}$, $d^{l+}$, $d^{l-}$, and $d^{r-}$ when $\alpha> 17^{\circ}$ and $y(c) \ge 0$.}
		\label{fig:SmallerThan2Case2}
	\end{figure}

Let $d'$ be the disk of radius 2 that has $c^*$ on its boundary and has its center $c'$ on the ray from $c^*$ to $c$,
i.e., its center is on a line through the origin of angle $\alpha $ with the $x$-axis.
We claim that if $d^{l+}$ does not intersect $d'$, then it also does not intersect $d$.
To verify this claim we consider two cases:
(i) $c$ is on the line segment $c^*c'$
(ii) $c'$ is on the line segment $c^*c$.
In case (i) our claim holds because $d'$ contains $d$.
In case (ii) observe that $c$ is located below the parabola $x^2 - 2 y =1$ (because otherwise $c$ is closer to $c^*$ than to $\ell_1$).
This also implies that the angle $\alpha$ is less than $30^{\circ}$ (because $y(c) \leq 1$).
Therefore, the center of $d^{l+}$ is closer to $c'$ than to $c$, and since the $r(d') \geq r(d)$ our claim holds.

Let $d''$ be the disk of radius 2 that has $c^*$ on its boundary and has its center $c''=(\sqrt{3},1)$,
i.e., its center is on a line through origin that makes angle $30^{\circ}$ with the $x$-axis.
We claim that if $d^{l+}$ does not intersect $d''$, then it also does not intersect $d'$. See Figure~\ref{fig:SmallerThan2Case22} for illustration.
To verify this claim we notice that the center of $d^{l+}$ is closer to $c''$ than to $c' $
(the bisector between $c''$ and $c'$  goes through $c^*$ and thus the center of $d^{l+}$ is closer to $c''$),
and thus our claim holds.

\begin{figure}[hbt]
	\centering
	\includegraphics[scale=0.5]{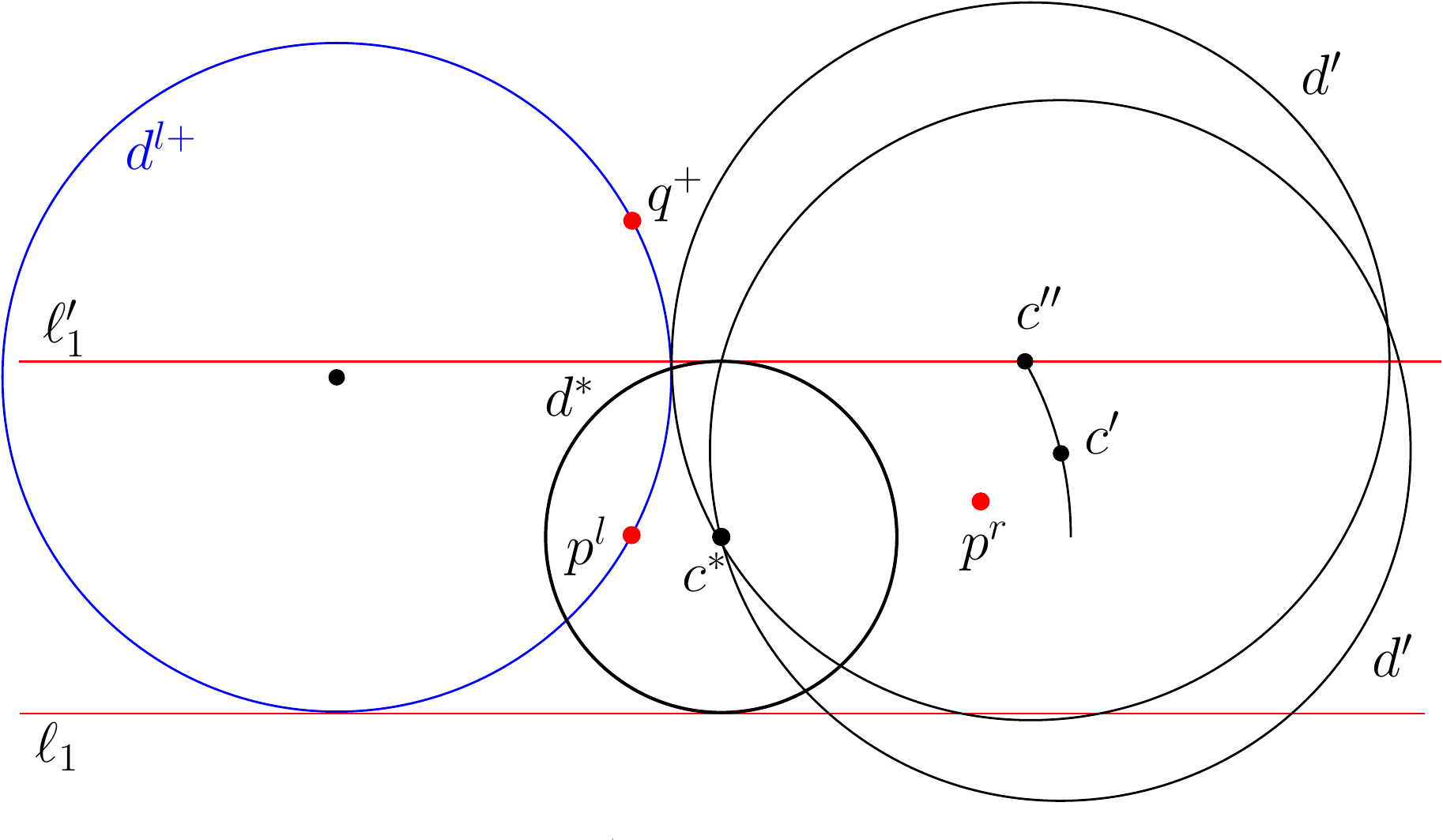}
	\caption{Illustration of $d'$, $d''$ and  $d^{l+}$, when $\alpha\geq 17^{\circ}$.}
	\label{fig:SmallerThan2Case22}
\end{figure}

By the above claims, it suffices to show that $d^{l+}$ does not intersect $d''$,
that is, the distance between their centers more than the sum of their radii:
$ (a - \sqrt{3})^2 + (b -1)^2 > (2 + (b+1) )^2$. By plugging $a$ and $b$ we get:
	\[  \left(- \frac{1}{2} - \frac{ \sqrt{283 } }{10} - \sqrt{3}   \right)^2 + \left( \frac{183}{200} - 1  \right)^2  > \left( 2 + \frac{183}{200} + 1 \right)^2.\]
Therefore, $d^{l+}$ does not intersect $d''$.

Now we prove that $d^{l-}$ does not intersect $d$.
Let $(a,b)$ denote the center of $d^{l-}$; notice that $a\le 0$ and $b\le 0$.
Since $d^{l-}$ is tangent to $\ell'_1$ and has $p^l=(-0.5,0)$ and $q^-=(0.5, -2.5)$ on its boundary, the following equations hold:
\begin{enumerate}[leftmargin=1.5\parindent]
	\item $r(d^{l-})=|b-1|$,
	\item $(a + 0.5 )^2 + b^2=  r(d^{l-})^2$,
	\item $(a-0.5)^2 + (b +2.5)^2 =  r(d^{l-})^2$.
\end{enumerate}

These equations solve to: $ a = - \frac{9}{10} - \frac{ \sqrt{ {203}/{2} } }{5}, b= - \frac{161}{100} - \frac{ \sqrt{ 406 } }{25}.$

	Let $d''$ be a disk of radius 2 that has $c^*$ on its boundary and its center $c''=(2,0)$.
	First we show that if $d^{l-}$ does not intersect $d''$ it does not intersect $d$.
	let $d'$ be a disk of radius 2 tangent to $c^*$ whose center is on the ray from $c^*$ through $c$.
	Notice that if $d^{l-}$ intersects $d$ it must intersect $d'$. To see this consider the two cases with respect to the location of $c$.
	If $c$ is between $c^*$ and $c'$ (i.e., $c \in \overline{c^* c}$), then $d \subseteq d'$.
		Else ($c \notin \overline{c^* c}$) since $r(c) \leq r(c')=2$, $y(c) \leq 1$ and $(a,b)$ (the center of $d^{l-}$) is closer to $c'$ than to $c$, we have that
			if $d^{l-}$ intersects $d$ it must intersect $d'$.
	Now consider the bisector $bis(c',c'')$ between $c'$ and $c''$ and observe that $bis(c',c'')$ goes through $c^*$ and above $(a,b)$.
	The latter follows from the fact the angle between $bis(c',c'')$ and the $x$-axis is at most $15^{\circ}$ (since the ray from $c^*$ to $c$ is of angle at most $30^{\circ}$).

	Now we show that $d^{l-}$ does not intersect $d''$, that is the distance between their centers is more than the sum of their radii:
	$ (a - 2)^2 + (b - 0)^2 > (2 + (|b-1|) )^2$.
	%

	\[  \left(- \frac{9}{10} - \frac{ \sqrt{{203}/{2} } }{5} -2 \right)^2 + \left(- \frac{161}{100} - \frac{ \sqrt{406} }{25}\right)^2 >
	\left(2 +\frac{161}{100} + \frac{ \sqrt{ 406 } }{25} + 1 \right)^2.\]

	Therefore, $d^{l-}$ does not intersect $d''$.

	Now we prove that $d^{r+}$ does not intersect $d^*$. Let $(a,b)$ denote the center of $d^{r+}$; notice that $a > 1$ and $b > 1$.
	Since $d^{r+}$ is tangent to $\ell_1$ and has $p^r=(1/2 + 2\sqrt{6}/5,1/5)$ and $q^+=(-0.5, 1.83)$ on its boundary, the following equations hold:
	\begin{enumerate}[leftmargin=1.5\parindent]
		\item $r(d^{r+})=b+1$,
		\item $(a - (1/2+2\sqrt{6}/5))^2 + (b-1/5)^2=  r(d^{r+})^2$,
		\item $(a+0.5)^2 + (b -1.83)^2 =  r(d^{r+})^2$.
	\end{enumerate}

	These equations solve to:

	\begin{align}
	\noindent \notag a&= \frac{10075 + 5660 \sqrt{6} + \sqrt{8490 \left(46169 + 8000 \sqrt{6} \right)} }{8150}  \  > 5.836    \qquad    (a \approx 5.83662)\\
	\notag b&= \frac{13292307 +3224000 \sqrt{6} +960 \sqrt{ 1415 \left(46169 + 8000 \sqrt{6} \right) }+ 400 \sqrt{8490 \left(46169 + 8000 \sqrt{6}\right)  } }{5313800}\\
	\notag &   \  <  7.51   \qquad  (b \approx 7.50912)
	\end{align}

	Now we show that $d^{r+}$ does not intersect $d^*$, that is the distance between their centers is more than the sum of their radii.
	\[ (a - 0)^2 + (b - 0)^2 > (1 + (b+1) )^2 \]
		\[ a^2 - 4 - 4b > 0  \]
		\[ 5.836^2 -4 - 4* 7.51  > 0 \]




	Consider disk $d^{r-}$, and let $(a,b)$ be its center  (recall that $a > 1 $ and $b < -1$).
	Since $d^{r-}$ is tangent to $\ell'_1$ and has $p^r=(1/2 + 2\sqrt{6}/5,1/5)$ and $q^-=(0.5, -2.5)$ on its boundary, the following equations hold:
	\begin{enumerate}[leftmargin=1.5\parindent]
		\item $r(d^{r-})= |b-1|$,
		\item $(a - \frac{1}{2} - \frac{2 \sqrt{6}}{5} )^2 + (b - 1/5)^2=  r(d^{r-})^2$,
		\item $(a - 0.5)^2 + (b + 2.5)^2 =  r(d^{r-})^2$.
	\end{enumerate}
	Solving these equations we get that
	\[ a =  \frac{1}{54} (27 + 28 \sqrt{6} + 2 \sqrt{2310}) > 3.52 , \qquad  b=  \frac{-1393}{972}  - \frac{8 \sqrt{385}}{243} > -2.08  . \]

	Now we show that $d^{r-}$ does not intersect $d^*$, that is the distance between their centers is more than the sum of their radii.

	\[  ( a - 0 )^2 + (b - 0)^2  > ( 1  - (b  - 1) )^2 \]
	\[  a^2   - 4 + 4b > 0 \]
  \[    3.52^2 - 4  + 4 *(-2.08) >0 \]





\subsubsection{$ \alpha > 17^{\circ}$ and $y(c)< 0$}
\label{alpha-big-y-negative}



	This case is identical under reflection with respect to the $x$-axis to Section~\ref{alpha-big-y-positive}.


%
%
\subsection{Proof of Case $2<\rmin<4$}
\label{secDminTwoToFour}

Recall the disk $\dmin\in D^-$ of radius $\rmin$. Since $\rmin>2$, any disk in $D^-$ has radius larger than $2$, and since $\rmin<4$, $\dmin \in D^-_{\le k}$, for any $k\ge 4$.
In each of the following three claims, we pick a disk $d'$ from a set $D^-_{\le k}$, for some $k \ge 4$. In these claims we \textbf{do not} assume the {\color{mycolor}base setting}; rather we assume, after a suitable rotation, that the center of $d'$ lies on the positive $x$-axis.
In this section we slightly abuse the notation and for a disk $e$ we denote by $x(e)$ and $y(e)$ the $x$-coordiante and $y$-coordinate of the center of $e$, respectively.

\begin{claim}
	\label{cl:R5moreThanHalf}
	Let $d'$ be the disk in $D^{-}_{ \leq 5}$ with maximum $\delta(d')$, and assume that its center lies on the positive $x$-axis.
	If $\delta(d') \geq 0.5$, then $\{(0,0), (2,0), (0.4, 2), (0.4, -2)\}$ stabs $D$.
\end{claim}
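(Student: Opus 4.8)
The plan is to show that the four points $p_0=(0,0)$, $p_r=(2,0)$, $q^+=(0.4,2)$, and $q^-=(0.4,-2)$ stab every disk of $D$. As in the five-point and $\rmin\le 2$ proofs, I would first dispose of the disks in $D\setminus D^-$, which by definition contain $c^*=(0,0)=p_0$, and then concentrate on disks $e\in D^-$. The key structural fact I intend to exploit is the hypothesis $\delta(d')\ge 0.5$: since $d'$ is the disk of $D^-_{\le 5}$ of \emph{maximum} $\delta$, every disk $e\in D^-$ with $r(e)\le 5$ satisfies $\delta(e)\le\delta(d')$, which combined with $r(e)>2$ (from $\rmin>2$) constrains how close the boundary of $e$ can come to $c^*$. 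Concretely, $\delta(d')\ge 0.5$ means all the relevant disks intersect the concentric disk of radius $\delta(d')$ about $c^*$, so their centers are pushed outward and their radii are large relative to the gap, which is exactly the configuration in which the geometric observations apply.

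The main body of the argument I would organize as a case analysis on the location of the center $c$ of $e$ relative to the vertical strip $V$ bounded by the lines $x=x(p_0)$-region and the line through $p_r$, together with the half-planes defined by $\ell_1$ and $\ell_1'$ (every disk in $D$ meets $\ell_1$, and every disk centered above/below the $x$-axis additionally meets $\ell_1'$, by the same elementary argument used in Section~\ref{secDminLessThanTwo}). For centers near the positive $x$-axis I would use $p_r=(2,0)$; since $d'$'s center is on the positive $x$-axis with large radius and $\delta(d')\ge 0.5$, an application of Observation~\ref{radius} (with $d'$ or a suitable witness disk tangent to $\ell_1$ playing the role of $\delta$) forces $e$ to contain $p_r$. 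For centers in the upper (resp.\ lower) half-plane I would symmetrically route the disk into $q^+$ (resp.\ $q^-$) via Observation~\ref{b}, Corollary~\ref{smaller}, or Observation~\ref{Obs:5Plus}, using the fact that the $y$-coordinate $\pm 2$ of these points is calibrated so that a disk of radius $>2$ meeting $\ell_1$ and $\ell_1'$ and failing to reach $p_0$ or $p_r$ must reach them. The leftover region near $c^*$ is handled by $p_0$ directly, as any $e\in D^-$ not caught by the other points has its center close enough to the origin that $r(e)>2$ makes $c^*\in e$.

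To make the observations fire I expect to define two or three witness disks analogous to $d^{l+},d^{r+}$ of the previous section: a disk tangent to $\ell_1$ passing through $p_r$ and $q^+$, and its mirror through $q^-$, plus possibly a disk through $p_0$ and $q^+$ tangent to $\ell_1$. I would then verify by explicit coordinate computation that these witness disks have radius small enough (equivalently, that their centers are positioned so) that any $e\in D^-$ avoiding the four points would have to be smaller than the witness, contradicting $r(e)>2$ or $\delta(e)\le\delta(d')$. The main obstacle, as in the $\rmin\le 2$ case, is precisely this bookkeeping: proving that the specific numeric coordinates $0.4$ and $\pm 2$ are simultaneously compatible with all cases, i.e.\ that no disk slips between the coverage regions of $p_0,p_r,q^+,q^-$. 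Establishing the upper half-plane sub-case—showing $q^+=(0.4,2)$ lies above the mutual tangent of $d^*$ and the relevant disk so that Observation~\ref{Obs:5Plus} applies—is the delicate step and will require a dedicated claim and a short computation, exactly parallel to Claim~\ref{cl:q_plus_above1}.
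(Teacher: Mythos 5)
Your plan has two genuine gaps. First, a frame-of-reference problem: Claim~\ref{cl:R5moreThanHalf} is proved in the \emph{alternative} setting, where the scene is rotated so that the center of $d'$ lies on the positive $x$-axis; after this rotation the lines $\ell_1$ and $\ell_1'$ are tangents of unknown slope, so statements like ``every disk centered above the $x$-axis meets $\ell_1'$'' and witness disks ``tangent to $\ell_1$ through $p_r$ and $q^+$'' are not available --- you cannot simultaneously normalize $\ell_1$ to be horizontal and put the center of $d'$ on the positive $x$-axis. The paper's proof of this claim never touches $\ell_1$ or $\ell_1'$; the only auxiliary lines it uses are the line through $(2,0)$ and $(0.4,2)$ and tangent lines of explicitly computed disks.

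Second, and more importantly, you misidentify how the hypothesis $\delta(d')\ge 0.5$ enters. The property you invoke (maximality of $\delta(d')$, hence every disk of $D^-_{\le 5}$ meets the concentric disk of radius $\delta(d')$ about $c^*$) is the engine of Claims~\ref{cl:R5moreThan02} and~\ref{cl:R20lessThan05}, not of this one; note that it says nothing about disks of radius larger than $5$, so it cannot rule out, say, a disk of radius $6$ centered far in the second quadrant that barely reaches $d^*$ and misses all four points. What actually forces such disks onto $q^+=(0.4,2)$ is pairwise intersection with $d'$ itself: because $r(d')\le 5$ and $\delta(d')\ge 0.5$, any disk meeting $d'$ must also meet the \emph{fixed} disk $d''$ of radius $5$ centered at $(5.5,0)$, so every disk of $D$ must reach a region far to the right of the origin. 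The paper's left-side case ($x(e)\le 0$) then runs a bespoke argument --- the mutual tangent $t'$ of $d^*$ and $d''$, the parallel and perpendicular lines through $q^+$, the bisector parabola between $q^+$ and $d^*$, and the arc of the radius-$2$ disk through $(0,0)$ and $(0.4,2)$ --- concluding that $e$ contains $q^+$ alone. The right-side case ($x(e)>0$) needs neither $d'$ nor any tangent-to-$\ell_1$ witness: a radius-$2$ disk through $(2,0)$ and $(0.4,2)$ shown disjoint from $d^*$, plus Observations~\ref{outside-strip} and~\ref{radius} with $\eta=d^*$ and the facts $r(e)>2$, $c^*\notin e$, already suffices. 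Without the ``every disk must reach $d''$'' step, your case analysis leaves the large left-side disks uncovered, so the proposal as written does not close.
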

\begin{proof} Any disk in $D\setminus D^-$ contains $(0,0)$. Thus, we prove the claim for $D^-$. We claim for any disk $e \in D^{-}$ that if $y(e)\ge 0$ then $e$ contains
	$(2,0)$ or $(0.4, 2)$, otherwise $e$ contains
	$(2,0)$ or $(0.4, -2)$.
	We prove our claim for $y(e)\ge 0$, since the proof for $y(e) < 0$ is symmetric. Notice that the center of $d'$ is $(r(d') + \delta(d'),0)$. Depending on the sign of $x(e)$ we consider two cases.

	\paragraph*{$x(e) \le 0$.}
		Let $d_{2}$ be the disk of radius 2, with negative $x(d_2)$, that has $(0, 0)$ and $(0.4, 2)$ on its boundary.	If $(x_2,y_2)$ denotes the center of $d_{2}$, then we have the following equations
		\[ \qquad  x_{2}^2 + y_{2}^2 = 4, \qquad  (x_2 -0.4)^2 +(y_2 - 2)^2 = 4,\]
		which solve to  $x_2 = \frac{1}{5} - \sqrt{ {37}/{13}}$ and $y_2 = 1 + \frac{ \sqrt{ {37}/{13}}}{5}$.

		Since $r(d') \le 5$ and $0.5 \le \delta(d') \le 1$, it is easy to verify that the  distance between the centers of $d_2$ and $d'$ is greater than the sum of their radii, that is
		\[\left(x_2 - (r(d') + \delta(d')) \right)^2 + (y_2 - 0)^2 > (2+ r(d'))^2.\]
		Therefore, $d_2$ does not intersect $d'$.  We use this fact to show below that any disk $e \in D^-$ with $y(e) \ge 0$ and $x(e) \leq 0$ (that
		intersects both $d'$ and $d^*$) must contain the point $q = (0.4,2)$.

\begin{figure}[hbt]
	\centering
	\includegraphics[scale=0.6]{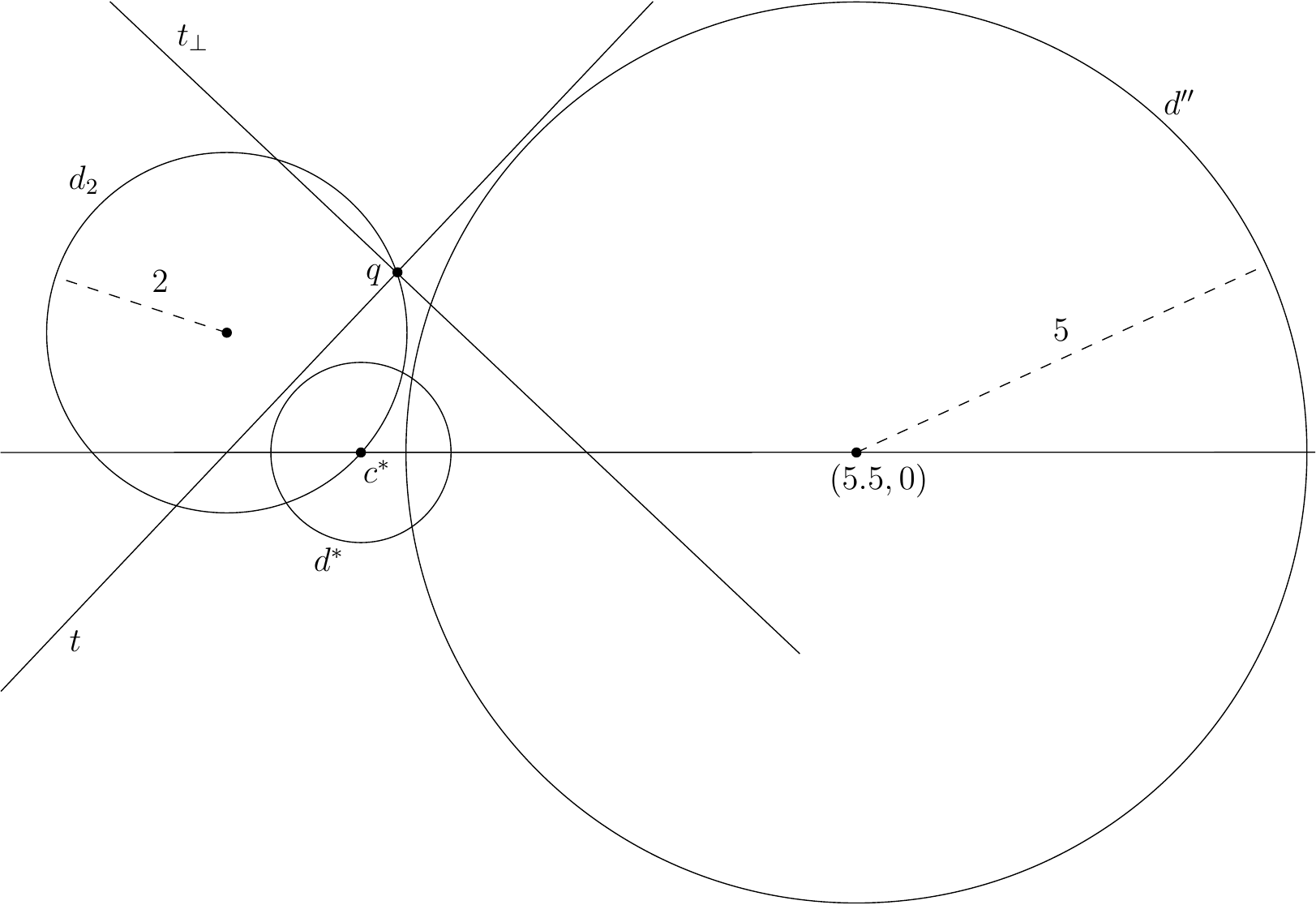}
	\caption{Proof of Claim~\ref{cl:R5moreThanHalf}, $c(e)$ in the second quadrant.}
	\label{fig:r_min2-4_claim1}
\end{figure}

		Let $d''$ be the disk of radius 5 centered at $(5.5,0)$ (see Figure~\ref{fig:r_min2-4_claim1}). Notice that since $e$ intersects $d'$ it must also intersect $d''$.
		Let $t'$ be the (relevant) mutual tangent to $d''$ and $d^*$, i.e., $t': y = \frac{8}{\sqrt{57}}x + \frac{11}{\sqrt{57}}$. It is easy to verify that the point $q$ lies above $t'$. Now, let $t$ be the line parallel to $t'$ that passes through $q$, let $t_{\bot}$ be the line perpendicular to $t$ that passes through $q$, and let $P=P(q,d^*)$ be the bisector between $q$ and $d^*$.

		We first observe that in the halfplane to the left of the $y$-axis, $t_{\bot}$ is above  $P$. This can be verified by showing that for any point $p$ on $t_{\bot}$ with $x(p) \le 0$, we have that the distance between $p$ and $q$ is less than the distance between $p$ and $d^*$ (which is the distance between $p$ and $c^*$ minus 1).

	 If $c(e)$ is above $t_{\bot}$ then clearly $e$ contains $q$ (since $e$ intersects $d^*$), so assume that $c(e)$ is below $t_{\bot}$. Our assumptions on $e$ ($e \in D^-$, $r(e) > 2$, $c(e)$ is in the second quadrant) imply that $c(e)$ is above $t$.  Now, if $e$ intersects $t$ at a point above $q$, then by the triangle inequality $e$ contains $q$. Otherwise, since $e$ intersects $d''$, $e$ must intersect the clockwise arc of $d_2$ from $c^*$ to $q$. But, then $e$ contains $q$ (since $r(e) > r(d_2) =2$ and $c^* \notin e$).

		\paragraph*{$x(e) > 0$.}
		Let $d_{2}$ be the disk of radius 2, with $x(d_2) > 0.4$, that has $(2, 0)$ and $(0.4, 2)$ on its boundary.
		If $(x_2,y_2)$ denotes the center of $d_{2}$, then we have the following equations
		\[ \qquad  (x_{2} - 2)^2 + y_{2}^2 = 4, \qquad  (x_2 -0.4)^2 +(y_2- 2)^2 = 4,\]
		which solve to  $x_2 = \frac{6}{5} +  \sqrt{ {59}/{41}}$ and $y_2 = 1 + \frac{ 4 \sqrt{ {59}/{41}}}{5}$.

		The distance between the centers of $d_2$ and $d^*$ is greater than the sum of their radii, that is $(x_2)^2 + (y_2 )^2 > (2 + 1)^2$,
		which implies that $d_2$ does not intersect $d^*$. This, together with some of our geometric observations, implies that $e$ contains point $(2,0)$ or point $(0.4, 2)$.
		To see this, let $l$ be the line through the points $(2,0)$ and $(0.4, 2)$.
		Let $V$ be the strip that is defined by the two lines perpendicular to $l$ through $(2,0)$ and $(0.4,2)$, respectively.
		Notice that $d^*$ is contained in $V$ below $l$.
    If $c(e)$ is outside $V$ and above $l$, then by Observation~\ref{outside-strip} we have that $e$ contains $(2,0)$ or $(0.4,2)$.
    If $c(e)$ is in $V$ and above $l$, then by Observation~\ref{radius} we have that $e$ contains $(2,0)$ or $(0.4,2)$.
		Finally, if  $c(e)$ is below $l$ (recall that $c(e)$ is in the first quadrant), then, since $r(e) > 2$ and $c^* \notin e$, $e$ contains $(0.4,2)$.
		%
\end{proof}

\begin{claim}
	\label{cl:moreThanHalf}
	Assume that $D^{-}_{ \leq 5} $ does not contain any disk with $\delta(\cdot)$ greater or equal to $0.5$. Let $d'$ be the disk in $D^{-}_{ \leq 20}$ with maximum $\delta(d')$,
	and assume that its center lies on the positive $x$-axis.
	If $\delta(d') \geq 0.5$, then $\{(0,0), (2,0), (-0.15, 2.7), (-0.15, -2.7)\}$ stabs $D$.
\end{claim}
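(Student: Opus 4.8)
The plan is to follow the template of the proof of Claim~\ref{cl:R5moreThanHalf} almost verbatim, adjusting the constants for the enlarged radius range of $d'$ and the new stabbing point $q:=(-0.15,2.7)$. As before, every disk of $D\setminus D^-$ contains $(0,0)$, so I only need to stab $D^-$, and since $S$ is symmetric about the $x$-axis it suffices to prove that every $e\in D^-$ with $y(e)\ge 0$ contains $(2,0)$ or $q$. Two facts frame the argument. First, because $d'$ meets $d^*$ we always have $\delta(d')\le 1$, while the hypothesis $\delta(d')\ge 0.5$ together with the assumption that no disk of $D^-_{\le 5}$ has $\delta\ge 0.5$ forces $5<r(d')\le 20$; the upper bound is what makes the radius-$20$ disk the correct canonical comparison (rather than radius $5$ as in Claim~\ref{cl:R5moreThanHalf}), and the enlarged range is why $q$ must be lifted. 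Second, the center of $d'$ is at $(r(d')+\delta(d'),0)$, and by pairwise intersection every $e$ we consider meets $d'$.

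I split on the sign of $x(e)$. For $x(e)>0$ I mimic the right-hand case of Claim~\ref{cl:R5moreThanHalf}: let $d_2$ be the radius-$2$ disk through $(2,0)$ and $q$ whose center has large $x$-coordinate, check by a single distance computation that $d_2$ does not meet $d^*$, and then, letting $\ell$ be the line through $(2,0)$ and $q$ and $V$ the slab bounded by the normals to $\ell$ at these two points, apply Observation~\ref{outside-strip} (for $c(e)$ outside $V$ and above $\ell$) and Observation~\ref{radius} (for $c(e)$ inside $V$ and above $\ell$); the case $c(e)$ below $\ell$ is immediate from $r(e)>2$ and $c^*\notin e$, which forces $q\in e$.

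The substance is the case $x(e)\le 0$. Here I let $d_2$ be the radius-$2$ disk through $(0,0)$ and $q$ with center in the left half-plane, and verify the separation inequality $(x_2-(r(d')+\delta(d')))^2+y_2^2>(2+r(d'))^2$, which must now hold uniformly for $5<r(d')\le 20$ and $0.5\le\delta(d')\le1$; because both $q$ and the origin lie to the left of $d'$'s leftmost reach $x=\delta(d')\ge 0.5$, the left side of this inequality is monotone in $r(d')$, so only the extreme admissible radius needs to be checked. I then introduce the canonical disk $d''$ of radius $20$ centered at $(20.5,0)$, note that any $e$ meeting $d'$ also meets $d''$ (this is where $r(d')\le 20$ is used, via the fact that the left arc of $d'$ lies inside $d''$), and let $t'$ be the upper common tangent of $d''$ and $d^*$; a direct computation shows $q$ lies above $t'$, which is the quantitative reason $q$ was raised to height $2.7$. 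Letting $t$ be the line through $q$ parallel to $t'$, $t_\bot$ the perpendicular to $t$ at $q$, and $P=P(q,d^*)$ the bisector of $q$ and $d^*$, I verify that $t_\bot$ lies above $P$ for $x\le 0$, and then run the same three-way argument as in Claim~\ref{cl:R5moreThanHalf}: if $c(e)$ is above $t_\bot$ then $q\in e$; otherwise $c(e)$ lies below $t_\bot$ but above $t$ (using $e\in D^-$, $r(e)>2$, and $c(e)$ in the second quadrant), and either $e$ crosses $t$ above $q$, whence $q\in e$ by the triangle inequality, or $e$ meets the relevant arc of $d_2$, whence $q\in e$ because $r(e)>r(d_2)=2$ and $c^*\notin e$.

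The main obstacle is this last case, and within it the two quantitative facts on which everything hinges: that $d_2$ stays clear of $d'$ over the whole enlarged range up to radius $20$, and that $q=(-0.15,2.7)$ still lies above the now markedly steeper common tangent $t'$ of $d^*$ with the radius-$20$ disk $d''$. These two constraints pull in opposite directions, so the precise coordinates of $q$ are essentially the unique compromise satisfying both, and making them hold simultaneously for every admissible $d'$ is the delicate part; the remaining bisector and arc arguments are identical in spirit to those already carried out for Claim~\ref{cl:R5moreThanHalf}.
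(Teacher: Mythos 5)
Your treatment of the half-plane $x(e)\le 0$ matches the paper's, but your $x(e)>0$ case contains a genuine error: you claim that, as in Claim~\ref{cl:R5moreThanHalf}, ``a single distance computation'' shows the radius-$2$ disk $d_2$ through $(2,0)$ and $q=(-0.15,2.7)$ (with center on the far side) misses $d^*$. It does not. Solving the two circle equations gives the center
$x_2 = \frac{37}{40} + \frac{27\sqrt{327/953}}{20} \approx 1.716$, $y_2 = \frac{27}{20} + \frac{43\sqrt{327/953}}{40}\approx 1.980$,
so $\sqrt{x_2^2+y_2^2}\approx 2.62 < 3 = r(d_2)+r(d^*)$; the disk $d_2$ \emph{does} intersect $d^*$. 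This is exactly the price of lifting $q$ from $(0.4,2)$ to $(-0.15,2.7)$: the positive-side comparison disk swings close to the origin, and the argument via Observation~\ref{radius} with $\eta=d^*$ collapses. So your single-case treatment of $x(e)>0$ cannot be completed as written.

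The paper repairs this by splitting $x(e)>0$ into two subcases, and this is where the hypothesis ``no disk of $D^-_{\le 5}$ has $\delta(\cdot)\ge 0.5$'' is used a second time --- not merely to force $r(d')>5$ as in your proposal, but applied to the \emph{generic} disk $e$. If $r(e)\ge 5$, one takes the radius-$5$ comparison disk through $(2,0)$ and $q$ (center $\approx(4.60,4.27)$, distance $\approx 6.28>6$ from the origin), which does miss $d^*$, and then Observations~\ref{outside-strip} and~\ref{radius} apply with $\eta=d^*$ and $r(e)\ge 5$. If $r(e)<5$, then $e\in D^-_{\le 5}$, so by hypothesis $\delta(e)<0.5$, meaning $e$ intersects the disk $d_{0.5}$ of radius $0.5$ centered at $c^*$; one then reruns your intended argument with $\eta=d_{0.5}$ instead of $d^*$, and the separation check succeeds because $2.62 > 2.5 = r(d_2)+r(d_{0.5})$. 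Without this two-way split (or some equivalent use of the $D^-_{\le 5}$ hypothesis on $e$ itself), the right-hand case of your proof fails.
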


\begin{proof}

Any disk in $D\setminus D^-$ contains $(0,0)$. Thus, we prove the claim for $D^-$. We claim for any disk $e \in D^{-}$ that if $y(e)\ge 0$ then $e$ contains
  $(2,0)$ or $(-0.15, 2.7)$, otherwise $e$ contains
	$(2,0)$ or $(-0.15, -2.7)$.
	We prove our claim for $y(e)\ge 0$, since the proof for $y(e) < 0$ is symmetric. Notice that the center of $d'$ is $(r(d') + \delta(d'),0)$. We consider the following three cases.

\paragraph*{\item $x(e) \le 0$.}
    Let $d_{2}$ be the disk of radius 2, with negative $x(d_2)$, that has $(0, 0)$ and $(-0.15, 2.7)$ on its boundary.
		If $(x_2,y_2)$ denotes the center of $d_{2}$, then we have the following equations
		\[ \qquad  x_{2}^2 + y_{2}^2 = 4, \qquad  (x_2 + 0.15)^2 +(y_2 - 2.7)^2 = 4,\]
		which solve to  $x_2 = - \frac{3}{40} - \frac{9 \sqrt{ {139}/{13}}}{20}$ and $y_2 = \frac{27}{20} + \frac{ \sqrt{ {139}/{13}}}{40}$.

		Since $r(d') \le 20$ and $0.5 \le \delta(d') \le 0.5$,
		it is easy to verify that the  distance between the centers of $d_2$ and $d'$ is greater than the sum of their radii,
		that is  $(x_2 - (r(d') + \delta(d')) )^2 + y_2^2 > (2 + r(d'))^2$.

		Therefore, $d_2$ does not intersect $d'$.  As above, we use this fact to show that any disk $e \in D^-$ with $y(e) \ge 0$ and $x(e) \leq 0$ (that
		intersects both $d'$ and $d^*$) must contain the point $q= (-0.15,2.7)$.

			Let $d''$ be the disk of radius 20 centered at $(20.5,0)$. Notice that since $e$ intersects $d'$ it must also intersect $d''$.
		Let $t'$ be the (relevant) mutual tangent to $d''$ and $d^*$, i.e., $t': y = \frac{38}{\sqrt{237}}x + \frac{41}{\sqrt{237}}$. It is easy to verify that the point $q$ lies above $t'$. Now, let $t$ be the line parallel to $t'$ that passes through $q$, let $t_{\bot}$ be the line perpendicular to $t$ that passes through $q$, and let $P=P(q,d^*)$ be the bisector between $q$ and $d^*$.

		We first observe that in the halfplane to the left of the $y$-axis, $t_{\bot}$ is above  $P$. This can be verified by showing that for any point $p$ on $t_{\bot}$ with $x(p) \le 0$, we have that the distance between $p$ and $q$ is less than the distance between $p$ and $d^*$ (which is the distance between $p$ and $c^*$ minus 1).

	 If $c(e)$ is above $t_{\bot}$ then clearly $e$ contains $q$ (since $e$ intersects $d^*$), so assume that $c(e)$ is below $t_{\bot}$. Our assumptions on $e$ ($e \in D^-$, $r(e) > 2$, $c(e)$ is in the second quadrant) imply that $c(e)$ is above $t$.  Now, if $e$ intersects $t$ at a point above $q$, then by the triangle inequality $e$ contains $q$. Otherwise, since $e$ intersects $d''$, $e$ must intersect the clockwise arc of $d_2$ from $c^*$ to $q$. But, then $e$ contains $q$ (since $r(e) > r(d_2) =2$ and $c^* \notin e$).

		\paragraph*{$x(e) > 0$ and $r(e) \ge 5$.}
		Let $d_{5}$ be the disk of radius 5, with positive $x(d_5)$, that has $(2, 0)$ and $(-0.15, 2.7)$ on its boundary.
		If $(x_5,y_5)$ denotes the center of $d_{5}$, then we have
		\[ \qquad  (x_{5} - 2)^2 + y_{5}^2 = 25, \qquad  (x_{5} + 0.15)^2 +(y_{5} - 2.7)^2 = 25,\]
		which solve to  $x_5 = \frac{37}{40} + \frac{ 243 \sqrt{ {87}/{953}}}{20}$ and $y_5 = \frac{27}{20} + \frac{ 387 \sqrt{ {87}/{953}}}{40}$.

		Since the distance between the centers of $d_5$ and $d^*$ is greater than the sum of their radii, i.e., $(x_5)^2 + (y_5 )^2 > (1 + 5)^2$, these disks do not intersect.
		Thus, using observations~\ref{outside-strip} and~\ref{radius} as above, we have that $e$ contains point $(2,0)$ or point $(-0.15, 2.7)$.

		\paragraph*{$x(e) > 0$ and $r(e) < 5$.}
		Recall that $r(e) > 2$. Since $e\in D^-_{\le 5}$, by the claim's assumption we have that $\delta(e) < 0.5$.
		Using an idea similar to the one used in the previous case, we show that $e$ contains $(2,0)$ or $(-0.15, 2.7)$.
		Let $d_{2}$ be the disk of radius 2, with positive $x(d_2)$, that has $(2, 0)$ and $(-0.15, 2.7)$ on its boundary.
		If $(x_2,y_2)$ denotes the center of $d_{2}$, then
		\[ \qquad  (x_{2} - 2)^2 + y_{2}^2 = 4, \qquad  (x_{2} + 0.15)^2 +(y_{2} - 2.7)^2 = 4,\]
		which solve to  $x_2 = \frac{37}{40} + \frac{ 27 \sqrt{ {327}/{953}}}{20}$ and $y_2 = \frac{27}{20} + \frac{ 43 \sqrt{ {327}/{953}}}{40}$.

		We claim that $d_{2}$ does not intersect the disk of radius $0.5$ and center $c^*$, which we denote by $d_{0.5}$.
		This claim can be verified by showing that the distance between the centers of these disks is greater than the sum of their radii,
		that is, $x_{2}^2 + y_{2}^2 > (0.5 + 2)^2$.
		Thus, again, using observations~\ref{outside-strip} and~\ref{radius} we have that $e$ contains point $(2,0)$ or point $(-0.15, 2.7)$.	
     %
\end{proof}


The proofs of the following two claims use similar arguments and can be found in Section~\ref{sec:app_B}.

\begin{claim}
	\label{cl:R5moreThan02}
	Assume that $D^{-}_{ \leq 20} $ does not contain any disk with $\delta(\cdot)$ at least  $0.5$.
	Let $d'$ be the disk in $D^{-}_{ \leq 5}$ with maximum $\delta(d')$, and assume that its center lies on the positive $x$-axis.
	If $0.11  \le \delta(d') < 0.5$, then $\{(0,0), (2,0), (-0.15, 1.75), (-0.15, -1.75)\}$ stabs $D$.
\end{claim}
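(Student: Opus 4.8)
\textbf{Plan of proof for Claim~\ref{cl:R5moreThan02}.}
The overall strategy mirrors Claims~\ref{cl:R5moreThanHalf} and~\ref{cl:moreThanHalf} exactly: after rotating so that $d'$'s center lies on the positive $x$-axis, I note that any disk in $D \setminus D^-$ contains $(0,0)$, so it suffices to handle $e \in D^-$. By the up/down symmetry of $S$ about the $x$-axis, I treat only $y(e) \ge 0$ and show that such an $e$ contains $(2,0)$ or $q=(-0.15,1.75)$. The center of $d'$ is $(r(d')+\delta(d'),0)$, and the claim's hypotheses give $2 < r(d') \le 5$ and $0.11 \le \delta(d') < 0.5$; crucially, the assumption that $D^-_{\le 20}$ has no disk with $\delta(\cdot) \ge 0.5$ means $\delta(e) < 0.5$ for \emph{every} $e \in D^-_{\le 20}$, which I will use to bound how far the boundary of a medium disk sits from $c^*$.

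As before I would split on the sign of $x(e)$. For $x(e) \le 0$ the plan is the ``arc-trapping'' argument of the previous claims: introduce the radius-$2$ disk $d_2$ through $(0,0)$ and $q$ with $x(d_2)<0$, solve the two circle equations for its center, and verify that $d_2$ is disjoint from $d'$ (using $r(d') \le 5$ and $\delta(d') \ge 0.11$ to bound the center-distance from below). Then set $d''$ to be the radius-$5$ disk centered at $(r(d')+\delta(d'),0)$ reaching back toward $c^*$ (or the extremal radius-$5$ disk forced by $e$ intersecting $d'$), take the relevant mutual tangent $t'$ of $d''$ and $d^*$, check $q$ lies above $t'$, and introduce the parallel line $t$ and perpendicular $t_\bot$ through $q$ together with the bisector $P=P(q,d^*)$. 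The argument then runs identically: if $c(e)$ is above $t_\bot$ then $e \ni q$ since $e$ meets $d^*$; otherwise $c(e)$ lies above $t$ by the constraints $e \in D^-$, $r(e)>2$, and $x(e)\le 0$, and either $e$ meets $t$ above $q$ (triangle inequality) or $e$ meets the clockwise arc of $d_2$ from $c^*$ to $q$ and hence contains $q$ because $r(e)>r(d_2)=2$ and $c^* \notin e$.

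For $x(e) > 0$ I expect to again subdivide on $r(e)$, as in Claim~\ref{cl:moreThanHalf}. When $r(e) \ge 5$ I would build the radius-$5$ disk $d_5$ through $(2,0)$ and $q$ with $x(d_5)>0$, solve for its center, and show it is disjoint from $d^*$ (distance between centers exceeds $1+5$); then Observations~\ref{outside-strip} and~\ref{radius}, applied to the line $\ell$ through $(2,0)$ and $q$ and the strip $V$ perpendicular to it, give that $e$ contains $(2,0)$ or $q$, with the case $c(e)$ below $\ell$ handled by $r(e)>2$ and $c^*\notin e$. When $2 < r(e) < 5$, so $e \in D^-_{\le 5}$, the claim's own hypothesis forces $\delta(e) < 0.11$ (since $d'$ is the maximizer of $\delta$ over $D^-_{\le 5}$ and we are in the branch $\delta(d')<0.5$, but more sharply the case structure of the algorithm means a disk of this size has $\delta$ no larger than $\delta(d')<0.5$); here I would construct the radius-$2$ disk $d_2$ through $(2,0)$ and $q$ and show it avoids the disk $d_{0.11}$ of radius $0.11$ about $c^*$, then invoke Observations~\ref{outside-strip} and~\ref{radius} once more.

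The main obstacle will be the $x(e)\le 0$ case, specifically pinning down the exact extremal configuration of $d'$ (smallest permissible $\delta(d')$ and the interplay of $r(d') \le 5$) that makes $d_2 \cap d' = \emptyset$ and $q$ above $t'$ hold simultaneously: unlike the $\delta(d')\ge 0.5$ regimes, here $\delta(d')$ can be as small as $0.11$, so the separation margins are thin and the tangent line $t'$ sits close to $q$. I would therefore verify these two numerical inequalities at the worst-case endpoints $\delta(d')=0.11$, $r(d')=5$ and confirm the geometric containments are not tight. The remaining sub-cases reduce to the now-routine strip/radius observations, so the plan is essentially to reuse the established template and carefully discharge the threshold arithmetic.
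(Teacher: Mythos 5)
Your overall plan coincides with the paper's proof of this claim: the same reduction to $e \in D^-$ with $y(e) \ge 0$, the same three-way split ($x(e)\le 0$; $x(e)>0$ with $r(e)\ge 5$; $x(e)>0$ with $r(e)<5$), the same arc-trapping argument for $x(e)\le 0$ (radius-2 disk $d_2$ through $(0,0)$ and $q$, auxiliary radius-5 disk $d''$ at $(5.11,0)$, tangent $t'$, lines $t$ and $t_\bot$, bisector $P(q,d^*)$), and the same use of Observations~\ref{outside-strip} and~\ref{radius} with radius-5 and radius-2 comparison disks when $x(e)>0$.

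However, there is a genuine error in your last sub-case ($x(e)>0$, $2<r(e)<5$). You assert that the hypotheses force $\delta(e)<0.11$ and accordingly take the inner disk to be $d_{0.11}$, the disk of radius $0.11$ about $c^*$. This is false: the hypotheses only give $\delta(e)\le\delta(d')<0.5$, and $\delta(d')$ itself lies in $[0.11,0.5)$, so there may be disks $e\in D^-_{\le 5}$ with $\delta(e)>0.11$ (for instance, any disk whose $\delta$ is close to $\delta(d')$ when $\delta(d')>0.11$). Such a disk does not meet $d_{0.11}$ at all, and then Observations~\ref{outside-strip} and~\ref{radius} cannot be invoked, since both require the disk $\epsilon$ to intersect $\eta$. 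You have the two thresholds playing reversed roles: $0.11$ is a \emph{lower} bound on $\delta(d')$ (the paper uses it to push $d'$, and the auxiliary disk $d''$ centered at $(5.11,0)$, away from the left-hand copy of $d_2$), whereas $0.5$ is the \emph{upper} bound on $\delta(\cdot)$ valid for every disk in $D^-_{\le 20}$. The paper's proof of this sub-case therefore takes $\eta = d_{0.5}$, the disk of radius $0.5$ centered at $c^*$: every $e\in D^-_{\le 5}$ satisfies $\delta(e)<0.5$ by the claim's hypothesis and hence meets $d_{0.5}$, and one then verifies numerically that the radius-2 disk through $(2,0)$ and $(-0.15,1.75)$ with positive $x$-center stays clear of $d_{0.5}$, i.e., $x_2^2+y_2^2>(2+0.5)^2$. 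With this single correction (replace $d_{0.11}$ by $d_{0.5}$ and check that inequality), your plan matches the paper's argument.
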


Claims \ref{cl:R5moreThanHalf}, \ref{cl:moreThanHalf}, and \ref{cl:R5moreThan02} imply that if $D^-_{\le 5}$ contains a disk with $\delta(\cdot) \ge 0.11$, or if $D^-_{\le 20}$ contains a disk with $\delta(\cdot) \ge 0.5$, then there exists a set of four points that stabs $D$. It remains to prove our claim for the case where every disk in $D^-_{\le 5}$ has $\delta(\cdot) < 0.11$ and every disk in $D^-_{\le 20}$ has $\delta(\cdot) < 0.5$. To that end, we assume the {\color{mycolor} base setting}.


In what follows we show that in this case $D$ is stabbed by $(0,0),(2.5,1),(-2.5,1),$ and $(0,-1.52)$.
Here we assume w.l.o.g. that  one of the tangents ($\ell_1$) is as described in Section~\ref{secDminBig}. See Figure~\ref{fig:fin}.

\begin{figure}[hbt]
	\centering
	\includegraphics[scale=0.4]{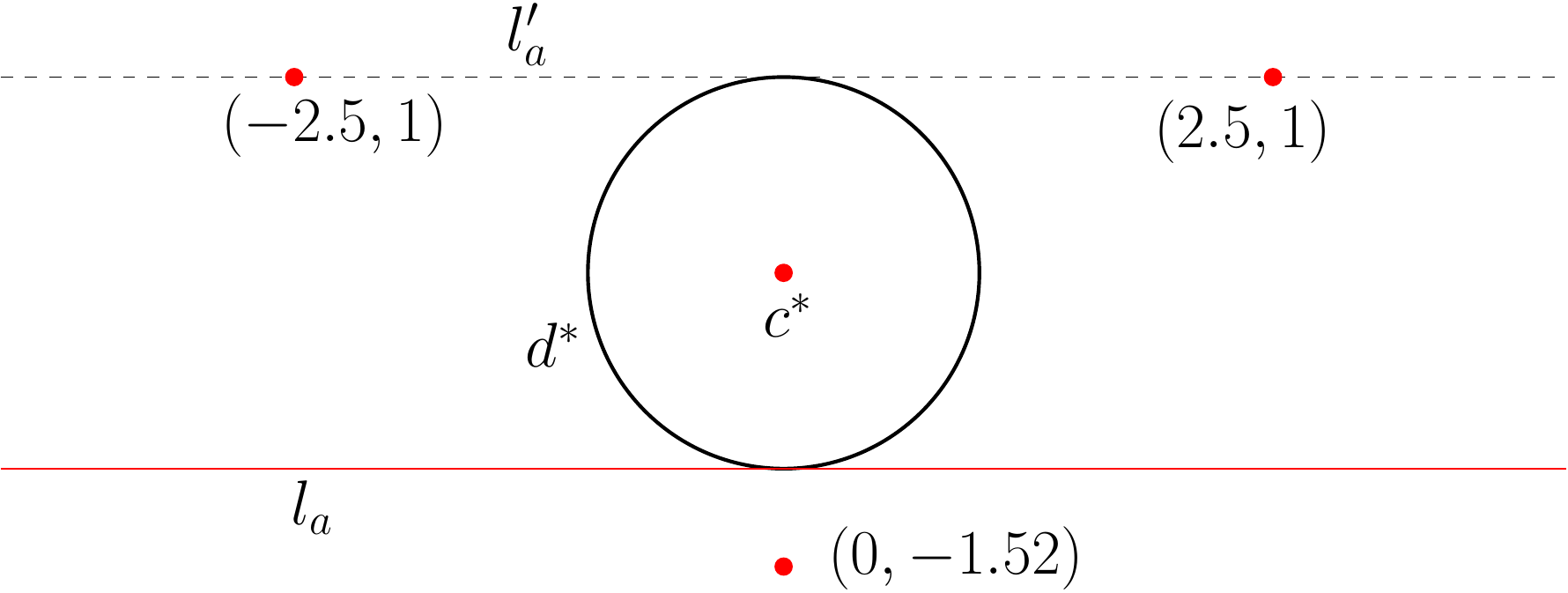}
	\caption{The 4 piercing points in this case are depicted in red.}
	\label{fig:fin}
\end{figure}

\begin{claim}
\label{cl:R20lessThan05}
	If for every disk  $e_1 \in D^{-}_{ \leq 5}$ we have $\delta(e_1) < 0.11$ and for every disk $e_2 \in D^{-}_{ \leq 20}$ we have $\delta(e_2) < 0.5$, then
	the set $\{(0,0),(2.5,1),(-2.5,1),(0,-1.52)\}$ stabs $D$.
\end{claim}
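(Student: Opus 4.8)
The plan is to handle the final, hardest sub-case of the $2<\rmin<4$ analysis, where every disk in $D^-_{\le 5}$ has $\delta(\cdot)<0.11$ and every disk in $D^-_{\le 20}$ has $\delta(\cdot)<0.5$. The point of these hypotheses is that they force every ``moderate-sized'' disk in $D^-$ to have its boundary passing extremely close to $c^*$: a disk of radius at most $5$ reaches within $0.11$ of the origin, and a disk of radius at most $20$ reaches within $0.5$. Combined with $\rmin>2$ (so no disk in $D^-$ is smaller than radius $2$), this essentially pins down how near the four tangent lines $\ell_1,\ell_2,\ell_3$ and their reflections the centers of all disks can sit. I would therefore work entirely in the \emph{base setting}, where $x_1=(0,-1)$, $\ell_2$ has positive slope, $\ell_3$ has negative slope, and the largest angle of $\Delta$ is at $\ell_2\cap\ell_3$; this is exactly the configuration of Figure~\ref{fig:fin}, with the four stabbing points $(0,0)$, $(2.5,1)$, $(-2.5,1)$, and $(0,-1.52)$.

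The overall strategy mirrors the earlier proofs: every disk in $D\setminus D^-$ contains $c^*=(0,0)$, so I only need to treat $D^-$. For a disk $e\in D^-$ with center $c(e)$, I would partition the argument by the location of $c(e)$ relative to the three ``wedges'' cut out by the stabbing points and the tangent lines, and in each region apply one of the elementary geometric observations (Observation~\ref{outside-strip}, Observation~\ref{radius}, Observation~\ref{b}, and Corollary~\ref{smaller}) to a carefully chosen auxiliary disk $\delta$ whose boundary passes through two of the four stabbing points and is tangent to the relevant line. The near-tangency forced by the $\delta(\cdot)<0.11$ and $\delta(\cdot)<0.5$ hypotheses is what lets me bound $r(e)$ below by the radius of the auxiliary disk: because each moderate disk's boundary hugs $c^*$, the disk is large enough (relative to where its center must lie, given that it meets all of $d_1,d_2,d_3$) that the observation's radius hypothesis $r(\epsilon)\ge r(\delta)$ is satisfied. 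I would split by whether $y(c(e))\ge 0$ or $y(c(e))<0$, then within each half by the sign of $x(c(e))$ and by whether the center lies inside or outside the vertical strips whose boundaries pass through the stabbing points, so that the ``outside the strip'' cases reduce to Observation~\ref{outside-strip} and the ``inside the strip'' cases to a radius comparison.

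The quantitative heart of the argument is to set up, for each region, the correct auxiliary disk and verify the separating inequality ``distance between centers exceeds sum of radii,'' just as in Claims~\ref{cl:R5moreThanHalf} and~\ref{cl:moreThanHalf}. Concretely, for a center in the upper-left region I would take the disk tangent to $\ell_1$ (or $\ell'_1$) with $(-2.5,1)$ and one of the top/bottom points on its boundary, solve the resulting two or three equations for its center, and check it does not intersect the small disk of radius $0.11$ (resp.\ $0.5$) around $c^*$ that every relevant disk $e$ must meet; the asymmetry of the point $(0,-1.52)$ versus $(0,\pm1.7)$ in the earlier cases reflects that here the bottom point must also respect the specific geometry of $\ell_2,\ell_3$ meeting at the widest angle. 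The symmetry across the $y$-axis (from swapping the roles of $\ell_2$ and $\ell_3$, hence of $(2.5,1)$ and $(-2.5,1)$) should let me prove only the $x(c(e))\le 0$ side and invoke reflection for the other.

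The main obstacle I anticipate is not any single observation but the bookkeeping of choosing thresholds so that the regions tile the plane without gaps and so that in every region the radius comparison actually holds. In particular, the delicate cases are centers lying near the origin but outside all three strips (where one must argue that a disk meeting all of $d_1,d_2,d_3$ while avoiding $c^*$ is forced to contain a nearby stabbing point), and centers in the bottom wedge near $(0,-1.52)$, where the asymmetry of that point's $y$-coordinate must be tuned precisely against the worst-case tangent-line geometry allowed by ``largest angle at $\ell_2\cap\ell_3$.'' I expect that the hypotheses $\delta(\cdot)<0.11$ on $D^-_{\le 5}$ and $\delta(\cdot)<0.5$ on $D^-_{\le 20}$ are each invoked in exactly one of these delicate cases, and confirming that these two bounds are simultaneously sufficient — rather than needing a third intermediate threshold — is where the case analysis is genuinely subtle.
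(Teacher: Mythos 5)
Your proposal has the right ingredients in spirit --- restrict attention to $D^-$, use the hypotheses to force disks to meet a small disk around $c^*$, and apply Observations~\ref{outside-strip} and~\ref{radius} to auxiliary disks passing through pairs of stabbing points --- but it misses the one idea that actually makes the paper's proof work: a decomposition by \emph{radius} of $e$, not only by the position of its center. The paper fixes the single pair of points $a=(2.5,1)$, $b=(0,-1.52)$ (for centers with positive $x$-coordinate; the other side is symmetric) and runs the same argument three times, once per radius tier. For $2 < r(e) < 5$, the hypothesis on $D^-_{\le 5}$ guarantees that $e$ meets the disk $d^*_{0.11}$ of radius $0.11$ centered at the origin, and the auxiliary disk is $d_2$, the disk of radius exactly $2$ through $a$ and $b$ with center below the line $ab$, which is checked to avoid $d^*_{0.11}$; since $r(e) \ge 2 = r(d_2)$, Observations~\ref{outside-strip} and~\ref{radius} apply with $\eta = d^*_{0.11}$. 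For $5 \le r(e) < 20$ one repeats this with $d^*_{0.5}$ and the radius-$5$ disk $d_5$ through $a,b$; for $r(e) \ge 20$ one uses $d^*$ itself (which every disk of $D$ meets) and the radius-$20$ disk $d_{20}$ through $a,b$. In particular the auxiliary disks are \emph{not} tangent to $\ell_1$ or $\ell'_1$ as you propose --- that construction belongs to the $\rmin \le 2$ case --- and their radii are chosen to equal the lower endpoint of each tier precisely so that the radius hypothesis of Observation~\ref{radius} is automatic.

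This coupling of tiers to central disks is forced, so your plan of invoking each hypothesis ``in exactly one of the delicate [spatial] cases'' cannot be repaired into a proof: for any single auxiliary disk $\delta$ through $a$ and $b$, either $r(\delta)$ is small, and then $\delta$ cannot avoid the larger central disks that the big disks of $D^-$ are merely guaranteed to meet (a disk of radius $100$ in $D^-$ need satisfy neither hypothesis, only $\delta(\cdot) \le 1$, so it is only guaranteed to meet $d^*$), or $r(\delta)$ is large enough to avoid $d^*$, and then the comparison $r(e) \ge r(\delta)$ fails for disks of radius just above $2$. Your proposal also reverses the logic of the hypotheses: $\delta(e) < 0.11$ does not make $e$ ``large enough'' for the radius comparison --- the radius lower bounds come from $\rmin > 2$ and from the tier thresholds $5$ and $20$ --- it only pins $e$'s boundary near $c^*$ so that the correct $\eta$ can be chosen. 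Finally, the two remaining spatial cases are dispatched cheaply in the paper: centers above the line through $(-2.5,1)$ and $(2.5,1)$ are handled exactly as in the $\rmin \ge 4$ case, and the leftover central region is covered outright by the union of the radius-$2$ disks around the four stabbing points, since every disk in $D^-$ has radius greater than $2$.
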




\section{Geometric Observations}
\label{sec:geom_observations}


\begin{figure}
	\begin{center}
		\includegraphics[width=.5\textwidth]{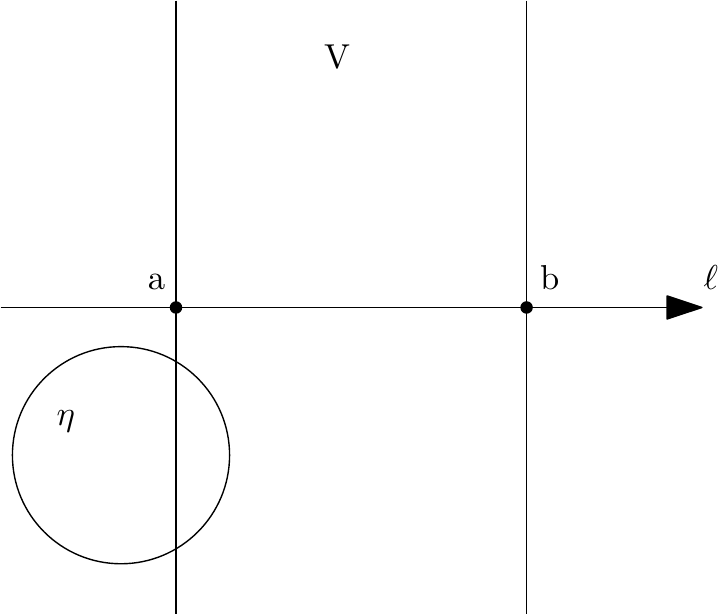}
	\end{center}
	\caption{The setting for Observations 1-3.}
	\label{fig:V}
\end{figure}

In the following three observations, $\ell$ is a line and $a$ and $b$ are two points on $\ell$.
We think of $\ell$ as a directed line, such that $a$ precedes $b$ along $\ell$. If an object lies to the left (right) of $\ell$ (when traversing $\ell$),
then we say that it lies ``above'' (``below'') $\ell$.
Let $V$ be the strip that is defined by the two lines perpendicular to $\ell$ through $a$ and $b$, respectively.
Finally, let $\eta$ be a disk with center below $\ell$, such that (i) $\eta \cap \ell = \emptyset$ or $\eta$ is tangent to the line segment $ab$,
and (ii) $\eta \cap V \neq \emptyset$; see Figure~\ref{fig:V}.

\begin{Observation}\label{obs:obs1}
	Let $\epsilon$ be a disk with center above $\ell$.
	If $\epsilon \cap (\eta \cap V) \neq \emptyset$, then $\epsilon$ intersects the segment $ab$.
\end{Observation}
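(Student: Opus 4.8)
The plan is to reduce to coordinates and exploit the single structural feature that actually matters: the center of $\epsilon$ lies on the opposite side of $\ell$ from the witnessing point. First I would fix a coordinate system in which $\ell$ is the $x$-axis oriented in the direction from $a$ to $b$, so that ``above'' is the open upper half-plane and ``below'' the open lower half-plane. Writing $a=(\alpha,0)$ and $b=(\beta,0)$ with $\alpha\le\beta$, the strip becomes $V=\{(x,y):\alpha\le x\le\beta\}$ and the segment $ab$ is exactly $[\alpha,\beta]\times\{0\}$. The hypotheses on $\eta$ (center below $\ell$, together with condition~(i)) guarantee that $\eta$ is contained in the closed lower half-plane $\{y\le 0\}$; this is the only property of $\eta$ I will use.

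Next I would pick a witness. Since $\epsilon\cap(\eta\cap V)\neq\emptyset$, there is a point $p$ lying in all three of $\epsilon$, $\eta$, and $V$. From $p\in\eta$ I get $y(p)\le 0$, and from $p\in V$ I get $\alpha\le x(p)\le\beta$. The natural candidate for a point of $\epsilon\cap ab$ is the vertical projection $p^*=(x(p),0)$: it automatically lies on $ab$, since its $x$-coordinate is in $[\alpha,\beta]$ and its $y$-coordinate is $0$. So the entire statement reduces to showing $p^*\in\epsilon$.

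The key step, which I would carry out by a one-line distance comparison, is that projecting $p$ vertically onto $\ell$ moves it \emph{toward} the center of $\epsilon$. Let $c_\epsilon=(c_x,c_y)$ be the center of $\epsilon$ and $R$ its radius; by hypothesis $c_y>0$. Since $y(p)\le 0<c_y$, we have $|c_y-y(p)|\ge |c_y-0|$, and hence
\[
|p^*-c_\epsilon|^2=(x(p)-c_x)^2+c_y^2\le (x(p)-c_x)^2+(y(p)-c_y)^2=|p-c_\epsilon|^2\le R^2 .
\]
Thus $p^*\in\epsilon$, and therefore $p^*\in\epsilon\cap ab$, which is exactly the claim.

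I expect the only real subtlety to be recognizing the correct point to test. The tempting approach --- join $p$ to the center of $\epsilon$ and invoke convexity to produce a crossing of $\ell$ --- fails, because that crossing point can have $x$-coordinate outside $[\alpha,\beta]$ and so need not lie on the segment $ab$. Replacing the segment-to-center by the \emph{vertical} projection repairs this: the vertical projection keeps the $x$-coordinate fixed inside the strip, while the fact that the center of $\epsilon$ sits above $\ell$ forces the projected point to remain inside $\epsilon$. Note that neither the tangency clause of~(i) nor condition~(ii) is needed here; all that is used is $\eta\subseteq\{y\le 0\}$ and the existence of a witness guaranteed by $\epsilon\cap(\eta\cap V)\neq\emptyset$.
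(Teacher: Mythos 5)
Your proof is correct. The paper actually states Observation~\ref{obs:obs1} without any proof at all --- it is one of the elementary observations the authors take as given --- so there is no argument of theirs to compare against; your write-up simply supplies the missing justification, and it does so soundly. The two load-bearing steps both check out: condition (i) together with the center of $\eta$ lying below $\ell$ does force $\eta$ into the closed half-plane $\{y\le 0\}$ (either $\eta$ misses $\ell$ entirely, or it touches $\ell$ only at a tangency point on $ab$), and the vertical-projection inequality $(x(p)-c_x)^2+c_y^2\le (x(p)-c_x)^2+(y(p)-c_y)^2$ is immediate from $y(p)\le 0<c_y$. Your closing remark is also well taken: projecting $p$ toward the center of $\epsilon$ would only give a point of $\epsilon\cap\ell$, not of $\epsilon\cap ab$, whereas the vertical projection preserves membership in the strip $V$, which is exactly what makes the argument close.
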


\begin{Observation}\label{outside-strip}
	Let $\epsilon$ be a disk with center above $\ell$ and not in $V$.
	If $\epsilon \cap (\eta \cap V) \neq \emptyset$, then $\epsilon$ contains $a$ or $b$.
\end{Observation}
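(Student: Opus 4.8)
The plan is to reduce to Observation~\ref{obs:obs1} and then apply a simple monotonicity argument. First I would set up coordinates so that $\ell$ is the $x$-axis directed to the right, with $a=(a_x,0)$, $b=(b_x,0)$ and $a_x<b_x$; then ``above'' means $y>0$, and the strip $V$ is $\{(x,y):a_x\le x\le b_x\}$. Write the center of $\epsilon$ as $(c_x,c_y)$; the hypotheses give $c_y>0$ (center above $\ell$) and $c_x<a_x$ or $c_x>b_x$ (center outside $V$).

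Next I would invoke Observation~\ref{obs:obs1}: since $\epsilon\cap(\eta\cap V)\neq\emptyset$, the disk $\epsilon$ meets the segment $ab$, so it contains some point $p=(p_x,0)$ with $a_x\le p_x\le b_x$. This is the only place the disk $\eta$ and the hypothesis $\epsilon\cap(\eta\cap V)\neq\emptyset$ enter; from here on only the point $p$ is used.

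Finally I would compare the distances from the center $(c_x,c_y)$ to the relevant endpoint and to $p$, all three of the latter lying on the $x$-axis. The squared distance from the center to a point $(t,0)$ equals $(t-c_x)^2+c_y^2$, which is increasing in $|t-c_x|$. In the case $c_x<a_x$ we have $0\le a_x-c_x\le p_x-c_x$, so the distance from the center to $a$ is at most its distance to $p$, which is at most $r(\epsilon)$; hence $a\in\epsilon$. In the case $c_x>b_x$ we have $0\le c_x-b_x\le c_x-p_x$, and the same computation gives $b\in\epsilon$. Either way $\epsilon$ contains $a$ or $b$, as claimed.

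I do not expect a genuine obstacle here: the entire content is captured by Observation~\ref{obs:obs1} together with the monotonicity of the distance from a fixed point to points moving along a line. The only things to be careful about are that the intersection point $p$ supplied by Observation~\ref{obs:obs1} genuinely lies between $a$ and $b$ (which is exactly what ``$\epsilon$ intersects the segment $ab$'' asserts, placing $p$ inside the strip), and that both sides of the strip must be handled, with each side selecting the nearer endpoint ($a$ when the center is to the left, $b$ when it is to the right).
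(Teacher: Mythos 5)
Your proof is correct. The paper states this observation without proof, treating it as elementary, and your derivation---invoke Observation~\ref{obs:obs1} to obtain a point $p$ of $\epsilon$ on the segment $ab$, then note that the endpoint on the same side as the center of $\epsilon$ is at least as close to that center as $p$ is---is precisely the intended chain of reasoning, with all hypotheses of Observation~\ref{obs:obs1} verified and both sides of the strip handled.
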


\begin{Observation}\label{radius}
	Let $\delta$ be a disk tangent to $a$, $b$, and $\eta$,
	and let $\epsilon$ be a disk with center above $\ell$ and in $V$ and with radius at least the radius of $\delta$.
	If $\epsilon \cap \eta \neq \emptyset$, then $\epsilon$ contains $a$ or $b$.
\end{Observation}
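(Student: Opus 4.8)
The plan is to prove the contrapositive: assuming $\epsilon$ contains neither $a$ nor $b$, I will show that $\epsilon\cap\eta=\emptyset$, which contradicts the hypothesis $\epsilon\cap\eta\neq\emptyset$. I set up coordinates so that $\ell$ is the $x$-axis directed rightward; then $a=(a_1,0)$ and $b=(b_1,0)$ with $a_1<b_1$, the strip $V$ is $\{a_1\le x\le b_1\}$, and $\eta$ lies (weakly) below the axis with $a,b$ and $\delta$ above. A convenient first reduction is to replace $\eta$ by the closed halfplane $H$ that supports it at the point $p$ where $\delta$ is tangent to $\eta$: let $T$ be the common tangent line at $p$ and let $H$ be the closed side of $T$ away from $\delta$, so that $\eta\subseteq H$. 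It then suffices to prove $\epsilon\cap H=\emptyset$. Since $\delta$ is tangent to $\partial\eta$ at $p$, it is tangent to the line $T$ there as well, so the \emph{same} disk $\delta$ (now viewed as the radius-$R$ disk through $a,b$ tangent to $T$, where $R:=r(\delta)$) witnesses the hypothesis for $H$. Thus I may assume $\eta=H$ is a halfplane whose bounding line $T$ is tangent to $\delta$ at $p$.

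Next I record the key chord fact. The center of $\epsilon$ is $(x_0,y_0)$ with $y_0>0$ (above $\ell$) and $a_1\le x_0\le b_1$ (in $V$). If $\epsilon$ meets $\ell$, it does so in a chord $[x_L,x_R]$ with $x_L<x_0<x_R$; the assumption $a,b\notin\epsilon$ then forces $a_1<x_L$ and $x_R<b_1$, so the $\ell$-chord of $\epsilon$ lies strictly inside the segment $ab$. Since the $\ell$-chord of $\delta$ is exactly $[a_1,b_1]$, the endpoints $(x_L,0)$ and $(x_R,0)$ of $\epsilon$'s chord are interior points of a chord of $\delta$ and hence lie strictly inside $\delta$. (If $\epsilon$ does not reach $\ell$ at all, it lies in the closed upper halfplane and meets the lower region containing $\eta$ in at most a single boundary point; I treat this degenerate situation separately, as indicated below.)

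The crux is to upgrade this to the statement that $\epsilon$ stays strictly on $\delta$'s side of $T$, equivalently that the distance from $(x_0,y_0)$ to $T$ strictly exceeds $\rho:=r(\epsilon)$. Since $\delta$ is tangent to $T$, the distance from its center to $T$ equals $R\le\rho$, so what must be shown is that the center of $\epsilon$ is even farther from $T$ than the center of $\delta$ is, driven by the fact that $\epsilon$ has a strictly narrower $\ell$-chord while its radius is at least $R$. I would establish this by a direct distance computation after rotating so that $T$ is horizontal: among all disks whose $\ell$-chord is contained in $[a_1,b_1]$ and whose radius is at least $R$, the depth of the cap below $T$ is monotone, decreasing both as the chord narrows and as the radius grows, so shrinking the chord while keeping the radius at least $R$ can only push $\epsilon$ away from $T$. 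Once $\mathrm{dist}((x_0,y_0),T)>\rho$ is in hand, $\epsilon$ is strictly on $\delta$'s side of $T$, whence $\epsilon\cap H=\emptyset$ and therefore $\epsilon\cap\eta=\emptyset$, the desired contradiction. The radius condition $\rho\ge R$ is essential here: without it $\epsilon$ could be a small disk dipping just past $T$ into $\eta$ while avoiding both $a$ and $b$.

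I expect this last step to be the main obstacle. When $\eta$ is not symmetric about the perpendicular bisector of $ab$, the tangent line $T$ is tilted relative to $\ell$, so the circle-versus-line comparison must be carried out in full generality rather than in the clean parallel case; the halfplane reduction is precisely what makes this tractable, turning the problem into a single inequality relating $\rho$, $R$, the chord width, and the inclination of $T$. The tight instances are $\rho=R$ and $[x_L,x_R]\to[a_1,b_1]$, where $\epsilon$ degenerates toward $\delta$, and these are exactly where one must check that the inequality is strict. Finally, I would dispatch the remaining degenerate configurations (where $\epsilon$ meets $\ell$ in at most one point, or where $\eta$ is tangent to the open segment $ab$) by the elementary observation that such an $\epsilon$ can meet $\eta$ only at a single boundary point lying on $\delta$; tracing this back through the chord computation forces the $\ell$-chord of $\epsilon$ to be at least as wide as $[a_1,b_1]$, and hence $a\in\epsilon$ or $b\in\epsilon$ after all.
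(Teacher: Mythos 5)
The paper states this observation without proof (it is one of the ``elementary geometric observations''), so the standard here is simply whether your argument is sound --- and it is not: your halfplane reduction is the flaw. You replace $\eta$ by the closed halfplane $H$ bounded by the common tangent $T$ at the tangency point $p$, and then aim to prove the stronger statement $\epsilon\cap H=\emptyset$, i.e.\ $\mathrm{dist}(c(\epsilon),T)>r(\epsilon)$. That stronger statement is false whenever $T$ is substantially tilted, so no computation can complete your plan. Concretely, take $\ell$ the $x$-axis, $a=(-1,0)$, $b=(1,0)$, and $\delta$ centered at $(0,k)$ with $k>0$ small, so $R=\sqrt{1+k^2}$; let $\eta$ be a small disk externally tangent to $\delta$ at a point $p$ on $\delta$'s lower arc very close to $a$ (all hypotheses on $\eta$ hold: center below $\ell$, disjoint from $\ell$, meets $V$). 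The tangent $T$ at $p$ has slope close to $-1/k$, i.e.\ is nearly vertical near $x=-1$. Now let $\epsilon$ be centered at $(0,y_0)$ with $y_0$ large and $r(\epsilon)=\sqrt{1+y_0^2}-t$ for tiny $t>0$: then $\epsilon$ avoids $a$ and $b$, has center in $V$ above $\ell$, and $r(\epsilon)\ge R$, yet its leftmost point is near $(-y_0,y_0)$, far across $T$ into $H$. So $\epsilon\cap H\neq\emptyset$ even though $\epsilon\cap\eta=\emptyset$: the reduction discards exactly the information (that $\eta$ is a bounded disk pinned at $p$) that makes the observation true. Your proposed monotonicity (``depth of the cap below $T$ decreases as the radius grows'') fails for the same reason: with the $\ell$-chord fixed, enlarging the radius makes the disk bulge arbitrarily far sideways toward a tilted $T$. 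You correctly flagged this step as the main obstacle, but the obstacle is not computational --- the inequality you want is false.

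The good news is that the first half of your argument is the right start and can be completed without ever introducing $T$. You correctly show that if $a,b\notin\epsilon$ and $\epsilon$ meets $\ell$, its $\ell$-chord $[u,v]$ lies strictly inside $(a_1,b_1)$; note also that the half-chord-to-radius ratios satisfy $c_\epsilon/r(\epsilon)<c_\delta/R$ since $c_\epsilon<c_\delta$ and $r(\epsilon)\ge R$. Now use the inscribed-angle characterization: for a disk with center above $\ell$ and $\ell$-chord $PQ$ of half-length $c$ and radius $r$, a point $q$ strictly below $\ell$ lies in the disk iff $\angle PqQ>\pi-\theta$, where $\sin\theta=c/r$. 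For any $q$ below $\ell$ in $\epsilon$ we get $\angle aqb\ge\angle(u,0)\,q\,(v,0)\ge\pi-\theta_\epsilon>\pi-\theta_\delta$ (the first inequality because $[u,v]\subset[a_1,b_1]$ are nested collinear segments, the last because $\sin\theta_\epsilon<\sin\theta_\delta$), so every below-$\ell$ point of $\epsilon$ lies in the \emph{open} lower cap of $\delta$. Since $\delta$ and $\eta$ are externally tangent, $\eta$ misses the interior of $\delta$, and $\eta$'s points on or above $\ell$ are at most a single tangency point of the segment $ab$ (which cannot be interior to $ab$, else $\delta$ could not exist, and if it is $a$ or $b$ it is excluded by assumption); hence $\epsilon\cap\eta=\emptyset$, completing the contrapositive. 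This keeps $\eta$ as a disk throughout, which is precisely what your halfplane step gave away.
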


\old{
	\begin{lemma}\label{paf}
		Let
		\begin{itemize}
			\item $\ell$ be a horizontal line;
			\item $a$ and $b$ be distinct points on $\ell$ with $a$ to the left of $b$;
			\item $V$ be the vertical strip whose left side contains $a$ and whose right side contains $b$;

			\item $\eta$ be a disk with center below $\ell$, such that  $c \cap \ell = \emptyset$ or $\eta$ tangent to the line segment $ab$  and
			$\eta \cap V \neq \emptyset$;
			\item $e $ be a disk with center above $\ell$ that intersects $V\cap \eta$.
		\end{itemize}
		Then $e$ intersects the segment $ab$.
	\end{lemma}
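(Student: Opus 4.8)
The plan is to prove this strip/disk statement by a direct coordinate argument that exhibits one explicit point of $\overline{ab}$ lying inside $\epsilon$. First I would normalize the picture: rotate and translate so that $\ell$ is the $x$-axis with $a$ preceding $b$, i.e. $a=(a_x,0)$ and $b=(b_x,0)$ with $a_x<b_x$. Then ``above'' $\ell$ means positive $y$-coordinate and ``below'' means negative, the strip is $V=\{(x,y):a_x\le x\le b_x\}$, and the segment $\overline{ab}$ is exactly $\ell\cap V=\{(x,0):a_x\le x\le b_x\}$.

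Next I would extract the single structural consequence of hypothesis (i): since $\eta$ has its center below $\ell$ and either misses $\ell$ entirely or is tangent to $\ell$ along $\overline{ab}$, the disk $\eta$ is contained in the closed lower half-plane $\{y\le 0\}$. Now invoke the hypothesis $\epsilon\cap(\eta\cap V)\neq\emptyset$ and fix a witness point $p$ in this triple intersection. From $p\in\eta$ I get $y(p)\le 0$; from $p\in V$ I get $a_x\le x(p)\le b_x$; and of course $p\in\epsilon$.

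The key step is to cut $\epsilon$ by the vertical line through $p$ rather than to join $p$ to the center of $\epsilon$. Writing $\epsilon$ as the disk of radius $R$ about $(c_x,c_y)$ with $c_y>0$ (its center is above $\ell$), the set $\epsilon\cap\{x=x(p)\}$ is the vertical segment $\{x(p)\}\times[c_y-h,\,c_y+h]$, where $h=\sqrt{R^2-(x(p)-c_x)^2}$ is real and nonnegative because $p\in\epsilon$ forces $(x(p)-c_x)^2\le R^2$. Its top endpoint satisfies $c_y+h\ge c_y>0$, while its bottom endpoint satisfies $c_y-h\le y(p)\le 0$ (as $p$ itself lies on this chord). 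Hence the chain $c_y-h\le 0<c_y\le c_y+h$ shows that the chord straddles height $0$, so $(x(p),0)\in\epsilon$. Since $a_x\le x(p)\le b_x$, the point $(x(p),0)$ lies on $\overline{ab}$, and therefore $\epsilon$ meets the segment $ab$, as claimed.

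The step I would flag as the real content is precisely this choice of chord. The tempting shortcut is to connect $p$ (at height $\le 0$) to the center of $\epsilon$ (at height $>0$) and take the crossing with $\ell$; but that crossing inherits an $x$-coordinate lying between $x(p)$ and $c_x$, and since the center of $\epsilon$ need not lie in the slab $V$, the crossing point may fall outside $\overline{ab}$ --- giving only a point of $\ell$, not of the segment. Slicing $\epsilon$ by the vertical line $x=x(p)$ keeps the entire argument pinned at an abscissa already known to lie in $[a_x,b_x]$, which is exactly what forces the conclusion onto the segment. I would also remark that only $\eta\subseteq\{y\le 0\}$ and the above-ness of $\epsilon$'s center are used here, so condition (ii) and the finer ``tangent to the segment $ab$'' part of (i) play no role in this particular observation (they matter for the strengthenings that conclude $\epsilon$ contains $a$ or $b$).
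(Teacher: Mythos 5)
Your proof is correct and complete. Note that the paper itself offers no proof of this statement: it appears (in its current form, as Observation~\ref{obs:obs1}) among the elementary geometric observations of Section~\ref{sec:geom_observations} that are stated without justification, so there is no authorial argument to diverge from. Your vertical-chord argument is exactly the kind of one-line convexity reasoning the authors are implicitly relying on: from a witness $p \in \epsilon \cap \eta \cap V$ you get $y(p) \le 0$ (since the hypotheses on $\eta$ force $\eta \subseteq \{y \le 0\}$) and $x(p) \in [a_x, b_x]$, and slicing $\epsilon$ along $x = x(p)$ shows the chord straddles $\ell$, so $(x(p),0) \in \epsilon \cap \overline{ab}$. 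Your closing remarks are also accurate and worth keeping: joining $p$ to the center of $\epsilon$ would only yield a point of $\ell$, not necessarily of $\overline{ab}$, since the center need not lie in $V$; and indeed only $\eta \subseteq \{y \le 0\}$ and the position of $\epsilon$'s center are used, the hypothesis $\eta \cap V \neq \emptyset$ being subsumed by the nonemptiness of $\epsilon \cap (\eta \cap V)$, while the tangency-on-$\overline{ab}$ refinement matters only for the stronger Observations~\ref{outside-strip} and~\ref{radius} that conclude $\epsilon$ contains $a$ or $b$.
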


	\begin{corollary}\label{outside-strip}
		Let
		\begin{itemize}
			\item $\ell$, $a$, $b$, $\eta$, and $V$ be defined as in Lemma~\ref{paf};
			\item $e$ be a disk with center above $\ell$, not in $V$, and that intersects $V\cap c$.
		\end{itemize}
		Then $e$ contains $a$ or $b$.
	\end{corollary}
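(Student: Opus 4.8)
The plan is to derive Observation~\ref{outside-strip} from Observation~\ref{obs:obs1} together with a single elementary monotonicity fact about disks. The hypotheses of Observation~\ref{outside-strip} are exactly those of Observation~\ref{obs:obs1} with the one extra assumption that the center of $\epsilon$ lies outside the strip $V$. So first I would invoke Observation~\ref{obs:obs1} to conclude that $\epsilon$ meets the segment $ab$, and fix a witness point $p \in \epsilon \cap \overline{ab}$. The whole task then reduces to upgrading ``$\epsilon$ meets $\overline{ab}$'' to ``$\epsilon$ contains an endpoint,'' which is precisely what the ``center not in $V$'' hypothesis should buy us.

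To fix notation I would orient $\ell$ as the $x$-axis in the direction of increasing $x$, so that ``above $\ell$'' means positive $y$-coordinate, and write $a=(x_a,0)$ and $b=(x_b,0)$ with $x_a < x_b$; then $V=\{(x,y): x_a \le x \le x_b\}$. Let $c=(x_c,y_c)$ be the center of $\epsilon$. By hypothesis $y_c > 0$, and since $c \notin V$ we have $x_c < x_a$ or $x_c > x_b$. By the reflection across the vertical line through the midpoint of $\overline{ab}$, which swaps $a$ and $b$ and preserves every hypothesis (it preserves $y$-coordinates, hence the above/below relation, the strip $V$, the disk $\eta$, and the ``not in $V$'' condition), I may assume $x_c > x_b$ and aim to show $b \in \epsilon$.

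The key ingredient is the following monotonicity fact: if a disk of center $c$ and radius $r$ contains a point $p$, then it contains every point $q$ lying in the closed axis-aligned rectangle whose opposite corners are $p$ and $c$, since $|x(q)-x(c)| \le |x(p)-x(c)|$ and $|y(q)-y(c)| \le |y(p)-y(c)|$ together give $|q-c| \le |p-c| \le r$. Writing $p=(x_0,0)$ with $x_a \le x_0 \le x_b$, I would check that $b=(x_b,0)$ lies in the rectangle with corners $p$ and $c$: horizontally we have $x_0 \le x_b < x_c$, and vertically the value $0$ lies between $0$ and $y_c>0$. Hence $b\in\epsilon$, and the reflected case $x_c < x_a$ gives $a\in\epsilon$.

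I expect no serious obstacle here; the substance lies entirely in setting up the reduction cleanly and in recording the rectangle-monotonicity lemma. The one point meriting a little care is verifying that the witness point $p$ returned by Observation~\ref{obs:obs1} has $x(p)\le x_b$, so that $b$ genuinely falls between $p$ and $c$ in the $x$-coordinate; this is automatic because $p$ lies on $\overline{ab}$, whence $x(p)\le x_b<x_c$. I would also emphasize in the write-up that it is exactly the ``$c\notin V$'' hypothesis (forcing $x_c>x_b$, i.e.\ $b$ lying strictly between $c$ and the strip) that places the endpoint $b$ inside the rectangle, and this is what promotes the weak conclusion ``$\epsilon$ meets $\overline{ab}$'' of Observation~\ref{obs:obs1} to the strong conclusion ``$\epsilon$ contains $a$ or $b$.''
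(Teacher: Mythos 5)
Your proposal is correct and takes essentially the paper's own route: the paper states this as an elementary observation without proof, but its earlier draft (the commented-out version you were shown) presents it precisely as a corollary of Observation~\ref{obs:obs1}, which is exactly your reduction, with your rectangle-monotonicity lemma correctly supplying the elementary step that upgrades ``$\epsilon$ meets $\overline{ab}$'' to ``$\epsilon$ contains $a$ or $b$.'' Two cosmetic remarks: the reflection maps $\eta$ to its mirror image rather than fixing it (the hypothesis class is still preserved, so the WLOG stands), and since any witness point of $\epsilon\cap\eta\cap V$ already has $y$-coordinate at most $0$ and $x$-coordinate in $[x_a,x_b]$, your rectangle argument applied to it directly would even bypass Observation~\ref{obs:obs1} altogether.
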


	\begin{corollary}\label{radius}
		Let
		\begin{itemize}
			\item $\ell$, $a$, $b$, $\eta$, and $V$ be defined as in Lemma~\ref{paf};
			\item $d$ be disk tangent to $a$, $b$, and $\eta$;
			\item $e$ be a disk with center in $V$  above $\ell$ whose radius is at least the radius of $d$;
			\item $e \cap \eta \neq \emptyset$.
		\end{itemize}
		Then $e$ contains $a$ or $b$.
	\end{corollary}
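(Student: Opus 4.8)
In what follows let $p_l=(-2.5,1)$, $p_r=(2.5,1)$, and $p_d=(0,-1.52)$ be the three outer points. Since the point set $S$ and the three objects $\ell_1=\{y=-1\}$, $\ell_1'=\{y=1\}$ and $d^*$ that the argument uses are all symmetric about the $y$-axis (a reflection only interchanges $\ell_2$ with $\ell_3$ and $p_l$ with $p_r$, and the argument below never uses $\ell_2,\ell_3$ nor distinguishes $p_l$ from $p_r$), I would first assume that the center $(x,y)$ of the disk $e\in D$ under consideration has $x\ge 0$, and write $\rho=\sqrt{x^2+y^2}$ for its distance to $c^*$. Every disk of $D\setminus D^-$ contains $c^*=(0,0)$, so it suffices to show that each $e\in D^-$ contains $p_r$ or $p_d$. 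The hypotheses force three facts on such an $e$: (a) $r(e)>2$, since $\rmin>2$; (b) $e$ meets $\ell_1$, so $|y+1|\le r(e)$ and hence $y\le r(e)-1$; and (c) $K:=\rho^2-r(e)^2=2r(e)\delta(e)+\delta(e)^2$ is small, because $\delta(e)<0.11$ when $r(e)\le 5$, $\delta(e)<0.5$ when $5<r(e)\le 20$, and $\delta(e)\le 1$ always (as $e$ meets $d^*$). The whole proof then rests on the identities $|c-p_r|^2=\rho^2-5x-2y+7.25$ and $|c-p_d|^2=\rho^2+3.04\,y+2.3104$, which recast ``$e\ni p_r$'' and ``$e\ni p_d$'' as the linear conditions $K\le 5x+2y-7.25$ and $y\le-0.76-\frac{K}{3.04}$, respectively.

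For the upper disks ($y\ge 0$) I would establish the first inequality directly. From $\rho\ge r(e)$ and (b) one has $x=\sqrt{\rho^2-y^2}\ge\sqrt{r(e)^2-y^2}$, so $5x+2y\ge g(y):=5\sqrt{r(e)^2-y^2}+2y$ on the interval $y\in[0,r(e)-1]$. Since $g$ is concave, its minimum there is attained at an endpoint, namely $\min\{5r(e),\,5\sqrt{2r(e)-1}+2r(e)-2\}$; inserting the regime-dependent bounds on $K$ from (c) one checks $\min g\ge K+7.25$ separately for $r(e)\in(2,5]$, $(5,20]$ and $(20,\infty)$. Hence $5x+2y\ge K+7.25$ and $e$ contains $p_r$. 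The margin is smallest as $r(e)\to 2^{+}$, where $\min g=5r(e)\to 10$ while $K<0.46$.

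For the lower disks ($y<0$) the two points $p_r$ and $p_d$ share the region, and I would argue by contradiction: if $e$ avoided both, then $5x+2y<K+7.25$ and $y>-0.76-\frac{K}{3.04}$, while (a) gives $x>\sqrt{4+K-y^2}$. The first and third combine to $\sqrt{4+K-y^2}<x<\frac{1}{5}(7.25+K-2y)$, and the hardest step—the real content of this subcase—is to verify that this double inequality is infeasible for every $y$ left open by the $p_d$-condition and every $K$ permitted by (c); the delicate spot is the smallest disks, $r(e)\to 2^{+}$ (so $K\to 0$), at the crossover $y\approx-0.76$, where the two bounds on $x$ coincide. I would carry this out exactly as in Claims~\ref{cl:R5moreThanHalf} and~\ref{cl:moreThanHalf}: replace $e$ by the extremal disk through two of the target points that is tangent to $d^*$ (or to $\ell_1$), so that each subcase collapses to a single comparison of a distance between centers against a sum of radii, with the coupling of $r(e)$, $\delta(e)$ and $K$ furnished by (c) doing the decisive work. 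Finally, the symmetry fixed at the outset lifts these conclusions from $x\ge 0$ to all of $D^-$, and together with $(0,0)$ stabbing $D\setminus D^-$ this shows that $\{(0,0),p_l,p_r,p_d\}$ stabs $D$.
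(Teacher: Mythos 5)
Your proposal does not prove the stated result at all --- it proves a different statement from the paper. What you were asked to prove is the coordinate-free, local geometric fact labelled~\ref{radius}: given a line $\ell$, points $a,b \in \ell$, the perpendicular strip $V$ through $a$ and $b$, a disk $\eta$ below $\ell$ meeting $V$ (and disjoint from $\ell$, or tangent to the segment $ab$), and the disk $\delta$ passing through $a$ and $b$ and tangent to $\eta$, every disk $\epsilon$ with center in $V$ above $\ell$, of radius at least $r(\delta)$, that intersects $\eta$ must contain $a$ or $b$. Nothing in your write-up mentions $\eta$, $V$, or the tangent disk $\delta$, and no step of your argument uses the hypothesis $r(\epsilon) \ge r(\delta)$. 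Instead you establish (in outline) Claim~\ref{cl:R20lessThan05} --- that $\{(0,0),(2.5,1),(-2.5,1),(0,-1.52)\}$ stabs $D$ under the hypotheses $\delta(\cdot)<0.11$ on $D^{-}_{\le 5}$ and $\delta(\cdot)<0.5$ on $D^{-}_{\le 20}$ --- which is the fourth sub-case of the $2<\rmin<4$ analysis. That claim is an \emph{application} of the statement you were supposed to prove: the paper's proof of Claim~\ref{cl:R20lessThan05} proceeds precisely by invoking Observations~\ref{outside-strip} and~\ref{radius} with $d^*_{0.11}$, $d^*_{0.5}$, and $d^*$ playing the role of $\eta$.

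The mismatch also makes your plan circular exactly where it is least complete. The lower-disk case ($y<0$), which you yourself call ``the real content,'' is deferred to the technique of Claims~\ref{cl:R5moreThanHalf} and~\ref{cl:moreThanHalf}; but in the paper those claims are themselves resolved by applying Observations~\ref{outside-strip} and~\ref{radius}, so as a proof of the statement labelled~\ref{radius} your argument assumes what it must show. What is actually required is a short synthetic argument with no stabbing-set coordinates: assume $a,b \notin \epsilon$; since the center of $\epsilon$ lies in $V$ above $\ell$, the chord $\epsilon \cap \ell$ is an interval whose midpoint is the projection of that center into $[a,b]$, so avoiding both endpoints forces $\epsilon \cap \ell$ strictly inside the open segment $ab$; thus $\epsilon$ must thread the gate bounded by $a$, $b$, and $\eta$ in order to reach $\eta$, and an extremality comparison with the gate disk $\delta$ (the disk through $a$ and $b$ tangent to $\eta$) shows this is impossible once $r(\epsilon) \ge r(\delta)$, giving the contradiction. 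The paper treats this as an elementary observation and states it without proof, but your submission needed some argument of this shape; the four-point case analysis you wrote, whatever its merits as a proof of Claim~\ref{cl:R20lessThan05}, does not address it.
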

}  

\begin{figure}
	\begin{center}
		\includegraphics[width=.6\textwidth]{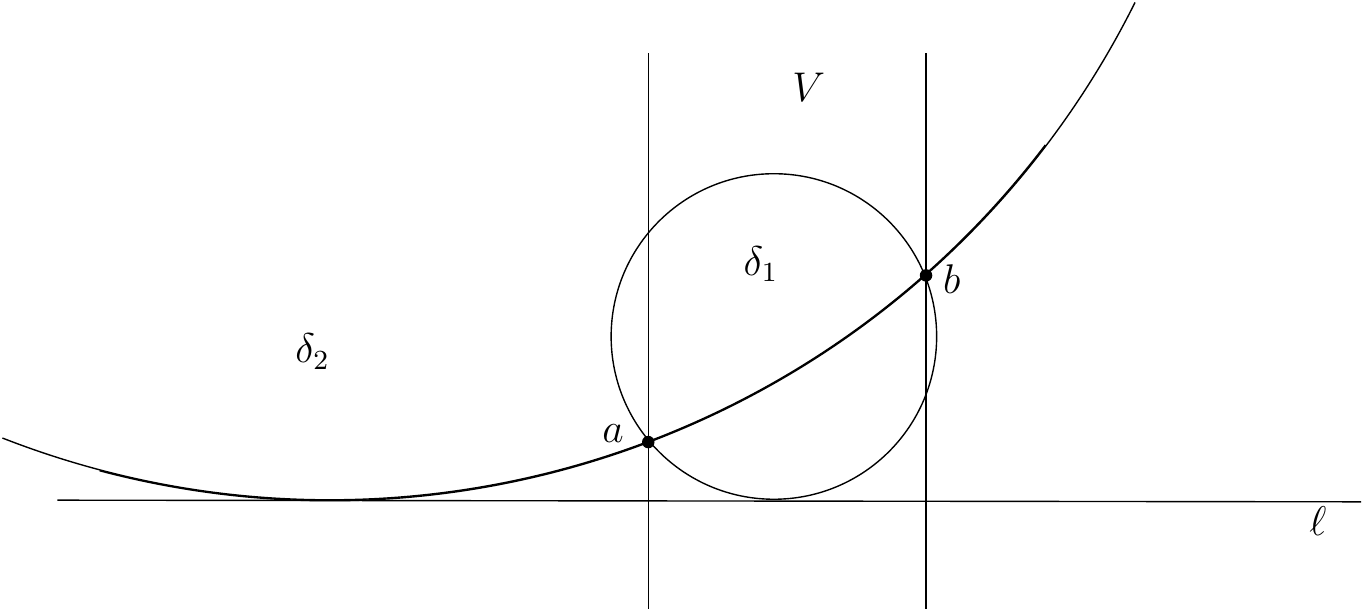}
	\end{center}
	\caption{The setting for Observation~\ref{b}.}
	\label{fig:b}
\end{figure}

\begin{Observation}
	\label{b}
	Consider the setting depicted in Figure~\ref{fig:b}. Let
	\begin{itemize}
		\item $\ell$ be a horizontal line;
		\item $a$ and $b$ be two points above $\ell$ with $a$ to the left of $b$;
		\item $V$ be the vertical strip whose left side contains $a$ and whose right side contains $b$;
		\item $\delta_1$ and $\delta_2$ be the two disks that have $a$ and $b$ on their boundary and are tangent to $\ell$ from above;
		\item $\delta \in \{\delta_1 , \delta_2 \}$;
	\end{itemize}
	If $\epsilon$ is a disk with center in $V$ and above the center of $\delta$ that intersects $\ell$, then $\epsilon$ contains $a$ or $b$.
\end{Observation}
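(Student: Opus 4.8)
The plan is to reduce everything to the \emph{smallest} disk with a given centre that still meets $\ell$, and then to translate "containing $a$ or $b$" into a statement about two parabolas. First I would normalize coordinates so that $\ell$ is the $x$-axis and write $a=(a_x,a_y)$, $b=(b_x,b_y)$ with $a_x<b_x$ and $a_y,b_y>0$. Given a candidate disk $\epsilon$ with centre $(x_0,y_0)$ in $V$, whose centre lies above that of $\delta$ and which meets $\ell$, note that $y_0>0$ and $r(\epsilon)\ge y_0$ (the distance from the centre to $\ell$). Replacing $\epsilon$ by the concentric disk $\epsilon'$ of radius exactly $y_0$, which is tangent to $\ell$ from above, only shrinks it, so $\epsilon'\subseteq\epsilon$; hence it suffices to prove that $\epsilon'$ contains $a$ or $b$.

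Next I would encode containment via focal parabolas. A disk tangent to $\ell$ from above whose centre lies on the vertical line $x=x_0$ contains $a$ if and only if its radius, which equals its centre height, is at least $f_a(x_0):=\frac{(x_0-a_x)^2+a_y^2}{2a_y}$, the parabola with focus $a$ and directrix $\ell$; similarly for $b$ with $f_b$. Thus $\epsilon'$ contains $a$ or $b$ exactly when $y_0\ge g(x_0)$, where $g=\min(f_a,f_b)$ is the lower envelope of the two parabolas. Since $\delta$ passes through both $a$ and $b$ and is tangent to $\ell$, its centre $(x_\delta,k_\delta)$ lies on both parabolas, i.e.\ $f_a(x_\delta)=f_b(x_\delta)=k_\delta=r(\delta)$, and $k_\delta$ is precisely the height of $\delta$'s centre. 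It therefore remains only to show that $g(x_0)\le k_\delta$ for every $x_0\in[a_x,b_x]$.

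The crux is the following structural fact: on the strip $[a_x,b_x]$ the parabola $f_a$, whose vertex is at $x=a_x$, is increasing, while $f_b$, whose vertex is at $x=b_x$, is decreasing; hence $f_a-f_b$ is strictly increasing on the strip and the two parabolas cross at most once inside it. Consequently $g$ follows the increasing branch $f_a$ and then the decreasing branch $f_b$, so its maximum over the strip is attained exactly at the interior crossing point, which is the centre of the disk through $a,b$ tangent to $\ell$ whose centre lies in $V$ — namely $\delta$ — and this maximum equals $k_\delta$. I expect this to be the main obstacle: rigorously pinning down that the maximum of the lower envelope over the strip is realised at $\delta$'s centre, together with the bookkeeping of which of $\delta_1,\delta_2$ has its centre in $V$ and the degenerate parallel case $a_y=b_y$, in which the two tangent disks coincide. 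Once this is established, the chain $y_0\ge k_\delta=\max_{[a_x,b_x]}g\ge g(x_0)$ yields $y_0\ge g(x_0)$, so $\epsilon'$, and therefore $\epsilon$, contains $a$ or $b$.
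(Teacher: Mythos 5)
Your reduction to the concentric tangent disk $\epsilon'$ and the encoding of containment by the focal parabolas $f_a,f_b$ are both correct (the paper states this observation without proof, so there is no authors' argument to compare against), but the ``structural fact'' at the crux of your argument is false as stated, in two ways. First, the parabolas need not cross inside the strip at all. Take $\ell$ to be the $x$-axis, $a=(0,5)$, $b=(1,1)$: then $f_a(0)=2.5>1=f_b(0)$ and $f_a(1)=2.6>0.5=f_b(1)$, so $f_a>f_b$ on all of $[0,1]$ and $g=f_b$ there; the two crossings are at $x=(10\pm\sqrt{340})/8\approx 3.56$ and $\approx -1.06$, i.e.\ \emph{both} tangent disks $\delta_1,\delta_2$ have centers outside $V$, with center heights $\approx 3.76$ and $\approx 2.61$, while $\max_{[0,1]}g=g(0)=1$ is attained at an endpoint of the strip and equals neither height. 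Second, the observation quantifies over $\delta\in\{\delta_1,\delta_2\}$, so your chain $y_0\ge k_\delta=\max_{[a_x,b_x]}g$ must also work when $\delta$ is the disk whose center lies outside $V$ (or when, as above, both do); for that choice the equality you rely on fails, since $k_\delta$ is then in general strictly larger than $\max g$. Your own closing remark flags ``which of $\delta_1,\delta_2$ has its centre in $V$'' as bookkeeping, but the resolution you propose assumes one of them always does, which is exactly what the counterexample refutes.

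The statement that actually completes your argument is the inequality $\max_{[a_x,b_x]}g\le\min(k_{\delta_1},k_{\delta_2})$, and it needs a short case analysis that is missing from your proposal. If $f_a$ and $f_b$ do not cross inside the strip, say $f_b<f_a$ there, then $\max g=f_b(a_x)<f_a(a_x)=a_y/2$, while every crossing height satisfies $k_{\delta_i}=f_a(x_{c,i})\ge a_y/2$ because $a_y/2$ is the global minimum of $f_a$; the case $f_a<f_b$ is symmetric with $b_y/2$. If they cross once inside the strip at $x^\ast$, then indeed $\max g=f_a(x^\ast)=f_b(x^\ast)$, and the other crossing $x_{c,2}$ lies outside $[a_x,b_x]$: if $x_{c,2}>b_x$ then $k_{\delta_2}=f_a(x_{c,2})>f_a(x^\ast)$ since $f_a$ increases to the right of $a_x$, and if $x_{c,2}<a_x$ then $k_{\delta_2}=f_b(x_{c,2})>f_b(x^\ast)$ since $f_b$ decreases to the left of $b_x$; either way the interior crossing is the lower of the two, so $\max g=\min(k_{\delta_1},k_{\delta_2})$. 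Substituting this lemma for your structural fact, the rest of your proof (the reduction to $\epsilon'$ and the chain $y_0\ge k_\delta\ge g(x_0)$) goes through.
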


\begin{figure}
	\begin{center}
		\includegraphics[width=.6\textwidth]{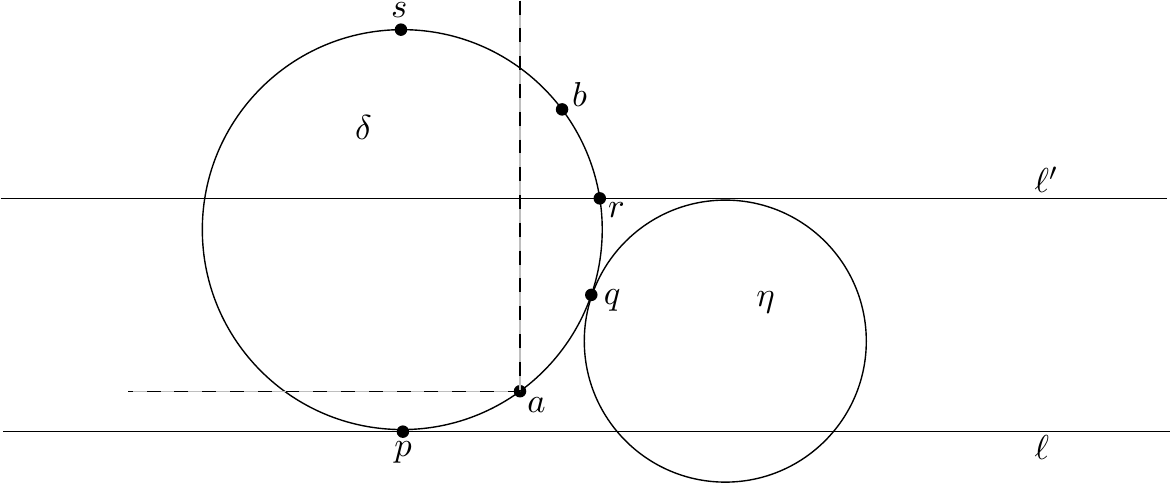}
	\end{center}
	\caption{The setting for Observation~\ref{big}.}
	\label{fig:big}
\end{figure}

The following two observations (i.e., Observations~\ref{big}~and~\ref{Obs:5Plus}) are somewhat more involved, so we present a proof to the former one. The proof to the latter one is similar.

\begin{Observation}
	\label{big}
	Consider the setting depicted in Figure~\ref{fig:big}. Let
	\begin{itemize}
		\item $\ell$ be a horizontal line;
		\item $\eta$ be a disk that intersects $\ell$ and whose center is above $\ell$;
		\item $\ell'$ be the line parallel to $\ell$ and tangent to $\eta$ from above;
		\item $\delta$ be a disk that intersects $\ell'$ and is tangent to $\ell$ from above at some point $p$ and is tangent to $\eta$ from the left at some point $q$;
		\item $r$ be the intersection point of $\ell'$ with the right boundary of $\delta$;
		\item $s$ be the highest point of $\delta$;
		\item $a$ be a point on the boundary of $\delta$ on the counterclockwise arc from $p$ to $q$;
		\item $b$ be a point on the boundary of $\delta$  on the counterclockwise arc from $r$ to $s$.
	\end{itemize}
	If $\epsilon$ is a disk with center in the upper-left quadrant of $a$ that
	intersects both $\ell$ and $\eta$, then $\epsilon$ contains $a$ or $b$.
\end{Observation}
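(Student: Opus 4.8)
The plan is to fix coordinates with $\ell$ as the $x$-axis, so that $\eta$ is centered above the axis and crosses it, $\ell'$ is the horizontal line through the top of $\eta$, and $\delta$ rests on $\ell$ at its lowest point $p$, touches $\eta$ externally at $q$ on its right, crosses $\ell'$ at $r$, and peaks at $s$. The structural fact I would record first is that $\delta$ passes through \emph{both} $a$ and $b$ and is tangent to $\ell$ from above; this is exactly the kind of witness disk that appears in Observations~\ref{radius} and~\ref{b}, while the external tangency to $\eta$ at $q$ is what governs how far $\epsilon$ must reach to the right.

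Next I would make the standard extremal reduction: shrinking $\epsilon$ about its (fixed) center preserves the hypothesis that the center lies in the upper-left quadrant of $a$ and can only make it harder for $\epsilon$ to contain $a$ or $b$. Hence it suffices to treat the smallest disk centered at the center of $\epsilon$ that still meets both $\ell$ and $\eta$. Such a minimal disk is either tangent to $\ell$ from above (when reaching $\ell$ is the binding constraint) or externally tangent to $\eta$ (when reaching $\eta$ is binding), and it automatically meets the other object. I would then argue by contradiction, assuming $a\notin\epsilon$ and $b\notin\epsilon$, and split according to which of the two constraints binds.

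In the regime where $\ell$ binds, $\epsilon$ is tangent to $\ell$ and its radius equals the height of its center; here I would show that the hypotheses force the center above the center of $\delta$ and into the relevant strip, so that a radius/strip argument in the spirit of Observations~\ref{radius} and~\ref{b}, with $\delta$ as the tangent witness through $a$ and $b$, already yields $a\in\epsilon$ or $b\in\epsilon$. In the regime where $\eta$ binds, the requirement of reaching $\eta$ from a center that is up and to the left of $a$ pushes $\partial\epsilon$ past the upper-right arc of $\delta$; using that $b$ lies between $r$ (where $\delta$ meets the top tangent $\ell'$ of $\eta$) and the apex $s$, I would show that $\epsilon$ must then contain $b$. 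In either regime the conclusion contradicts the assumption, finishing the proof.

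The main obstacle I anticipate is the worst-case (monotonicity) reduction underlying both regimes: showing that the hardest center position lies on the boundary of the quadrant, directly above $a$ or directly to its left, so that the analysis collapses to a single extremal disk (passing through $a$, tangent to $\ell$, and externally tangent to $\eta$), and then verifying that this extremal disk contains $b$. The delicate point is controlling the interaction between the two tangency constraints as the center of $\epsilon$ moves, and in particular pinning down the transition between the ``$\ell$ binds'' and ``$\eta$ binds'' regimes, since the point ($a$ or $b$) that gets swallowed changes precisely across this transition.
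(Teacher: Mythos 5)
Your proposal has a genuine gap, and one of its intermediate claims is false as stated. In the ``$\ell$ binds'' regime you assert that the hypotheses force the center of $\epsilon$ to lie above the center of $\delta$ and inside the vertical strip determined by $a$ and $b$, so that Observation~\ref{b} (or a strip/radius argument in the spirit of Observation~\ref{radius}) applies with $\delta$ as the witness disk. This is not so: take $a$ at (or arbitrarily near) the tangency point $q$, and place the center of $\epsilon$ at $a+(-\varepsilon,\varepsilon)$. Since $\eta$ crosses $\ell$ while $\delta$ reaches up to $\ell'$, the center of $\eta$ is strictly lower than the center of $\delta$, so $q$, lying on the segment joining the two centers, satisfies $y(q)<r(\delta)$; hence this center of $\epsilon$ is strictly \emph{below} the center of $\delta$, yet $\ell$ is the binding constraint (the center is at distance $O(\varepsilon)$ from $\eta$ but at distance about $y(q)$ from $\ell$). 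Observation~\ref{b} therefore does not apply, and your case analysis does not cover such centers; the conclusion still holds there (the minimal disk simply contains $a$), but only by a separate argument of exactly the kind your reduction was meant to avoid. The ``$\eta$ binds'' regime has the same problem: the claim that reaching $\eta$ ``pushes $\partial\epsilon$ past the upper-right arc of $\delta$'' rather than through the pocket bounded by $\delta$, $\eta$ and $\ell$ near $q$ is unjustified, and the binding-constraint dichotomy alone cannot justify it. You candidly flag the monotonicity/worst-case reduction as the main obstacle, but that reduction is not a technicality --- it is the entire content of the observation, so what you have is an outline whose central step is missing.

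For contrast, the paper's proof needs neither an extremal reduction nor any computation. Assuming $a,b\notin\epsilon$, it applies Observation~\ref{Obs:Rays} (a disk centered in the upper-left quadrant of a point that meets that point's lower-right quadrant must contain the point) to conclude that $\epsilon$ can meet $\ell$ and $\eta$ only outside the lower-right quadrant of $a$, and, when the center of $\epsilon$ lies in the upper-right quadrant of $b$, only outside the lower-left quadrant of $b$ as well. Tracking where $\epsilon$ must then enter and leave $\delta$ in order to reach both $\ell$ and $\eta$ forces the boundary circles of $\epsilon$ and $\delta$ to cross more than twice, which is impossible. If you want to salvage your plan, the cleanest repair is to drop the binding-constraint dichotomy altogether and import this pair of tools --- the quadrant observation plus the two-intersection bound for circles --- since they are precisely what pins down the ``transition'' you were worried about; your shrinking step is harmless but then unnecessary.
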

\begin{proof}
	If $\epsilon$ contains $a$ or $b$, then we are done, so assume that both $a$ and $b$ are not in $\epsilon$.
	By Observation~\ref{Obs:Rays}, we have that $\epsilon$ cannot interest  $\ell $ at a point that belongs to the lower-right quadrant of $a$,
	and that $\epsilon$ cannot interest  $\eta$  at a point that belongs to the lower-right quadrant of $a$.
	We distinguish between two cases according to the location of $c(\epsilon)$, the center of $\epsilon$.
	\begin{itemize}
		\item $c(\epsilon)$ is in the upper-right quadrant of $b$ (this is only possible if $b$ is to the left of $a$).
		By Observation~\ref{Obs:Rays}, we have that $\epsilon$ must interest $\ell$ at a point with $x$-coordinate between
		the $x$-coordinate of $a$ and the $x$-coordinate of $b$, but this implies that $\epsilon$ intersects $\delta$ more than
		twice, which is of course impossible.
		\item Otherwise ($c(\epsilon)$ is not in the upper-right quadrant of $b$).
		By Observation~\ref{Obs:Rays}, and since $\epsilon$ interests $\eta$, we have that $\epsilon$ intersects
		$\delta$ at a point on the boundary of $\delta$ on the counterclockwise arc from $a$ to $b$.
		However, since $\epsilon$ needs to intersect $\ell$ to the left of $a$, we have that $\epsilon$ and $\delta$ intersect
		more than twice, which again is impossible.
	\end{itemize}
\end{proof}

\setcounter{theorem}{15}
\begin{corollary}\label{smaller}
	Given the setting for Observation~\ref{big},
	let $\eta'$ be a disk that is contained in $\eta$.
	If $\epsilon$ is a disk with center in the upper-left quadrant of $a$ that
	intersects both $\ell$ and $\eta'$, then $\epsilon$ contains $a$ or $b$.
	\old{
		Let
		\begin{itemize}
			\item $\ell$, $a$, $b$, $\eta$, and $\delta$ be defined as in Observation~\ref{big};
			\item $\eta'$ be a disk that is contained in disk $\eta$;
			\item $\epsilon$ be a disk with center in the upper-left quadrant of $a$ that
			intersects $\ell$ and $\eta'$.
		\end{itemize}
		Then $\epsilon$ contains $a$ or $b$.
	}
\end{corollary}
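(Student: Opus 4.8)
The plan is to derive Corollary~\ref{smaller} directly from Observation~\ref{big}, observing that the corollary's hypothesis is simply a strengthening of the observation's. Concretely, the entire geometric setting---the line $\ell$, the disk $\eta$, the parallel tangent $\ell'$, the disk $\delta$ that is tangent to $\ell$ from above and to $\eta$ from the left, and the boundary points $a$ and $b$ on the prescribed counterclockwise arcs of $\delta$---is kept exactly as in Observation~\ref{big}. The only difference is that $\epsilon$ is now required to meet a disk $\eta'$ sitting inside $\eta$, rather than to meet $\eta$ itself, while its center is still in the upper-left quadrant of $a$ and it still intersects $\ell$.

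The single step that makes this work is the elementary monotonicity of intersection under containment: since $\eta'\subseteq\eta$, every point of $\epsilon\cap\eta'$ is also a point of $\epsilon\cap\eta$, so $\epsilon\cap\eta'\neq\emptyset$ forces $\epsilon\cap\eta\neq\emptyset$. Hence a disk $\epsilon$ satisfying the hypotheses of Corollary~\ref{smaller}---center in the upper-left quadrant of $a$, intersecting $\ell$, and intersecting $\eta'$---automatically satisfies the hypotheses of Observation~\ref{big}---center in the upper-left quadrant of $a$, intersecting $\ell$, and intersecting $\eta$. Invoking Observation~\ref{big} then gives that $\epsilon$ contains $a$ or $b$, which is exactly the desired conclusion.

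I would stress in the write-up that nothing in the proof of Observation~\ref{big} must be revisited: the disk $\delta$ and the points $a,b,r,s$ there are built solely from $\eta$, $\ell$, and $\ell'$, none of which is altered, and the only place the original argument uses ``$\epsilon$ intersects $\eta$'' is to place the crossing of $\epsilon$ and $\delta$ on the counterclockwise arc from $a$ to $b$; that usage remains valid precisely because $\epsilon\cap\eta\neq\emptyset$ still holds. There is therefore essentially no obstacle here---the corollary is a ``for free'' consequence of shrinking the region that $\epsilon$ is required to reach, and the only thing worth recording explicitly is the containment implication $\eta'\subseteq\eta$ together with $\epsilon\cap\eta'\neq\emptyset\Rightarrow\epsilon\cap\eta\neq\emptyset$.
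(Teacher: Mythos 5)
Your proposal is correct and matches the paper's treatment: the paper states Corollary~\ref{smaller} without proof, treating it as an immediate consequence of Observation~\ref{big}, and the implicit argument is exactly the monotonicity step you spell out ($\eta'\subseteq\eta$ and $\epsilon\cap\eta'\neq\emptyset$ imply $\epsilon\cap\eta\neq\emptyset$, so Observation~\ref{big} applies verbatim). Nothing further is needed.
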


\begin{figure}
	\begin{center}
		\includegraphics[width=.6\textwidth]{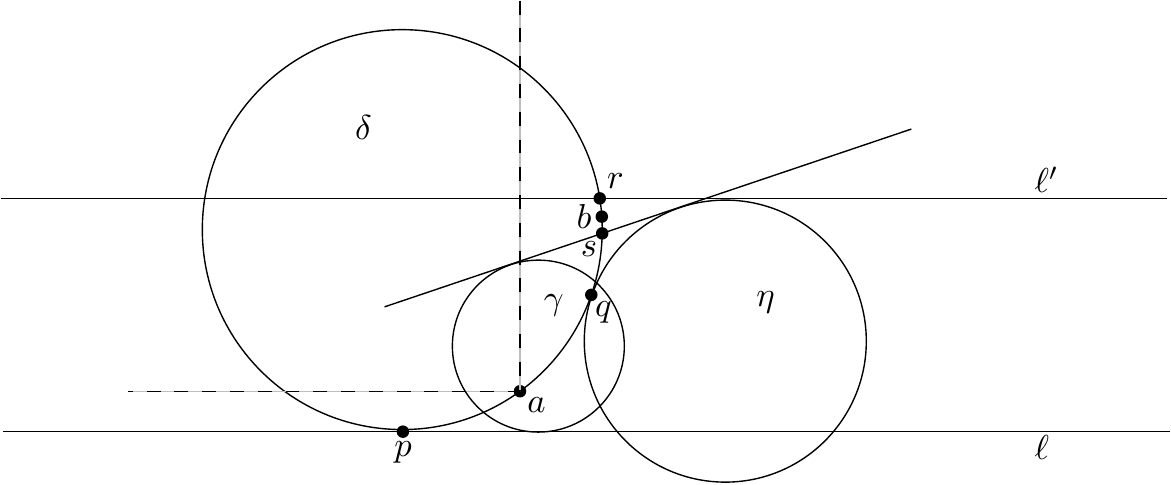}
	\end{center}
	\caption{The setting for Observation~\ref{Obs:5Plus}.}
	\label{fig:5Plus}
\end{figure}

\begin{Observation}\label{Obs:5Plus}
	Consider the setting depicted in Figure~\ref{fig:5Plus}. Let
	\begin{itemize}
		\item $\ell$ be a horizontal line;
		\item $\eta$ be a disk that intersects $\ell$ and whose center is above $\ell$;
		\item $\ell'$ be the line parallel to $\ell$ and tangent to $\eta$ from above;
		\item $\delta$ be a disk that intersects $\ell'$ and is tangent to $\ell$ from above at some point $p$ and is tangent to $\eta$ from the left at some point $q$;
		\item $r$ be the intersection of $\ell'$ with the right boundary of $\delta$;
		\item $\gamma$  be a disk that does not intersect $\ell'$, such that $\gamma$ is tangent to $\ell$ from above and its center is to the left of $\eta$.
		\item $s$ be the right intersection point of $\delta$ and the upper mutual tangent of $\eta$ and $\gamma$.
		\item $a$ be a point on the boundary of $\delta$ on the counterclockwise arc from $p$ to $q$;
		\item $b$ be a point on the boundary of $\delta$ on the counterclockwise arc from $s$ to $r$.

	\end{itemize}
	If $\epsilon$ is a disk with center in the upper-left quadrant of $a$ that
	intersects $\ell$, $\eta$, and $\gamma$, then $\epsilon$ contains $a$ or $b$.
\end{Observation}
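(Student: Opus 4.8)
The plan is to reuse the logical skeleton of the proof of Observation~\ref{big} almost verbatim, with two substitutions: the third disk $\gamma$ now supplies an extra intersection constraint on $\epsilon$, and the point $s$ (here the right intersection of $\partial\delta$ with the upper mutual tangent of $\eta$ and $\gamma$) takes over the role that the topmost point of $\delta$ played there. I would argue by contradiction, supposing $a\notin\epsilon$ and $b\notin\epsilon$. Since the center of $\epsilon$ lies in the upper-left quadrant of $a$ and $a\notin\epsilon$, Observation~\ref{Obs:Rays} shows that $\epsilon$ meets neither $\ell$ nor $\partial\eta$ nor $\partial\gamma$ at any point of the lower-right quadrant of $a$; every crossing of $\partial\epsilon$ with $\ell$, with $\partial\eta$, and with $\partial\gamma$ therefore lies to the left of $a$ or above $a$.

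The one genuinely new step is to see how the hypothesis that $\epsilon$ meets $\gamma$ (and not merely $\ell$ and $\eta$, as in Observation~\ref{big}) constrains the crossings of $\partial\epsilon$ with $\partial\delta$. Here I would use that $\gamma$ is tangent to $\ell$ from above, has its center to the left of $\eta$, and does not meet $\ell'$, so that the upper mutual tangent $t$ of $\eta$ and $\gamma$ separates both disks from above and $s=t\cap\partial\delta$ records where $t$ cuts $\delta$ on the right. The claim I want is the analogue of the localization used in Observation~\ref{big}: a disk whose center is up and to the left of $a$, that avoids $a$, and that must simultaneously reach $\eta$ and the lower, left-lying disk $\gamma$, cannot cross the upper-right part of $\partial\delta$ above $s$; hence the relevant crossing of $\partial\epsilon$ is confined to the counterclockwise arc of $\partial\delta$ from $s$ to $r$, which is exactly where $b$ sits.

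With both localizations in hand, I would finish with the same two-case split on the center $c(\epsilon)$. If $c(\epsilon)$ lies in the upper-right quadrant of $b$ (possible only when $b$ is to the left of $a$), then Observation~\ref{Obs:Rays} applied at $b$ forces $\epsilon$ to cross $\ell$ at an $x$-coordinate strictly between those of $a$ and $b$; together with its crossing of $\partial\eta$ this gives at least three points of $\partial\epsilon\cap\partial\delta$, contradicting that two circles meet at most twice. Otherwise $c(\epsilon)$ is not in the upper-right quadrant of $b$, and then the enforced crossings---one on the counterclockwise arc from $a$ to $b$ coming from the intersection with $\eta$ and $\gamma$, one with $\ell$ to the left of $a$, plus the eventual exit of $\partial\epsilon$ from $\delta$---again produce more than two points of $\partial\epsilon\cap\partial\delta$. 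Either branch is impossible, so $\epsilon$ must contain $a$ or $b$.

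The main obstacle I anticipate is the middle paragraph: making precise that intersecting the third disk $\gamma$ is captured exactly by the tangent-defined point $s$, i.e. that ``$\epsilon$ meets $\ell$, $\eta$, and $\gamma$ while missing $a$'' pins the relevant crossing of $\partial\delta$ onto the arc from $s$ to $r$. In Observation~\ref{big} this was essentially free, since only $\ell$ and $\eta$ were involved and $s$ was merely the top of $\delta$; here one must verify that the hypotheses on $\gamma$ (tangent to $\ell$, center to the left of $\eta$, disjoint from $\ell'$) really make $t$ the correct separating line and $s$ the correct cutoff. Once that lemma is established, the remainder is a routine transcription of the circle-crossing argument from Observation~\ref{big}.
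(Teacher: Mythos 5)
Your skeleton is the one the paper intends --- the paper never writes out a proof of this observation, saying only that it is ``similar'' to the proof of Observation~\ref{big} --- and your opening and closing paragraphs do transcribe that proof faithfully: argue by contradiction, use Observation~\ref{Obs:Rays} to keep $\epsilon$ out of the lower-right quadrant of $a$, split on the location of $c(\epsilon)$, and count at most two intersections of $\partial\epsilon$ with $\partial\delta$.

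The gap is exactly where you predicted it, but it is worse than an unproved lemma: the localization you state is not the one the proof needs, and it is internally inconsistent. You claim the forced crossing of $\partial\epsilon$ with $\partial\delta$ ``cannot [occur] above $s$'' and is ``hence \ldots confined to the counterclockwise arc of $\partial\delta$ from $s$ to $r$, which is exactly where $b$ sits.'' But the counterclockwise arc from $s$ to $r$ lies \emph{above} $s$, so the two halves of that sentence contradict each other; and pinning the crossing to the arc that contains $b$ would in any case be useless, since the crossing could then fall on either side of $b$ and no contradiction follows. What the argument needs --- and what your final paragraph silently invokes --- is that the crossing lies on the counterclockwise arc from $a$ to $b$, i.e.\ at or below $s$ and hence strictly before $b$. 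The correct statement, which is where $\gamma$ actually earns its keep, is this: both $\eta$ and $\gamma$ lie below their upper mutual tangent $t$, so by Observation~\ref{Obs:Rays} every point of $\epsilon\cap\eta$ and of $\epsilon\cap\gamma$ lies below $t$ and outside the lower-right quadrant of $a$; a segment inside $\epsilon$ joining a point of $\epsilon\cap\gamma$ to a point of $\epsilon\cap\eta$ therefore lies entirely below $t$ and runs from the $\gamma$-side of $\partial\delta$ out to $\eta$, which is exterior to $\delta$ on its right, so it must cross $\partial\delta$ at a point that is below $t$ and outside the lower-right quadrant of $a$ --- that is, on the counterclockwise arc from $a$ to $s$, which is contained in the arc from $a$ to $b$. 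This step also shows why the extra hypothesis $\epsilon\cap\gamma\neq\emptyset$ cannot be dropped: with only $\epsilon\cap\eta\neq\emptyset$, the disk $\epsilon$ could pass over $\delta$ and reach the part of $\eta$ lying between the heights of $b$ and $\ell'$ while avoiding both $a$ and $b$, which is precisely why Observation~\ref{big} must place $b$ above $\ell'$ whereas here $b$ may sit as low as $t$. Until the localization is stated and proved in this form, the one ingredient that distinguishes Observation~\ref{Obs:5Plus} from Observation~\ref{big} is missing from your proof.
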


\begin{Observation}\label{Obs:Rays}
	Let $p$ be a point in the plane.
	Let $\delta$ be a disk with center $c$ in the upper-left quadrant of $p$.
	If $\delta$ intersects the lower-right quadrant of $p$, then $\delta$ contains $p$.
\end{Observation}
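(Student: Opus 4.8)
The plan is to reduce the statement to a single coordinate-wise distance comparison. First I would translate the plane so that $p$ sits at the origin; then the hypothesis that the center $c$ lies in the upper-left quadrant of $p$ reads $x(c) \le 0$ and $y(c) \ge 0$, while the assumption that $\delta$ meets the lower-right quadrant of $p$ means $\delta$ contains some point $z$ with $x(z) \ge 0$ and $y(z) \le 0$. Since $z \in \delta$, we have $|cz| \le r(\delta)$, so to prove $p \in \delta$ it suffices to establish the inequality $|cp| \le |cz|$; chaining the two then gives $|cp| \le r(\delta)$, which is exactly the containment $p \in \delta$.

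The key (and only substantive) step is the comparison $|cp| \le |cz|$, which I would obtain by comparing the two squared distances coordinate by coordinate. Because $x(c) \le 0 = x(p) \le x(z)$, the horizontal displacement satisfies $|x(c)-x(z)| = x(z)-x(c) \ge -x(c) = |x(c)-x(p)|$; symmetrically, from $y(z) \le 0 = y(p) \le y(c)$ we get $|y(c)-y(z)| \ge |y(c)-y(p)|$. Thus each coordinate of $z$ is at least as far from the corresponding coordinate of $c$ as that of $p$ is, and summing the squares yields $|cz|^2 \ge |cp|^2$. Equivalently, one can write
\[
|cz|^2 - |cp|^2 = x(z)^2 + y(z)^2 - 2\bigl(x(c)x(z) + y(c)y(z)\bigr) \ge 0,
\]
since $x(c)x(z) \le 0$ and $y(c)y(z) \le 0$ under the sign conditions above, while $x(z)^2+y(z)^2 \ge 0$. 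This gives $|cp| \le |cz| \le r(\delta)$ and completes the argument.

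There is essentially no hard obstacle here; the only point requiring a word of care is the treatment of the quadrant boundaries (open versus half-open), but the inequalities above are stated with the closed conventions, which are the weakest, and they only sharpen to strict inequalities when $z \ne p$, so the conclusion is unaffected. The geometric content worth emphasising is simply that moving from $p$ into the lower-right quadrant carries one away from any center lying in the upper-left quadrant in both coordinates at once, which is precisely why the closest point of that quadrant to $c$ is $p$ itself.
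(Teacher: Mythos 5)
Your proof is correct, and it takes a mildly but genuinely different route from the paper's. The paper argues synthetically: it first claims that the boundary of $\delta$ must cross one of the two rays bounding the lower-right quadrant of $p$, fixes w.l.o.g. a crossing point $x$ on the left bounding ray, and then uses the obtuse-angle form of the triangle inequality ($\angle cpx > 90^{\circ}$ forces $|cx| > |cp|$) to conclude $p \in \delta$. You instead prove the stronger statement that \emph{every} point $z$ of the lower-right quadrant satisfies $|cz| \ge |cp|$, via the coordinatewise sign computation
\[
|cz|^2 - |cp|^2 = x(z)^2 + y(z)^2 - 2\bigl(x(c)x(z) + y(c)y(z)\bigr) \ge 0,
\]
and then apply it to an arbitrary $z \in \delta$ in that quadrant to get $|cp| \le |cz| \le r(\delta)$. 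Both arguments rest on the same geometric fact --- that $p$ is the closest point of the lower-right quadrant to $c$ --- but yours buys a few things: no case split over which bounding ray is hit, no appeal to the boundary-crossing claim (which the paper states without justification, and which needs a short connectivity argument since $c$ lies outside the quadrant's interior), and uniform treatment of the degenerate configurations where $c$ or $z$ lies on an axis through $p$, since you work with the closed (weakest) quadrant conventions throughout. The paper's version is shorter and more visual; yours is self-contained and checkable line by line.
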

\begin{proof}
	Assume $\delta$ intersects the lower-right quadrant of $p$.
	Then, $\delta$'s boundary intersects the upper boundary or the left boundary of the lower-right quadrant of $p$. Assume w.l.o.g. that $\delta$'s boundary intersects the left boundary at a point $x$.
	We have by the triangle inequality that $|cx| > |cp|$, since $\angle(cpx) > 90^{\circ}$.
	Thus, $\delta$ contains $p$.
\end{proof}




\newpage
\appendix

\section{Proof of Claim~\ref{cl:blabla}}\label{sec:blabla}
In this section, we show that any disk $e \in D$ that is centered at a point of negative $y$-coordinate
and does not contain $p^l$ nor $q^-$ intersects $\ell'_1$.
Assume towards contradiction that there exists such a disk $e$ that does not intersect $\ell'_1$.
We first consider the case where the center of $e$  ($c(e)$) is below  $q^-$ (w.r.t. the $y$-coordinate).
Then, we consider the case where the center of $e$ lies above $q^-$ (and has negative $y$-coordinate).
In both cases we show that $e$ cannot exist.

\begin{claim}
Let $e \in D$ be a disk whose center is below  $q^-$ and does not contain $p^l$ nor $q^-$,
then $e$ intersects $\ell'_1$.
\end{claim}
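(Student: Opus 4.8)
The plan is to argue by contradiction. Suppose such a disk $e$ exists but, contrary to the claim, $e$ does not meet $\ell'_1$. Write its center as $c(e)=(u,v)$ and its radius as $\rho$. The hypothesis ``center below $q^-$'' gives $v<y(q^-)=-1.83$, and ``$e$ does not meet $\ell'_1$'' gives $v+\rho<1$, i.e.\ the top of $e$ lies below the line $y=1$ (since $c(e)$ is far below that line, the whole disk is below it). I would show these are incompatible with the constraints forced on $e$ by its membership in $D$.

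First I would assemble those constraints. Since $e\in D$ it intersects $d^*$, so $\rho\ge |c(e)|-1\ge |v|-1>0.83$; and, exactly as in the main text's argument for $\ell_1$, because $e$ must intersect each $d_i$ it must cross each of the three tangent lines $\ell_1,\ell_2,\ell_3$ of $d^*$. The crucial observation is that meeting $d^*$ and $\ell_1$ alone is \emph{not} sufficient: a disk centered low and to the right (for instance near $(1,-2)$ with radius $1.3$) meets $d^*$ and $\ell_1$, has its top below $\ell'_1$, and yet avoids both $p^l=(-0.5,0)$ and $q^-=(-0.5,-1.83)$. Such a disk is excluded only because it cannot also reach the positive-slope tangent $\ell_2$, which in the base setting touches $d^*$ in the upper-left and therefore lies at distance more than $\rho$ from a low, short, right-leaning disk. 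Hence the argument must also use that $e$ meets $\ell_2$ (and $\ell_3$).

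With all of this in hand, the core step is to show that a disk whose center lies below $q^-$ and which meets $d^*$ while keeping its top below $\ell'_1$ is forced to lean to one side in order to miss both $p^l$ and $q^-$ (which lie on the common vertical line $x=-0.5$); meeting the tangent line on the \emph{opposite} side then drives its radius up until its top rises to $y=1$, a contradiction. I would make this quantitative by the paper's usual device: construct the extremal disk tangent to $d^*$ and to $\ell'_1$ whose boundary passes through $q^-$ (respectively $p^l$), solve for its center and radius, and transfer to the general $e$ by a monotonicity argument (enlarging $\rho$ or sliding the center only raises the top), invoking Observation~\ref{radius} together with the quadrant Observation~\ref{Obs:Rays} to decide on which side $e$ must lie. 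I expect the main obstacle to be exactly the interaction with $\ell_2$ and $\ell_3$: their precise positions are not fixed by the base setting, so the extremal analysis must be run against the \emph{most permissive} admissible pair of tangents consistent with $c^*$ lying in the triangle $\triangle x_1x_2x_3$ and with the apex $\ell_2\cap\ell_3$ carrying the largest angle of $\Delta$, and then verified by an explicit coordinate computation with the given values of $p^l$ and $q^-$.
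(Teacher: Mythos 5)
Your high-level strategy is the same as the paper's: argue by contradiction, reduce to an extremal disk through $q^-$ that is tangent to $d^*$ and to a constraining line, and finish with explicit coordinate computations. Your example near $(1,-2)$ is a valid witness that intersecting $d^*$ and $\ell_1$ alone cannot suffice, so $\ell_2,\ell_3$ must enter. The genuine gap is in how you plan to control $\ell_2$ and $\ell_3$: you propose to run the extremal analysis against the most permissive pair of tangents allowed by the base setting ($c^*$ inside $\triangle x_1x_2x_3$, largest angle of $\Delta$ at $\ell_2\cap\ell_3$). That relaxed statement is simply false, so no monotonicity argument can rescue it. Concretely, take $\ell_2$ tangent to $d^*$ at angle $88^{\circ}$ with the $x$-axis and $\ell_3$ at angle $2^{\circ}$ (the apex angle is then $90^{\circ}$, so the base setting is satisfied), take $q^-=(-0.5,-1.83)$, and consider the disk $e$ with center $(4.09,-4.39)$ and radius $5.25$. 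Its center is below $q^-$; its distance to the origin is $6\le 5.25+1$, so it meets $d^*$; its distances to $\ell_1,\ell_2,\ell_3$ are about $3.39$, $5.241$, $5.245$, all below $5.25$, so it crosses all three tangents; its distances to $q^-$ and $p^l$ are about $5.256$ and $6.35$, both above $5.25$, so it avoids both points; yet its topmost point is at height $-4.39+5.25=0.86<1$, so it misses $\ell'_1$.

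The ingredient you are missing---and which the paper's proof crucially uses---is the constraint imposed by the disk $d$ itself, the algorithm's chosen disk of radius at most $2$ whose center location is what determines $q^-$ in the first place. Since $d$ has radius at most $2$, avoids $c^*$, and must intersect $\ell_2$, the slope of $\ell_2$ is capped; in particular, in the sub-case $y(c(d))<0$, which is exactly the sub-case giving $q^-=(-0.5,-1.83)$, the angle of $\ell_2$ with the positive $x$-axis is at most $30^{\circ}$, and this is what excludes configurations like the one above (a near-vertical $\ell_2$ is incompatible with the existence of such a $d$). The paper's actual proof splits on the sign of $x(c(e))$ and, when $x(c(e))>0$, on the sign of $y(c(d))$; it replaces $\ell_3$ resp.\ $\ell_2$ by fixed proxy lines that every admissible $e$ must reach (the vertical lines through $(1,0)$ resp.\ $(-1,0)$, or the tangent to $d^*$ at $30^{\circ}$), and then solves the tangency system for the extremal disk separately for each of the three possible values of $q^-$, namely $(0,-1.7)$, $(0.5,-2.5)$, $(-0.5,-1.83)$---not only the last one, which your write-up fixes. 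With the constraint from $d$ imported and those case distinctions added, your extremal-disk plan becomes essentially the paper's proof; without them it cannot be made to work.
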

\begin{proof}

We distinguish between two cases according to the $x$-coordinate of $e$'s center.
If $c(e)$ has negative $x$-coordinate, then let $t$ be the vertical line through the point $(1,0)$.
In this case it is sufficient to show that the disk $e'$ (whose center has negative $x$-coordinate and $y$-coordinate below $q^-$) that is tangent to $t$ and $d^*$ and has the point $q^-$ on its boundary intersects $\ell'_1$, since this implies that $e$ intersects $\ell'_1$.

If $c(e)$ has positive $x$-coordinate, then we consider two sub-cases according to the location of $d$'s center. If $c(d)$ has positive $y$-coordinate, then let $t$ be the vertical line through point $(-1,0)$; otherwise, let $t$ be the line tangent to $d^*$ at a point above the $x$-axis and forming an angle of $30^{\circ}$ with the (positive) $x$-axis. Notice that in the latter sub-case ($c(d)$ has negative $y$-coordinate), the angle between $\ell_2$ and the (positive) $x$-axis is at most $30^{\circ}$, since otherwise $d$ does not intersect $\ell_2$ (as its radius is at most 2). In both sub-cases it is sufficient to show that the disk $e'$ (whose center has positive $x$-coordinate and $y$-coordinate below $q^-$) that is tangent to $t$ and $d^*$ and has the point $q^-$ on its boundary intersects $\ell'_1$, since this implies that $e$ intersects $\ell'_1$.

Let $e'$ be a disk whose center is below $q^-$ that is tangent to $t$ and $d^*$ and has the point $q^-$ on its boundary, and let $(a,b)$ be its center.

First consider the case where $c(e)$ has negative $x$-coordinate.
We have the following equations:
	\begin{enumerate}[leftmargin=1.5\parindent]
		\item $r(e') = | 1 - a | $,
		\item $a^2 + b^2 = (r(e') + 1)^2$, and
		\item $(a - x(q^-))^2 + (b - y(q^-))^2= r(e')^2$.
	\end{enumerate}
	Solving these equations when $q^-=(0,-17/10)$ we get that
    $b > -5.4 $ and $r(e') > 7$.
	Solving these equations when $q^-=(1/2,-5/2)$ we get that
    $b > -4.9 $ and  $r(e') > 5.9$.
  Solving these equations when $q^-=(-1/2,-183/100)$ we get that
    $b > -13 $ and $r(e') > 40$.
	Thus, in all three cases $e'$ intersects $\ell'_1$ ($b + r \geq 1$).

Next, consider the sub-case where $c(e)$ has positive $x$-coordinate and $d$ has positive $y$-coordinate.
We have the following equations:
	\begin{enumerate}[leftmargin=1.5\parindent]
		\item $r(e') = | 1 + a | $,
		\item $a^2 + b^2 = (r(e') + 1)^2$, and
		\item $(a - x(q^-))^2 + (b - y(q^-))^2= r(e')^2$.
	\end{enumerate}
	Solving these equations when $q^-=(0,-17/10)$ we get that
    $b > -5.4 $ and $r(e') > 7$.
	Solving these equations when $q^-=(1/2,-5/2)$ we get that
    $b > -20 $ and  $r(e') > 80$.
	Thus, in all three cases $e'$ intersects $\ell'_1$ ($b + r \geq 1$).

	Finally, consider the sub-case where $c(e)$ has positive $x$-coordinate and $c(d)$ has negative $y$-coordinate.
We have the following equations:
	\begin{enumerate}[leftmargin=1.5\parindent]
		\item $r(e') = \frac{|  \sqrt{3} b  -a -2 |}{2} $,
		\item $a^2 + b^2 = (r(e') + 1)^2$, and
		\item $(a - x(q^-))^2 + (b - y(q^-))^2= r(e')^2$.
	\end{enumerate}
	  Solving these equations when $q^-=(0,-17/10)$ we get that
    $b > -17 $ and $r(e') > 25$.
		Solving these equations when $q^-=(-1/2,-183/100)$ we get that
    $b > -64 $ and $r(e') > 80 $.
		Thus, in all three cases $e'$ intersects $\ell'_1$ ($b + r \geq 1$).
\old{Let $t_1$ be the vertical line through point $(1,0)$, $t_2$ be the vertical line through point $(-1,0)$,
and let $t_3$ be the line tangent to $d^*$ at a point above the $x$-axis
of positive slope of angle $30^{\circ}$ with the $x$-axis.
We distinguish between two cases according to the $x$-coordinate of $e$'s center.

Observe that if $c(e)$ has negative $x$-coordinate it is sufficient to show that a disk $e'$ whose center
has negative $x$-coordinates and $y$-coordinates below $q^-$ that is tangent to $t_1$, $d^*$, and has the point $q^-$ on its boundary intersect $\ell'_1$, since this implies that $e$  intersect $\ell'_1$.

Moreover, if $c(e)$ has positive $x$-coordinate it is sufficient to show that a disk $e$ whose center
has positive $x$-coordinates and $y$-coordinates below $q^-$ that is tangent to $t_2$, $d^*$, and has the point $q^-$ on its boundary intersect $\ell'_1$, since this implies that $e$  intersect $\ell'_1$.

Notice that in the case the center of $d$ has negative $y$-coordinate the angle between $\ell_2$ and $x$-axis is at most $30^{\circ}$,
otherwise $\ell_2$ will not intersect $d$ (the radius of $d$ is at most 2).
Thus, if $c(e)$ has positive $x$-coordinate and $d$ has negative $y$-coordinate it is sufficient to show that a disk $e'$ whose center
has positive $x$-coordinates and $y$-coordinates below $q^-$ that is tangent to $t_3$, $d^*$, and has the point $q^-$ on its boundary intersect $\ell'_1$.

Let $t$ be $t_1$ is  $c(e)$ has negative $x$-coordinate, else if $d$ has positive $y$-coordinate let $t$ be $t_2$,
otherwise let $t$ be $t_3$.
Let $e'$ be a disk whose center is below $q^-$ that is tangent to $t$, $d^*$, and has the point $q^-$ on its boundary, and let $(a,b)$ be its center.

First consider the case where $c(e)$ has negative $x$-coordinate.

We have the following equations:
	\begin{enumerate}[leftmargin=1.5\parindent]
		\item $r = | 1 - a | $
		\item $a^2 + b^2 = (r(e') + 1)^2$, and
		\item $(a - x(q^-))^2 + (b - y(q^-))^2= r(e')^2$, and
	\end{enumerate}
	Solving these equations when $q^-=(0,-17/10)$ we get that
    $b > -5.4 $ and $r(e') > 7$.
	Solving these equations when $q^-=(1/2,-5/2)$ we get that
    $b > -4.9 $ and  $r(e') > 5.9$.
  Solving these equations when $q^-=(-1/2,-183/100)$ we get that
    $b > -13 $ and $r(e') > 40$.
		Thus, in all three cases $e'$ intersects $\ell'_1$ ($b + r \geq 1$).

Next, consider the case where $c(e)$ has positive $x$-coordinate and $d$ has positive $y$-coordinate,
We have the following equations:
	\begin{enumerate}[leftmargin=1.5\parindent]
		\item $r = | 1 + a | $
		\item $a^2 + b^2 = (r(e') + 1)^2$, and
		\item $(a - x(q^-))^2 + (b - y(q^-))^2= r(e')^2$, and
	\end{enumerate}
	Solving these equations when $q^-=(0,-17/10)$ we get that
    $b > -5.4 $ and $r(e') > 7$.
	Solving these equations when $q^-=(1/2,-5/2)$ we get that
    $b > -20 $ and  $r(e') > 80$.

	Finally, consider the case where $c(e)$ has positive $x$-coordinate and $d$ has negative $y$-coordinate,
We have the following equations:
	\begin{enumerate}[leftmargin=1.5\parindent]
		\item $r = \frac{|  \sqrt{3} b  -a -2 |}{2} $
		\item $a^2 + b^2 = (r(e') + 1)^2$, and
		\item $(a - x(q^-))^2 + (b - y(q^-))^2= r(e')^2$, and
	\end{enumerate}
		Solving these equations when $q^-=(-1/2,-183/100)$ we get that
    $b > -64 $ and $r(e') > 80 $.

		Thus, in all three cases $e'$ intersects $\ell'_1$ ($b + r \geq 1$).
}
\end{proof}

In the rest of this section we assume that the center of $e$ has negative $y$-coordinate and lies above  $q^-$.
Recall that we are in the {\color{mycolor}base setting} and the angle between $\ell_2$ and $\ell_3$ is the largest in the triangle $\Delta$ (formed by $\ell_1$, $\ell_2$, and $\ell_3$).
Let $\beta$ be the angle of $\Delta$ between $\ell_1$ and $\ell_3$, and let $\gamma$ be the angle of $\Delta$ between $\ell_1$ and $\ell_2$.
We distinguish between three cases according to the angle $\beta$. In each of these cases we show that $e$ does intersect both $\ell_2$ and $\ell_3$, and therefore cannot exist (in $D$).

\begin{figure}[hbt]
	\centering
	\includegraphics[scale=0.7]{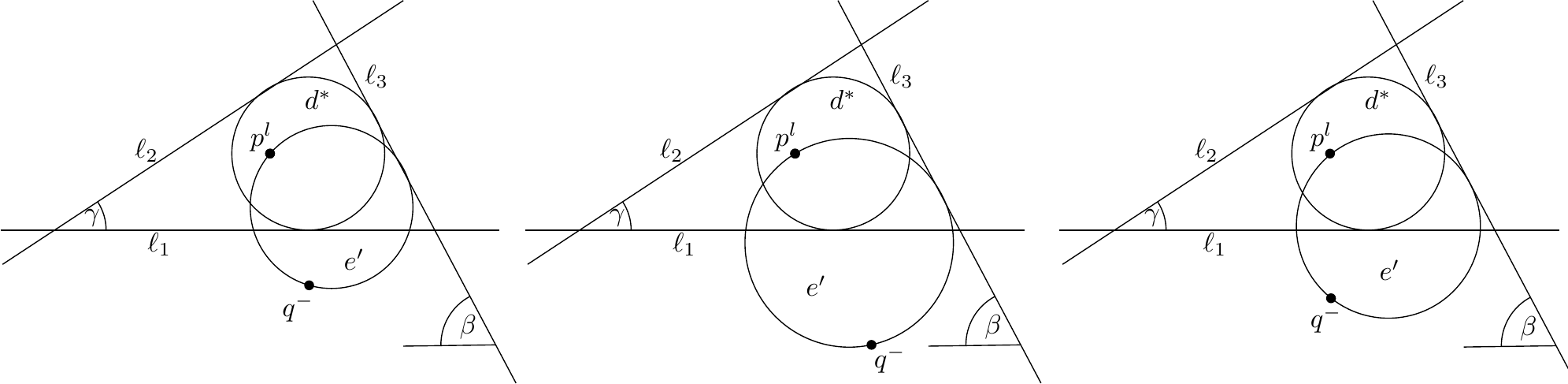}
	\caption{Left: $q^- = (0,-1.7)$. Middle: $q^- = (0.5,-2.5)$. Right: $q^- = (-0.5,-1.83)$.}
	\label{fig:disk_e_prime}
\end{figure}

Let $e'$ be the disk, with center of negative $y$-coordinate and above $q^-$, tangent to $p^l$, $q^-$ and $\ell_3$, where the point of tangency with $\ell_3$ is below $\ell'_1$ (otherwise, $e'$ clearly intersects $\ell'_1$).
We observe that it is sufficient to show that $e'$ does not intersect both $\ell_2$ and $\ell_3$, since this implies that any other disk whose center is of negative $y$-coordinate and above $q^-$
that avoids $p^l$ and $q^-$ and intersects $\ell_3$ below $\ell'_1$ does not intersect $\ell_2$; see Figure~\ref{fig:disk_e_prime}.

\begin{claim}
If the angle $\beta \leq 45^{\circ}$, then $e'$ does not intersect $\ell_2$.
\end{claim}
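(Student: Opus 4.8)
The plan is to reduce the claim to a single distance inequality and then verify it in the worst admissible configuration. First I would fix the coordinates already imposed by the base setting: $\ell_1$ is the line $y=-1$, $d^*$ is the unit disk centred at the origin, $p^l=(-0.5,0)$, and $q^-$ is the prescribed point (one of $(0,-1.7)$, $(0.5,-2.5)$, $(-0.5,-1.83)$, according to the three sub-cases of Case $\rmin\le 2$). Since $\ell_3$ is the negative-slope tangent to $d^*$ meeting $\ell_1$ at angle $\beta$, I would write it explicitly in the form $x\cos\theta+y\sin\theta=1$ with the outward-normal angle $\theta$ determined by $\beta$, so that $\ell_3$ becomes an explicit function of $\beta$. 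The disk $e'$ is then pinned down by three conditions — it passes through $p^l$ and $q^-$ and is tangent to $\ell_3$ — which I would solve for its centre $(a,b)$ and radius $r$ as functions of $\beta$ and $q^-$, selecting the geometric branch demanded by the definition of $e'$ (namely $b<0$, $b>y(q^-)$, and point of tangency with $\ell_3$ lying below $\ell'_1$).

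Next I would bring in $\ell_2$. It is the positive-slope tangent to $d^*$ inclined at angle $\gamma=\angle(\ell_1,\ell_2)$, and the standing hypothesis that the apex angle at $\ell_2\cap\ell_3$ is the largest of $\Delta$ forces $180^\circ-\beta-\gamma\ge\gamma$, hence $\gamma\le 90^\circ-\beta/2$ (and $\gamma>0$ for a valid triangle). The statement ``$e'$ does not intersect $\ell_2$'' is equivalent to saying that the distance from $(a,b)$ to $\ell_2$ is at least $r$, so it suffices to verify this for the extreme admissible inclinations of $\ell_2$, i.e. at the endpoints of the $\gamma$-range, where $\ell_2$ comes closest to $e'$. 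This collapses the claim into a single inequality in the one parameter $\beta$ for each choice of $q^-$, namely $\mathrm{dist}\bigl((a(\beta),b(\beta)),\ell_2(\beta)\bigr)\ge r(\beta)$.

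Finally I would establish this inequality on the range $0<\beta\le 45^\circ$. My expectation is that, once the worst-case $\gamma$ is substituted, both sides are monotone in $\beta$ on this interval, so the binding case is the endpoint $\beta=45^\circ$, at which every quantity evaluates to an explicit algebraic number and the inequality reduces to a direct numerical verification; note that $\beta=45^\circ$ is precisely the boundary shared with the complementary regime $\beta>45^\circ$ handled by the subsequent claims. The main obstacle is twofold. The centre, radius, and distance are built from nested square roots, so the real work is the reduction to a single endpoint via a monotonicity argument rather than an ad hoc numerical sweep; and one must check throughout the range that the chosen solution branch for $e'$ remains valid — in particular that its centre stays below the $x$-axis and above $q^-$ and that it touches $\ell_3$ below $\ell'_1$ — since it is only for this branch that ``$e'$ avoids $\ell_2$'' forces every admissible $e$ to miss $\ell_2$, yielding the desired contradiction with the requirement that $e$ meet all three of $d_1,d_2,d_3$.
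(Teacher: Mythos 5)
Your overall plan mirrors the paper's proof in spirit: the paper also pins $e'$ down by the three contact conditions, replaces the family of admissible lines $\ell_2$ by an extreme comparison line, reduces to $\beta=45^{\circ}$, and finishes by solving the three-equation system for each of the three choices of $q^-$ (getting, e.g., $a>0.44$, $r(e')<1.15$ for $q^-=(0,-1.7)$). Your $\gamma$-endpoint observation is sound, and in fact cleaner than the paper's corresponding ``observe'' step: the distance from $(a,b)$ to $\ell_2(\gamma)$ is $1+a\sin\gamma+|b|\cos\gamma$, a sinusoid in $\gamma$, so its minimum over an interval is attained at an endpoint. The genuine gap is your reduction to $\beta=45^{\circ}$. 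Because your comparison lines move with $\beta$ (upper endpoint $\gamma=90^{\circ}-\beta/2$; lower endpoint $\gamma\to0^{+}$, where $\ell_2$ tends to the horizontal tangent $\ell_1'$), the relevant margins are \emph{not} minimized at $\beta=45^{\circ}$ --- they shrink as $\beta$ decreases. Concretely, for $q^-=(0,-1.7)$ one finds $(a,b,r)\approx(0.44,-0.65,1.14)$ at $\beta=45^{\circ}$, where both endpoint margins are about $0.5$, but $(a,b,r)\approx(0.90,-0.51,1.49)$ at $\beta=1^{\circ}$, where the near-horizontal margin $1-b-r$ has dropped to about $0.016$; it tends to $0$ as $\beta\to0^{+}$, since in that limit $\ell_3$ and the limiting position of $\ell_2$ both degenerate to $\ell_1'$ and $e'$ becomes tangent to that line. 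So the binding case of your ``single inequality in $\beta$'' is $\beta\to0^{+}$, not $\beta=45^{\circ}$: the monotonicity runs the wrong way, and a verification at $\beta=45^{\circ}$ establishes nothing for smaller $\beta$. (Note also that after announcing both endpoints you collapse to \emph{one} inequality, silently dropping the $\gamma\to0^{+}$ check altogether.)

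The claim is true, and the paper's device shows how to repair the argument: compare $e'$ with a comparison line that does \emph{not} move with $\beta$, namely the vertical line $t:x=-1$ (the $\gamma=90^{\circ}$ limit, lying beyond the admissible range and hence conservative). Against this fixed line the margin $1+a(\beta)-r(\beta)$ \emph{is} smallest at $\beta=45^{\circ}$ (about $0.30$ there, growing to about $0.41$ as $\beta\to0^{+}$), so the single numerical check at $\beta=45^{\circ}$ is legitimate. What remains --- and what the paper itself glosses inside ``observe that it is sufficient'' --- is the uniform-in-$\beta$ fact handling the near-horizontal positions of $\ell_2$, namely that $e'$ stays strictly below $\ell_1'$ (equivalently $1-b\ge r$); avoiding a vertical line on the left does not by itself preclude meeting a nearly horizontal tangent from below. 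That fact does admit a clean argument: $e'$ lies on the origin side of $\ell_3$, and the $x$-coordinate of its center exceeds $\tan(\beta/2)$, so $e'$ cannot reach the line $y=1$. To salvage your version you must either adopt this fixed-line comparison plus the $\ell_1'$ fact, or prove your two endpoint inequalities uniformly over all $\beta\in(0^{\circ},45^{\circ}]$ rather than at the single value $\beta=45^{\circ}$.
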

\begin{proof}
Let $t$ be the vertical line through the point $(-1,0)$.
Observe that it is sufficient to show that $e'$ does not intersect $t$,  since this implies that $e'$ does not  intersect $\ell_2$.
It is enough to show that $e'$ does not intersect $t$ when  $\beta = 45^{\circ}$.
Notice that if  $\beta = 45^{\circ}$, then the equation of $\ell_2$ is $y = -x + \sqrt{2}$.
Let $c(e')=(a,b)$.

Since $e'$ is tangent to $\ell_3$ and has $p^l$ and $q^-$ on its boundary, the following equations hold:
	\begin{enumerate}[leftmargin=1.5\parindent]
		\item $(a - x(p^l))^2 + (b - y(p^l))^2 = r(e)^2$,
		\item $(a - x(q^-))^2 + (b - y(q^-))^2= r(e)^2$, and
		\item $\frac{(a + b - \sqrt{2})^2}{2} = r(e)^2$.
	\end{enumerate}
	Solving these equations when $p^l=(-1/2,0)$ and $q^-=(0,-17/10)$ we get that
    $a > 0.44 $ and  $r(e') < 1.15$.
	Solving these equations when $p^l=(-1/2,0)$ and $q^-=(1/2,-5/2)$ we get that
    $a > 0.45 $ and  $r(e') < 1.44$.
  Solving these equations when $p^l=(-1/2,0)$ and $q^-=(-1/2,-183/100)$ we get that
    $a > 0.45 $ and $r(e') < 1.4$.
		Thus, in all cases $e'$ does intersect $t$, and therefore does not intersect $\ell_2$.
	\end{proof}

\begin{claim}
If the angle $45^{\circ} < \beta \le 75^{\circ}$, then $e'$ does not intersect $\ell_2$.
\end{claim}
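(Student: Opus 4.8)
The plan is to follow the template of the preceding claim (the case $\beta\le 45^\circ$): exhibit the extremal disk $e'$ explicitly and then show that the distance from its center to $\ell_2$ exceeds its radius. Recall that $e'$ passes through $p^l$ and $q^-$ and is tangent to $\ell_3$ (with point of tangency below $\ell'_1$), and that $\ell_3$ is the tangent to $d^*$ meeting the horizontal line $\ell_1$ at angle $\beta$, so $\ell_3$ has the normalized equation $\sin\beta\,x+\cos\beta\,y=1$. Writing $(a,b)$ for the center of $e'$, I would record its three defining equations, namely $(a-x(p^l))^2+(b-y(p^l))^2=r(e')^2$, the analogous equation with $q^-$ in place of $p^l$, and the tangency condition $(\sin\beta\,a+\cos\beta\,b-1)^2=r(e')^2$, and solve them for $a$, $b$, and $r(e')$. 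Since the two point-conditions are linear in $(a,b)$ and already express $a$ as an affine function of $b$, only a single quadratic in $b$ remains. This computation must be run once for each admissible $q^-$, i.e.\ $(0,-1.7)$, $(0.5,-2.5)$, and $(-0.5,-1.83)$.

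To locate $\ell_2$ I would invoke the standing hypothesis that the largest angle of $\Delta$ is the one at $\ell_2\cap\ell_3$. The three angles of $\Delta$ are $\beta$, $\gamma$, and $180^\circ-\beta-\gamma$, and the last being the largest gives $\gamma\le\min\{180^\circ-2\beta,\,90^\circ-\beta/2\}$, a bound on the inclination of $\ell_2$ (whose normalized equation is $\sin\gamma\,x-\cos\gamma\,y+1=0$). For fixed $\beta$ the line $\ell_2$ is closest to $e'$ when $\gamma$ is as large as this bound allows, so it suffices to verify $\mathrm{dist}\big((a,b),\ell_2\big)=\lvert\sin\gamma\,a-\cos\gamma\,b+1\rvert>r(e')$ at $\gamma=\min\{180^\circ-2\beta,\,90^\circ-\beta/2\}$.

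It then remains to confirm this inequality for every $\beta\in(45^\circ,75^\circ]$. I would argue that the binding configuration is the shallow end, as $\beta$ approaches $45^\circ$ from above: as $\beta$ decreases the disk $e'$ does shrink and pull away from $\ell_2$, but the largest-angle bound simultaneously lets $\ell_2$ rotate toward the vertical (up to $\gamma=67.5^\circ$ at $\beta=45^\circ$) and move back toward $e'$, and the second effect dominates, so the margin $\mathrm{dist}((a,b),\ell_2)-r(e')$ is smallest in this limit. Hence the claim reduces to a finite check at the limiting value $\beta=45^\circ$, $\gamma=67.5^\circ$, for each of the three $q^-$; in each case the margin comes out comfortably positive (on the order of a few tenths). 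The symmetric situation $y(c)<0$ then follows by reflecting the scene in the $x$-axis.

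The step I expect to be the main obstacle is exactly this last reduction. In the case $\beta\le 45^\circ$ the line $\ell_2$ could be separated from $e'$ by the single fixed vertical line $x=-1$, uniformly in $\gamma$; here no fixed line works, because for $\beta>45^\circ$ the disk $e'$ already crosses $x=-1$. One is therefore forced to treat $\beta$ (which governs $e'$) and $\gamma$ (which governs $\ell_2$) together and to certify that their competing movements leave a strictly positive, though small, margin across the whole interval---so correctly identifying the extremal configuration, rather than blindly taking the steepest $\ell_3$, is the delicate point.
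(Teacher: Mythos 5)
Your setup (the three defining equations for $e'$, the largest-angle bound $\gamma\le\min\{180^{\circ}-2\beta,\,90^{\circ}-\beta/2\}$, and the separate computation for each of the three values of $q^-$) matches the paper's, but your reduction of the two-parameter family $(\beta,\gamma)$ to the single limit configuration $\beta\to 45^{\circ+}$, $\gamma=67.5^{\circ}$ is a genuine gap. That reduction rests entirely on the assertion that, along the coupled curve $\gamma=\gamma_{\max}(\beta)$, the margin $\mathrm{dist}(c(e'),\ell_2)-r(e')$ is minimized at the shallow end because ``the second effect dominates.'' You never prove this, and it is the crux: both the disk $e'(\beta)$ and the line $\ell_2(\gamma_{\max}(\beta))$ move simultaneously, in opposite directions as far as the margin is concerned, so certifying which effect wins over the whole interval $(45^{\circ},75^{\circ}]$ (and doing so for all three positions of $q^-$) is exactly the kind of claim that requires an argument, not a plausibility statement. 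A single numeric check at the limit point, however comfortable its margin, covers nothing else without that monotonicity.

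Moreover, the premise that forces you into this coupled analysis --- ``here no fixed line works'' --- is false, and it is precisely where the paper goes the other way. Since $\beta>45^{\circ}$ and the largest angle of $\Delta$ is at $\ell_2\cap\ell_3$, one has $\gamma\le 90^{\circ}-\beta/2<67.5^{\circ}$ uniformly; the paper therefore takes as its fixed separator the tangent $t$ to $d^*$ at angle $67.5^{\circ}$, namely $y=\tan(67.5^{\circ})x+\frac{1}{\cos(67.5^{\circ})}$ (the failure of the vertical line $x=-1$ does not rule out every fixed line, only that one). This decouples the parameters: if $e'$ misses $t$, it misses every admissible $\ell_2$; and for the fixed line $t$ the extremal disk is the one at the steepest $\ell_3$, i.e.\ $\beta=75^{\circ}$, where $\ell_3$ is $y=-(2+\sqrt{3})x+\sqrt{2}+\sqrt{6}$. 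So the paper verifies the doubly-worst pair $(\beta,\gamma)=(75^{\circ},67.5^{\circ})$ --- a pair that is not even geometrically realizable, but dominates all realizable ones --- by one computation per choice of $q^-$, e.g.\ for $q^-=(0,-1.7)$ the center is roughly $(0.19,-0.71)$ with $r(e')<1$ while the distance to $t$ exceeds $1.4$. Your identification of $\beta\to 45^{\circ}$ as the binding \emph{realizable} configuration may well be numerically correct, but it buys you a harder proof obligation than the paper's, and as written that obligation is not discharged.
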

\begin{proof}
Let $t$ be a line of positive slope that is tangent to $d^*$ at a point above the $x$-axis and forms an angle of $67.5^{\circ}$ with the $x$-axis; the equation of $t$ is $y = \tan(67.5^{\circ})x + \frac{1}{\cos(67.5^{\circ})}$.
Notice that $\gamma$, the angle between $\ell_2$ and the $x$-axis, is less than $67.5^{\circ}$ (since $\beta > 45^{\circ}$ and the largest angle in $\Delta$ is between $\ell_2$ and $\ell_3$).
Observe that it is sufficient to show that $e'$ does not intersect $t$,  since this implies that $e'$ does not intersect $\ell_2$.
It is enough to show that $e'$ does not intersect $t$ when  $\beta = 75^{\circ}$, and in this case the equation of $\ell_3$ is  $y = -(2 + \sqrt{3})x + \sqrt{2} + \sqrt{6}$.

Let $c(e')=(a,b)$.
To show that $e'$ does not intersect $t$ we need to show that the distance of the center of $e'$ from $t$ is more than $r(e')$, that is:
\[ (*) \qquad | -a \sin (67.5^{\circ}) + b \cos (67.5^{\circ}) - 1  | > r(e')\, . \]

Since $e'$ is tangent to $\ell_3$ and has $p^l$ and $q^-$ on its boundary, the following equations hold:
	\begin{enumerate}[leftmargin=1.5\parindent]
		\item $(a - x(p^l))^2 +(b - y(p^l))^2 = r(e')^2$,
		\item $(a - x(q^-))^2 + (b - y(q^-))^2= r(e')^2$, and
		\item $\frac{ |(2 + \sqrt{3})a + b -(\sqrt{2} + \sqrt{6})| }{\sqrt{(2+\sqrt{3})^2 +1 }} =  r(e')$.
	\end{enumerate}

	Solving these equations when $p^l=(-1/2,0)$ and $q^-=(0,-17/10)$ we get that
    $a > 0.19 $, $b < -0.71$ and  $r(e') < 1$.
	Solving these equations when $p^l=(-1/2,0)$ and $q^-= (1/2,-5/2)$ we get that
    $a  < -0.02 $,  $b > 1.26$ and  $r(e') < 1.346$.
  Solving these equations when $p^l=(-1/2,0)$ and $q^-= (-1/2,-183/100)$ we get that
    $a < 0.131 $,  $b > -0.92 $ and $r(e') < 1.111$.

Thus, by setting these values in equation $(*)$ we get that in all cases $e'$ does not intersect $t$,
and therefore does not intersect $\ell_2$.
\end{proof}

\begin{claim}
If the angle $75^{\circ} < \beta < 90^{\circ}$, then $e'$ does not intersect $\ell_2$.
\end{claim}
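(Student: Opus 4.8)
The plan is to follow verbatim the template of the two preceding claims (the cases $\beta\le 45^\circ$ and $45^\circ<\beta\le 75^\circ$): replace the unknown line $\ell_2$ by a fixed proxy tangent line $t$ of $d^*$, reduce to the extremal value of $\beta$, solve a small system for the center and radius of $e'$, and verify a single distance inequality. First I would pin down the admissible range of the angle $\gamma$ that $\ell_2$ makes with the positive $x$-axis. Since the apex angle $180^\circ-\beta-\gamma$ of $\Delta$ is the largest, it is at least $\gamma$, so $\gamma\le(180^\circ-\beta)/2<(180^\circ-75^\circ)/2=52.5^\circ$. Accordingly I would take $t$ to be the positive-slope line tangent to $d^*$ at a point above the $x$-axis and forming an angle of $52.5^\circ$ with the $x$-axis, namely $y=\tan(52.5^\circ)\,x+1/\cos(52.5^\circ)$.

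Because $\gamma<52.5^\circ$, the line $t$ sits between $e'$ and $\ell_2$ in the region occupied by $e'$, so that showing $e'\cap t=\emptyset$ already yields $e'\cap\ell_2=\emptyset$. Concretely, the distance from $c(e')=(a,b)$ (which lies on the $d^*$-side of both lines) to the $\theta$-tangent is $g(\theta)=a\sin\theta-b\cos\theta+1$; for the centers produced by our system (small $a>0$, $b<0$) the function $g$ increases then decreases, so its minimum over $[0^\circ,52.5^\circ]$ is attained at $52.5^\circ$, giving $g(\gamma)\ge g(52.5^\circ)$ for every admissible $\gamma$. Thus $t$ is the hardest barrier and it suffices to argue about $t$.

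Next, exactly as in the case $45^\circ<\beta\le 75^\circ$, I would reduce to the extremal configuration $\beta=90^\circ$, for which $\ell_3$ degenerates to the vertical line $x=1$; as $\beta$ grows toward $90^\circ$ the tangency constraint with the steepening $\ell_3$ pushes $e'$ into its worst position relative to $t$. With $\ell_3$ equal to $x=1$ the tangency condition becomes simply $r(e')=1-a$, while the remaining two conditions state that $p^l=(-0.5,0)$ and $q^-$ lie on $\partial e'$:
\[(a+0.5)^2+b^2=r(e')^2,\qquad (a-x(q^-))^2+(b-y(q^-))^2=r(e')^2,\qquad r(e')=1-a.\]
This system solves routinely for explicit $(a,b,r(e'))$. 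I would then substitute the solution into the distance-to-$t$ inequality
\[(*)\qquad |-a\sin(52.5^\circ)+b\cos(52.5^\circ)-1|>r(e')\]
for each of the three admissible choices $q^-\in\{(0,-1.7),(0.5,-2.5),(-0.5,-1.83)\}$, as in the previous two claims, and check numerically that $(*)$ holds in every case (cf.\ Figure~\ref{fig:disk_e_prime}). This shows $e'$ misses $t$, hence misses $\ell_2$; combined with the two earlier $\beta$-ranges it proves that $e'$ cannot intersect both $\ell_2$ and $\ell_3$, completing the proof of Claim~\ref{cl:blabla} for disks whose center has negative $y$-coordinate and lies above $q^-$.

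The main obstacle is not the algebra but justifying the two reductions rigorously: that the proxy $t$ at $52.5^\circ$ genuinely shields $\ell_2$ for every admissible center of $e'$ (the monotonicity of $g$ described above, which hinges on the signs and magnitudes of $a$ and $b$ that the system produces), and that $\beta=90^\circ$ is indeed the worst case, i.e.\ that $e'$'s position relative to $t$ degrades monotonically as $\beta\uparrow 90^\circ$. Once these two monotonicity facts are secured, the remaining content of the claim is the finite numerical verification of $(*)$.
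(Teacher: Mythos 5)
Your proposal is correct and follows the paper's proof template essentially verbatim: replace $\ell_2$ by a proxy tangent line $t$ of $d^*$, reduce to the extremal case $\beta = 90^\circ$ (so that $\ell_3$ becomes the vertical line $x=1$), solve the same three-equation system for $c(e')=(a,b)$ and $r(e')$ for each of the three choices of $q^-$, and verify the distance-to-$t$ inequality $(*)$. The only deviation is the proxy angle: from ``the apex angle is at least $\gamma$'' you obtain $\gamma < 52.5^\circ$, whereas the paper uses the stronger consequence ``the apex angle is at least $\beta$'', giving $\gamma \le 180^\circ - 2\beta < 30^\circ$ and a tangent at $30^\circ$; your $52.5^\circ$ tangent still works (the inequality $(*)$ holds in all three cases, e.g.\ with margin roughly $0.11$ for $q^-=(0.5,-2.5)$, versus about $0.6$ at $30^\circ$), so the proof goes through, only with tighter numerical margins than the paper's.
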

\begin{proof}
Let $t$ be a line of positive slope that is tangent to $d^*$ at a point above the $x$-axis and forms an angle of $30^{\circ}$ with the $x$-axis; the equation of $t$ is $y =  \frac{1}{\sqrt{3}}x + \frac{2}{\sqrt{3}}$.
Notice that $\gamma$, the angle between $\ell_2$ and the $x$-axis, is less than $30^{\circ}$ (since $\beta > 75^{\circ}$ and the largest angle in $\Delta$ is between $\ell_2$ and $\ell_3$).
Observe that it is sufficient to show that $e'$ does not intersect $t$,  since this implies that $e'$ does not intersect $\ell_2$.
It is enough to show that $e'$ does not intersect $t$ when  $\beta = 90^{\circ}$, and in this case the equation of $\ell_3$ is $x= 1$.

Let $c(e')=(a,b)$.
To show that $e'$ does not intersect $t$ we need to show that the distance of the center of $e'$ to $t$ is more than $r(e')$, that is:
\[ (*) \qquad \frac{| \sqrt{3}b - a - 2 |}{2} > r(e')\, .\]

Since $e'$ is tangent to $\ell_3$ and has $p^l$ and $q^-$ on its boundary, the following equations hold:
	\begin{enumerate}[leftmargin=1.5\parindent]
		\item $(a - x(p^l) )^2 +(b - y(p^l))^2 = r(e')^2 $,
		\item $(a - x(q^-) )^2 + (b - y(q^-))^2=  r(e')^2$, and
		\item $ |a - 1|      =  r(e')$.
	\end{enumerate}
	Solving these equations when $p^l=(-1/2,0)$ and $q^-=(0,-17/10)$ we get that
    $a = \frac{17\sqrt{471}}{25} - \frac{147}{10}$, $b = \frac{\sqrt{471}}{5} - \frac{51}{10}$ and  $r(e') = \frac{157}{10} - \frac{17\sqrt{471}}{25}$.
	Solving these equations when $p^l=(-1/2,0)$ and $q^-= (1/2,-5/2)$ we get that
    $a  < -0.42 $,  $b > -1.42$ and  $r(e') < 1.42$.
  Solving these equations when $p^l=(-1/2,0)$ and $q^-= (-1/2,-183/100)$ we get that
    $a = \frac{-1163}{40000}$,  $b = \frac{-183}{200} $ and $r(e') = \frac{-1163}{40000}$.

Thus, by setting these values in equation $(*)$ we get that in all cases $e'$ does not intersect $t$, and therefore does not intersect $\ell_2$.
\end{proof}

\old{
\begin{lemma}
\label{Lem:intersectEll1}
Let $e \in D$ be  a disk whose its center $y$-coordinates are below the $y$-coordinate of $q^-$ and does not contain
$q^-$ and intersects both  $\ell_2$ and $\ell_3$, then $e$ intersects $\ell'_1$.
\end{lemma}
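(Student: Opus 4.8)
The plan is to work in the \emph{base setting}, where $d^*$ is the unit disk centered at $c^*=(0,0)$, the line $\ell_1$ is $y=-1$, and hence $\ell'_1$ is the horizontal line $y=1$. Writing $(a,b)$ for the center of $e$ and $r$ for its radius, the conclusion ``$e$ intersects $\ell'_1$'' is precisely the inequality $b+r\ge 1$, so this is what I must establish. Since the hypothesis places the center of $e$ below $q^-$, which lies well below $\ell_1$, the only part of $e$ that can reach $\ell'_1$ is its top, and thus $b+r\ge 1$ is the sole quantity to control. Before anything else I would record \emph{why both line-hypotheses are indispensable}: every disk of $D$ already meets $\ell_1$, which only forces $b+r\ge -1$; a disk centered just below $q^-$ and reaching only the more nearly vertical of $\ell_2,\ell_3$ can have its top far below $1$, so the argument must use intersection with \emph{both} $\ell_2$ and $\ell_3$, jointly with $q^-\notin e$.

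First I would reduce to a single extremal disk. Consider the quantity $b+r$ over all disks satisfying the hypotheses. Keeping $e$ tangent to the farther of $\ell_2,\ell_3$ and translating its center downward toward that line can only decrease $b+r$, and this process halts exactly when two constraints become tight together with the incidence $q^-\in\partial e$ (the boundary of the ``does not contain $q^-$'' condition). The decisive interaction is that, with the center below $q^-$, reaching a line demands a large radius while $q^-\notin e$ caps the radius by $|c(e)-q^-|$; feasibility therefore pushes the center off to the side and pins $q^-$ on the boundary. I would let $e'$ denote the resulting extremal disk, with data $(a',b')$ and $r'$, and, exactly as in the first sub-claim of this section, pin it by three solvable conditions: tangency to a proxy tangent line $t$ of $d^*$, external tangency to $d^*$ (available because every disk of $D$ meets $d^*$, so $|c(e)|\le r+1$), and $q^-\in\partial e'$. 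It then suffices to prove $b'+r'\ge 1$, since $b+r\ge b'+r'$ for every admissible $e$.

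To carry out the computation I would exploit the base-setting restrictions on $\ell_2,\ell_3$: $\ell_2$ has positive slope, $\ell_3$ negative slope, both are tangent to $d^*$ at upper points, the largest angle of $\Delta$ sits at their apex, and $c^*\in\triangle x_1x_2x_3$. These bound the admissible inclinations, and I would split the analysis as in the preceding claims according to the angle $\beta$ between $\ell_1$ and $\ell_3$ (the ranges $\beta\le 45^{\circ}$, $45^{\circ}<\beta\le 75^{\circ}$, and $75^{\circ}<\beta<90^{\circ}$). Within each range I would choose the proxy $t$ to be a fixed tangent of $d^*$ lying below the binding line in that range (a vertical line through $(1,0)$ or $(-1,0)$, or a $30^{\circ}$ or $67.5^{\circ}$ tangent), so that tangency to $t$ produces a disk no higher than $e'$. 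Solving the three defining equations for $e'$ then yields explicit $b'$ and $r'$, and $b'+r'\ge 1$ becomes a direct numerical check for each admissible value $q^-\in\{(0,-1.7),(0.5,-2.5),(-0.5,-1.83)\}$, after which Observation~\ref{radius} and the surrounding distance estimates close each case.

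The main obstacle is the reduction rather than the arithmetic. I must argue rigorously that the configuration I single out really minimizes $b+r$ — that no admissible disk sits lower — and that the chosen proxy line $t$ is genuinely ``inside'' the true binding line in \emph{every} configuration of the range, so that comparing against $t$ is conservative. The delicate point is precisely that $\ell_2$ and $\ell_3$ must be used together: when one line is nearly vertical the disk meets it cheaply, and it is the second line (with $q^-\notin e$) that forces the disk upward to $\ell'_1$. Getting the split on $\beta$ to capture this trade-off, and checking that the worst admissible inclination still gives $b'+r'\ge 1$, is where the real work lies; once the extremal disk is correctly identified the remaining inequalities are routine.
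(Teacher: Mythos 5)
Your core mechanism is the same as the paper's: reduce to an extremal disk pinned by three incidence conditions, solve the resulting system explicitly, and do a numerical check for each of the three values $q^-\in\{(0,-1.7),(0.5,-2.5),(-0.5,-1.83)\}$, with a conservative tangent of $d^*$ standing proxy for the far line among $\ell_2,\ell_3$. Your pinning (tangency to the proxy $t$, external tangency to $d^*$, and $q^-\in\partial e'$, then verify $b'+r'\ge 1$) is exactly the system used in the paper's revised treatment of this claim; the paper's own proof of the lemma as stated dualizes it, pinning the extremal disk by tangency to $\ell_1'$ and to $t$ with $q^-$ on its boundary, and then showing this disk cannot reach $d^*$, contradicting $e\cap d^*\neq\emptyset$. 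These are equivalent formulations, and the monotonicity behind the reduction is asserted rather than proved in the paper too, so flagging it as delicate does not set you apart.

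The genuine gap is in your case split. You organize the analysis by the angle $\beta$ between $\ell_1$ and $\ell_3$ (ranges $\beta\le 45^{\circ}$, $45^{\circ}<\beta\le 75^{\circ}$, $75^{\circ}<\beta<90^{\circ}$, with a $67.5^{\circ}$ proxy); that split belongs to the \emph{other} claims of this appendix, which treat disks centered \emph{above} $q^-$ and show the extremal disk avoids $\ell_2$. For the present lemma the paper splits by the sign of $x(c(e))$ and, when $x(c(e))>0$, by the sign of $y(c(d))$: which of $\ell_2,\ell_3$ is binding depends on which side $e$ sits on, not on $\beta$, and the crucial $30^{\circ}$ proxy in the sub-case $x(c(e))>0$, $y(c(d))<0$ (where $q^-=(-0.5,-1.83)$) is justified by the disk $d$ itself --- since $r(d)\le 2$ and $d$ must meet $\ell_2$, the inclination of $\ell_2$ is at most $30^{\circ}$. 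No range of $\beta$ can supply this bound, because $\beta$ is unconstrained in that sub-case (the largest-angle condition only yields $\gamma\le\min\{180^{\circ}-2\beta,\,90^{\circ}-\beta/2\}$, vacuous for small $\beta$), and the always-valid vertical proxy through $(-1,0)$ provably fails there: solving $r=a+1$, $a^2+b^2=(r+1)^2$, $(a+0.5)^2+(b+1.83)^2=r^2$ with center below $q^-$ gives $b\approx -3.51$, $r\approx 3.09$, so $b+r\approx -0.43<1$, i.e., the extremal disk does not reach $\ell_1'$. So as written, your $\beta$-driven proxy selection leaves exactly that sub-case unproved; replace the split by the sign of $x(c(e))$ together with the location of $c(d)$ (which is what determines which value of $q^-$ occurs), and the rest of your plan goes through as in the paper.
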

\begin{proof}
To show this, we use the fact that $e$ intersects disks $d^*$ and $d$, and
that any such disk intersects $\ell'_1$.
Showing this is equivalent to show that any disk that is tangent to  $\ell_3$ and $\ell_2$,
having its center below $q^-$, and intersects $\ell'_1$ does not intersect both $d^*$ and $d$.

First consider the case where $q^- = (0,-1.7)$ (the case where $\alpha \leq 17^\circ$).
Assume w.l.o.g. that the center of $e$  is at the third quadrant of $q^-$.

Let $c$ be a disk center at the the third quadrant of $q^-$, such that it has $q^-$ on its boundary and it is
tangent to $\ell'_1$ and to the vertical line through point $(1,0)$.
Thus we have
\begin{enumerate}
    \item $r(c)= |-y(c) + 1|$, \ \ \ \ since   $c$ is tangent to $\ell'_1$.
		\item $r(c)= |-x(c) + 1|$, \ \ \ \ since   $c$ is tangent to the vertical line through point $(1,0)$.
		\item $(x(c) - x(q^-))^2 + (y(c) - y(q^-))^2=  r(c)^2$, \ \   having point $q^-$ on the boundary of $c$.
\end{enumerate}
These equations solve to
  \[ x(c) = - \frac{27}{10} -3 \sqrt{\frac{3}{5}}  , \ \ y(c)= - \frac{27}{10} -3 \sqrt{\frac{3}{5}}  \]

	By the above claim it suffices to show that $c$ does not intersect $d^*$, that is the distance between their centers is more than the sum of their radii:

		$  ( x(c) - 0 )^2 + (y(c) - 0)^2  >    (1 + r(c))^2$  
		This can be verified by plugging $x(c)$, $y(c)$, and $r(c)$, as follows:

	\begin{align}
	&\notag (x(c) - 0 )^2 + (y(c) - 0)^2 =\\
	&\notag \left(- \frac{27}{10} -3 \sqrt{\frac{3}{5}} \right)^2 + \left(- \frac{27}{10} -3 \sqrt{\frac{3}{5}} \right)^2 \ge\\
	&\notag \left(2 +  \frac{27}{10} + 3 \sqrt{\frac{3}{5}} \right)^2.
	\end{align}


	The above inequalities imply that $c$ does not intersect $d^*$. Therefore, since $e$ intersects $d^*$, $e$ must intersects $\ell'_1$.
	The proof for the case where the center of $e$  is at the fourth quadrant of $q^-$ is symmetric.


Next consider the case where $\alpha > 17^\circ$.
This case is split into two case with respect to the $x$-coordinate of $e$ (since $q^-$ is not on the $y$-axis,
the proof is not symmetric if $e$ is in the third or fourth quadrant of $q^-$).
Moreover, the proof of each sub-case is split  into two cases with respect to the two locations of $q^-$
(the two different cases with respect to the $y$-coordinates of $d$).

Assume that the center of $e$  is at the third quadrant of $q^-$.

Let $c$ be a disk centered at the the third quadrant of $q^-$, such that it has $q^-$ on its boundary and it is
tangent to $\ell'_1$ and to the vertical line through point $(1,0)$.
Thus we have
\begin{enumerate}
    \item $r(c)= |-y(c) + 1|$, \ \ \ \ since   $c$ is tangent to $\ell'_1$.
		\item $r(c)= |-x(c) + 1|$, \ \ \ \ since   $c$ is tangent to the vertical line through point $(1,0)$.
		\item $(x(c) - x(q^-))^2 + (y(c) - y(q^-))^2=  r(c)^2$, \ \   having point $q^-$ on the boundary of $c$.
\end{enumerate}
These equations solve to

$ x(c) =  -3 - \sqrt{\frac{7}{2}}  , \ \ y(c)=  - 3- \sqrt{\frac{7}{2}}  $
	when $q^- = (0.5,-2.5)$ and \\
   $ x(c) = - \frac{333}{100} - \frac{\sqrt{849}}{10} , \ \
	   y(c)=  - \frac{333}{100} - \frac{\sqrt{849}}{10} $  when $q^- = (-0.5,-1.83)$

	Now, we show that $c$ does not intersect $d^*$, that is the distance between their centers is more than the sum of their radii:

		$  ( x(c) - 0 )^2 + (y(c) - 0)^2  >    (1 + r(c))^2$  
		This can be verified by plugging $x(c)$, $y(c)$, and $r(c)$, as follows:

	\begin{align}
	&\notag (x(c) - 0 )^2 + (y(c) - 0)^2 =\\
	&\notag \left( -3 - \sqrt{\frac{7}{2}} \right)^2 + \left(-3 - \sqrt{\frac{7}{2}} \right)^2 \ge\\
	&\notag \left(2 +  3 + \sqrt{\frac{7}{2}} \right)^2.
	\end{align}

	and

	\begin{align}
	&\notag (x(c) - 0 )^2 + (y(c) - 0)^2 =\\
	&\notag \left( - \frac{333}{100} - \frac{\sqrt{849}}{10} \right)^2 + \left(- \frac{333}{100} - \frac{\sqrt{849}}{10} \right)^2 \ge\\
	&\notag \left(2 +   \frac{333}{100} + \frac{\sqrt{849}}{10} \right)^2.
	\end{align}


	The above inequalities imply that $c$ does not intersect $d^*$. Therefore, since $e$ intersects $d^*$, $e$ must intersects $\ell'_1$.

Finally, assume that the center of $e$  is at the fourth quadrant of $q^-$.

Let $\ell$ be the vertical line through point $(-1,0)$, and let $\ell'$ be the line that is tangent to $d^*$ and of convex angle
$30^{\circ}$ with line $\ell_1$.
Notice that in the case the center of $d$ has negative $y$-coordinate the angle between $\ell_2$ and $\ell_1$ is at most $30^circ$,
otherwise $\ell_2$ will not intersect $d$ (the radius of $d$ is at most 2).

Let $c$ be a disk centered at the the fourth quadrant of $q^-$, such that it has $q^-$ on its boundary and it is
tangent to $\ell'_1$ and to $\ell$, in the case that the $y$-coordinates of $d$ are positive.
And in the in the case that the $y$-coordinate of $d$ are negative,
let $c$ be a disk centered at the the fourth quadrant of $q^-$, such that it has $q^-$ on its boundary and it is
tangent to $\ell'_1$ and to $\ell'$.

Thus, we have
\begin{enumerate}
    \item $r(c)= |-y(c) + 1|$, \ \ \ \ since   $c$ is tangent to $\ell'_1$.
		\item $r(c)= |x(c) + 1|$, \ \ \ \ since   $c$ is tangent to $\ell$ (the vertical line through point $(-1,0)$).
		\item $(x(c) - x(q^-))^2 + (y(c) - y(q^-))^2=  r(c)^2$, \ \   having point $q^-$ on the boundary of $c$.
\end{enumerate}
These equations, when the $y$-coordinates of $d$ are positive and hence $q^-= (0.5,-2.5)$,
solve to

$ x(c) =        4 + \sqrt{\frac{21}{2}}  , \ \
  y(c)=  - 4 - \sqrt{\frac{21}{2}}  $ \\
	and when the $y$-coordinates of $d$ are negative and hence $q^-= (-0.5,-1.83)$, solve to \\
   $ x(c) =   \frac{233}{100} + \frac{\sqrt{283}}{10} , \ \
	   y(c)=  - \frac{233}{100} - \frac{\sqrt{283}}{10} $.

	Now, we show that $c$ does not intersect $d^*$, that is the distance between their centers is more than the sum of their radii:

		$  ( x(c) - 0 )^2 + (y(c) - 0)^2  >    (1 + r(c))^2$.
		This can be verified by plugging $x(c)$, $y(c)$, and $r(c)$, as follows:

	\begin{align}
	&\notag (x(c) - 0 )^2 + (y(c) - 0)^2 =\\
	&\notag \left(  4 + \sqrt{\frac{21}{2}}  \right)^2 +
	        \left( 4 + \sqrt{\frac{21}{2}}   \right)^2 \ge\\
	&\notag \left(2 +   4 + \sqrt{\frac{21}{2}}  \right)^2.
	\end{align}
and
\begin{align}
	&\notag (x(c) - 0 )^2 + (y(c) - 0)^2 =\\
	&\notag \left(  \frac{233}{100} + \frac{\sqrt{283}}{10}  \right)^2 +
	        \left( \frac{233}{100} + \frac{\sqrt{283}}{10}   \right)^2 \ge\\
	&\notag \left(2 +  \frac{233}{100} + \frac{\sqrt{283}}{10}  \right)^2.
	\end{align}

	The above inequalities imply that $c$ does not intersect $d^*$. Therefore, since $e$ intersects $d^*$, $e$ must intersects $\ell'_1$.
\end{proof}

} 


\section{Proofs of Claims~\ref{cl:R5moreThan02} and \ref{cl:R20lessThan05}}
\label{sec:app_B}
\setcounter{theorem}{9}
\begin{claim}
	Assume that $D^{-}_{ \leq 20} $ does not contain any disk with $\delta(\cdot)$ greater or equal to $0.5$.
	Let $d'$ be the disk in $D^{-}_{ \leq 5}$ with maximum $\delta(d')$, and assume that its center lies on the positive $x$-axis.
	If $0.11  \le \delta(d') < 0.5$, then $\{(0,0), (2,0), (-0.15, 1.75), (-0.15, -1.75)\}$ stabs $D$.
\end{claim}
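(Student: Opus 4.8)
The plan is to follow the same template as Claims~\ref{cl:R5moreThanHalf} and~\ref{cl:moreThanHalf}. Every disk in $D\setminus D^-$ contains $(0,0)$, so it suffices to treat $e\in D^-$. Writing $q=(-0.15,1.75)$ and using the reflection across the $x$-axis, I only need to show that every $e\in D^-$ with $y(e)\ge 0$ contains $(2,0)$ or $q$; the case $y(e)<0$ is symmetric and is handled by $(0,0),(2,0),(-0.15,-1.75)$. Recall that the center of $d'$ is $(r(d')+\delta(d'),0)$ with $r(d')\le 5$ and $0.11\le\delta(d')<0.5$, and that, by hypothesis, every disk of $D^-_{\le 20}$ (in particular every disk of $D^-_{\le 5}$) has $\delta(\cdot)<0.5$. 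I split on the sign of $x(e)$.

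For $x(e)>0$ I would argue exactly as in Claim~\ref{cl:moreThanHalf}, distinguishing $r(e)\ge 5$ and $r(e)<5$. When $r(e)\ge 5$, let $d_5$ be the radius-$5$ disk through $(2,0)$ and $q$ with positive $x$-center; a direct computation of its center shows its distance to $c^*$ exceeds $5+1$, so $d_5$ does not meet $d^*$, and Observations~\ref{outside-strip} and~\ref{radius} (with $\ell$ the line through $(2,0)$ and $q$, the strip $V$ it determines, $\eta=d^*$, and $\delta=d_5$) give $e\ni(2,0)$ or $e\ni q$; if $c(e)$ lies below $\ell$ then $e\ni q$ because $r(e)>2$ and $c^*\notin e$. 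When $r(e)<5$ the disk $e$ lies in $D^-_{\le 5}$, so the hypothesis yields $\delta(e)<0.5$ and hence $e$ meets the disk $d_{0.5}$ of radius $0.5$ centered at $c^*$; now let $d_2$ be the radius-$2$ disk through $(2,0)$ and $q$ with positive $x$-center, verify that its distance to $c^*$ exceeds $2+0.5$ so that $d_2\cap d_{0.5}=\emptyset$, and apply the same two observations with $\eta=d_{0.5}$ and $\delta=d_2$.

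The harder case is $x(e)\le 0$, which I would resolve as in the corresponding parts of Claims~\ref{cl:R5moreThanHalf} and~\ref{cl:moreThanHalf}. Let $d_2$ be the radius-$2$ disk through $(0,0)$ and $q$ with negative $x$-center; solving the two circle equations gives its center explicitly, and using $r(d')\le 5$ together with $0.11\le\delta(d')<0.5$ one checks that the distance between the centers of $d_2$ and $d'$ exceeds $2+r(d')$, so $d_2\cap d'=\emptyset$. Next fix the radius-$5$ disk $d''$ on the positive $x$-axis whose left portion contains that of every admissible $d'$, so that $e\cap d'\neq\emptyset$ forces $e\cap d''\neq\emptyset$, and let $t'$ be the upper mutual tangent of $d''$ and $d^*$; one verifies that $q$ lies above $t'$. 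Let $t$ be the line through $q$ parallel to $t'$, let $t_\bot$ be the line through $q$ perpendicular to $t$, and let $P=P(q,d^*)$ be the bisector of $q$ and $d^*$. Checking that every point $p$ on $t_\bot$ with $x(p)\le 0$ is closer to $q$ than to $d^*$ shows that $t_\bot$ lies above $P$ left of the $y$-axis. If $c(e)$ is above $t_\bot$ then $e$, which meets $d^*$, must contain $q$; otherwise $c(e)$ lies below $t_\bot$ and, since $e\in D^-$, $r(e)>2$, and $c(e)$ is in the second quadrant, it lies above $t$, so either $e$ meets $t$ at a point above $q$ (whence $e\ni q$ by the triangle inequality) or $e$ meets the clockwise arc of $d_2$ from $c^*$ to $q$ (whence $e\ni q$ since $r(e)>2$ and $c^*\notin e$).

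The main obstacle is this last case: unlike Claim~\ref{cl:R5moreThanHalf}, here $\delta(d')$ may be as small as $0.11$, so $d'$ can approach $c^*$ closely and the quantitative margins are thinner. I expect the delicate points to be (i) verifying $d_2\cap d'=\emptyset$ over the whole range $0.11\le\delta(d')<0.5$, $2<r(d')\le 5$, where the binding configuration is the smallest, leftmost admissible $d'$, and (ii) choosing the center of $d''$ and confirming both that $q$ lies above the mutual tangent $t'$ and that $t_\bot$ stays above $P$ for all $x\le 0$. Once the tangent equation of $t'$ and the center of $d_2$ are written out, these reduce to elementary inequalities of the type already dispatched in the earlier claims.
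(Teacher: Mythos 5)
Your proposal follows the paper's own proof essentially verbatim: the same reduction to $y(e)\ge 0$, the same three cases ($x(e)\le 0$; $x(e)>0$ with $r(e)\ge 5$ or $r(e)<5$), the same auxiliary disks $d_2$, $d_5$, $d_{0.5}$, and $d''$ with the tangent/bisector argument ($t'$, $t$, $t_\bot$, $P$), and the same invocation of Observations~\ref{outside-strip} and~\ref{radius}. The paper simply carries out the numerical verifications you defer (e.g., it takes $d''$ of radius $5$ centered at $(5.11,0)$ and computes $t'$ explicitly), so your outline is correct and matches its approach.
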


\begin{proof}

	Any disk in $D\setminus D^-$ contains $(0,0)$. Thus, we prove the claim for $D^-$. We claim for any disk $e \in D^{-}$ that if $y(e)\ge 0$ then $e$ contains
  $(2,0)$ or $(-0.15, 1.75)$, otherwise $e$ contains
	$(2,0)$ or $(-0.15, -1.75)$.
	We prove our claim for $y(e)\ge 0$, since the proof for $y(e) < 0$ is symmetric. Notice that the center of $d'$ is $(r(d') + \delta(d'),0)$. We consider the following three cases.

	\paragraph*{$x(e) \le 0$.}
		Let $d_{2}$ be the disk of radius 2, with negative $x(d_2)$, that has $(0, 0)$ and $(-0.15, 1.75)$ on its boundary.
		If $(x_2,y_2)$ denotes the center of $d_{2}$, then we have the following equations
		\[ \qquad  x_{2}^2 + y_{2}^2 = 4, \qquad  (x_2 + 0.15)^2 +(y_2 - 1.75)^2 = 4,\]
		which solve to  $x_2 = - \frac{3}{40} - \frac{21 \sqrt{ {287}/{617}}}{8}$ and $y_2 = \frac{7}{8} - \frac{ 9 \sqrt{ {287}/{617}}}{40}$.

		Since $r(d') \leq 5$ and $ \delta(d') > 0.11$, the following inequality holds, and thus $d_2$ and $d'$ do not intersect: $(x_2 - (r(d') + \delta(d')) )^2 + (y_2 - 0)^2 > (2+ r(d'))^2$.

			As above, we use this fact to show that any disk $e \in D^-$ with $y(e) \ge 0$ and $x(e) \leq 0$ (that
		intersects both $d'$ and $d^*$) must contain the point $q= (-0.15,1.75)$.

		Let $d''$ be the disk of radius 5 centered at $(5.11,0)$. Notice that since $e$ intersects $d'$ it must also intersect $d''$.
		Let $t'$ be the (relevant) mutual tangent to $d''$ and $d^*$, i.e., $t': y = \frac{400}{\sqrt{101121}}x + \frac{511}{\sqrt{101121}}$. It is easy to verify that the point $q$ lies above $t'$. Now, let $t$ be the line parallel to $t'$ that passes through $q$, let $t_{\bot}$ be the line perpendicular to $t$ that passes through $q$, and let $P=P(q,d^*)$ be the bisector between $q$ and $d^*$.

		We first observe that in the halfplane to the left of the $y$-axis, $t_{\bot}$ is above  $P$. This can be verified by showing that for any point $p$ on $t_{\bot}$ with $x(p) \le 0$, we have that the distance between $p$ and $q$ is less than the distance between $p$ and $d^*$ (which is the distance between $p$ and $c^*$ minus 1).

	 If $c(e)$ is above $t_{\bot}$ then clearly $e$ contains $q$ (since $e$ intersects $d^*$), so assume that $c(e)$ is below $t_{\bot}$. Our assumptions on $e$ ($e \in D^-$, $r(e) > 2$, $c(e)$ is in the second quadrant) imply that $c(e)$ is above $t$.  Now, if $e$ intersects $t$ at a point above $q$, then by the triangle inequality $e$ contains $q$. Otherwise, since $e$ intersects $d''$, $e$ must intersect the clockwise arc of $d_2$ from $c^*$ to $q$. But, then $e$ contains $q$ (since $r(e) > r(d_2) =2$ and $c^* \notin e$).

		\paragraph*{$x(e) > 0$ and $r(e) \ge 5$.}
		Let $d_{5}$ be the disk of radius 5, with positive $x(d_5)$, that has $(2, 0)$ and $(-0.15, 1.75)$ on its boundary. If $(x_5,y_5)$ denotes the center of $d_{5}$, then we have
		\[ \qquad  (x_{5} - 2)^2 + y_{5}^2 = 25, \qquad  (x_{5} + 0.15)^2 +(y_{5} - 1.75)^2 = 25,\]
		which solve to  $x_5 = \frac{37}{40} + \frac{ 7 \sqrt{ {18463}/{1537}}}{8}$ and $y_5 = \frac{7}{8} + \frac{ 43 \sqrt{ {18463}/{1537}}}{40}$.

		Since the distance between the centers of $d_5$ and $d^*$ is greater than the sum of their radii, i.e., $(x_5)^2 + (y_5 )^2 > (1 + 5)^2$,
		these disks do not intersect.
		Thus, using observation~\ref{outside-strip} and~\ref{radius} as above, we have that $e$ contains point $(2,0)$ or point $(-0.15, 1.75)$.

		\paragraph*{$x(e) > 0$ and $r(e)<5$.}
		Recall that $r(e)> 2$. Since $e\in D^-_{\le 5}$, by the claim's assumption we have that $\delta(e) < 0.5$.
		Let $d_{2}$ be the disk of radius 2, with positive $x(d_2)$, that has $(2, 0)$ and $(-0.15, 1.75)$ on its boundary.	If $(x_2,y_2)$ denotes the center of $d_{2}$, then
		\[ \qquad  (x_{2} - 2)^2 + y_{2}^2 = 4, \qquad  (x_{2} + 0.15)^2 +(y_{2} - 1.75)^2 = 4,\]
		which solve to  $x_2 = \frac{37}{40} + \frac{ 7 \sqrt{ {1663}/{1537}}}{8}$ and $y_2 = \frac{7}{8} + \frac{ 43 \sqrt{ {1663}/{1537}}}{40}$.

		We claim that $d_{2}$ does not intersect the disk of radius $0.5$ and center $c^*$, which we denote by $d_{0.5}$.
		This can be verified by showing that the distance between the centers of these disks is greater than the sum of their radii, that is, $x_{2}^2 + y_{2}^2 > (0.5 + 2)^2$.
		Notice that $e$ intersects $d_{0.5}$ because $\delta(e) < 0.5$.
		Thus, again, using observations~\ref{outside-strip} and~\ref{radius} we have that $e$ contains point $(2,0)$ or point $(-0.15, 1.75)$. 
    %
	%
\end{proof}


Claims \ref{cl:R5moreThanHalf}, \ref{cl:moreThanHalf}, and \ref{cl:R5moreThan02} imply that if $D^-_{\le 5}$ contains a disk with $\delta(\cdot) \ge 0.11$, or if $D^-_{\le 20}$ contains a disk with $\delta(\cdot) \ge 0.5$, then there exists a set of four points that stabs $D$. It remains to prove our claim for the case where every disk in $D^-_{\le 5}$ has $\delta(\cdot) < 0.11$ and every disk in $D^-_{\le 20}$ has $\delta(\cdot) < 0.5$. To that end, we assume the {\color{mycolor} base setting}.


In what follows we show that in this case $D$ is stabbed by $(0,0),(2.5,1),(-2.5,1),$ and $(0,-1.52)$.
Here we assume w.l.o.g. that  one of the tangents ($\ell_1$) is as described in Section~\ref{secDminBig}. See Figure~\ref{fig:finA}.

\begin{figure}[hbt]
	\centering
	\includegraphics[scale=0.4]{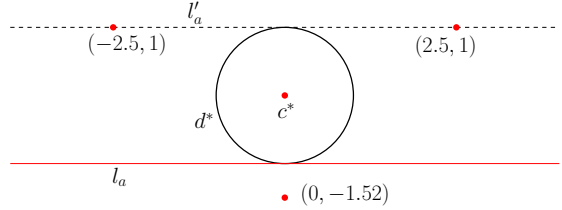}
	\caption{The 4 piercing points in this case are depicted in red.}
	\label{fig:finA}
\end{figure}

\begin{claim}
	If for every disk  $e_1 \in D^{-}_{ \leq 5}$ we have $\delta(e_1) < 0.11$ and for every disk $e_2 \in D^{-}_{ \leq 20}$ we have $\delta(e_2) < 0.5$, then
	the set $\{(0,0),(2.5,1),(-2.5,1),(0,-1.52)\}$ stabs $D$.
\end{claim}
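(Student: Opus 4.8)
The plan is to show that every disk of $D$ contains a point of $\{(0,0),(2.5,1),(-2.5,1),(0,-1.52)\}$. As in the previous claims, every disk in $D\setminus D^-$ contains $(0,0)$, so I restrict attention to a disk $e\in D^-$ with center $c$ and radius $r(e)>2$. The first thing I would record is that, since $e$ intersects $d^*$, its center satisfies $|cc^*|\le r(e)+1$, i.e. $\delta(e)\le 1$; combined with the hypothesis this gives a clean three-regime bound on how tightly $e$ hugs $c^*$, namely $\delta(e)<0.11$ when $r(e)\le 5$, $\delta(e)<0.5$ when $5<r(e)\le 20$, and $\delta(e)\le 1$ when $r(e)>20$. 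Equivalently, $e$ meets the disk of radius $0.11$, $0.5$, or $1$ about $c^*$ in the respective regimes (the last being $d^*$ itself). I would also recall the two base-setting facts: every disk of $D$ meets $\ell_1$, and every disk whose center has positive $y$-coordinate meets $\ell_1'$ as well. Finally, since the reflection $x\mapsto -x$ fixes the point set and preserves the base-setting structure (swapping the roles of $\ell_2$ and $\ell_3$), I may assume $x(c)\ge 0$.

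The core argument splits on the sign of $y(c)$. For $y(c)\ge 0$ I claim $e$ contains $a=(-2.5,1)$ or $b=(2.5,1)$. Let $V$ be the vertical strip $\{|x|\le 2.5\}$. If $c\in V$ and lies above the disk through $a,b$ tangent to $\ell_1$ from above (whose center is $(0,\tfrac{25}{16})$), then Observation~\ref{b} applied with $\ell=\ell_1$ gives the result, since $e$ meets $\ell_1$. If $c$ lies outside $V$ but above $\ell_1'$, I would apply Observation~\ref{outside-strip} with $\ell=\ell_1'$ and $\eta=d^*$: because $d^*\subseteq V$ lies below $\ell_1'$ and $e$ meets $d^*$, $e$ meets $\eta\cap V$ and hence contains $a$ or $b$. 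The remaining subregions ($c\in V$ below the tangent disk, or $c$ outside $V$ with $0\le y(c)\le 1$) I would dispatch exactly as in the proofs of Claims~\ref{cl:R5moreThanHalf}--\ref{cl:R5moreThan02}: exhibit the extremal disk through $a$ and $b$ of the regime's radius ($2$, $5$, or $20$), tangent to $\ell_1$ or $\ell_1'$, verify by a direct distance computation that it is disjoint from the central disk that $e$ is guaranteed to meet, and conclude via Observation~\ref{radius} that $e$ must contain $a$ or $b$.

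For $y(c)<0$ the target point is $(0,-1.52)$, and here lies the main difficulty. When the direction of $c$ is close to straight down, the hugging bound $\delta(e)\le 1$ together with $r(e)>2$ forces $(0,-1.52)$, which is at distance $1.52$ from $c^*$, into $e$; I would make this precise with Observation~\ref{radius} using $\ell_1'$ and the extremal disk tangent to $\ell_1'$ through the relevant boundary points. When $c$ is close to the $x$-axis, the disk opens across the origin and a computation of the type above shows it already contains $(2.5,1)$, again using that $e$ meets $d^*\subseteq V$. The delicate part is the intermediate band of directions, where neither a pure straight-down nor a pure side estimate suffices: there I would invoke the additional constraint that $e$ must also meet $\ell_2$ and $\ell_3$, whose slopes are controlled by the base-setting assumption that the largest angle of $\Delta$ sits at $\ell_2\cap\ell_3$, precisely as in the proof of Claim~\ref{cl:blabla}. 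Constructing the extremal disk $e'$ tangent to $\ell_1'$ and to the relevant bounding line on the $\ell_2$/$\ell_3$ side and passing through $(0,-1.52)$, and verifying that it cannot simultaneously avoid every point of the set while still meeting $d^*$, closes the gap.

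I expect the main obstacle to be twofold. First, establishing completeness of the subdivision of the lower half-plane: the straight-down range, the near-axis range, and the intermediate range must be glued with no gap, and it is exactly in the intermediate range that the auxiliary lines $\ell_2,\ell_3$ and the angle bound on $\Delta$ must be brought in, mirroring the appendix argument for Claim~\ref{cl:blabla} with its case split on the angle $\beta$. Second, the bookkeeping needed to confirm that the constants $2.5$, $1$, and $1.52$ are tuned so that every extremal-disk distance inequality holds \emph{simultaneously} across all three radius regimes $r(e)\le 5$, $5<r(e)\le 20$, and $r(e)>20$. Once the extremal disks are written down, each inequality reduces to a routine (if lengthy) calculation of the kind already carried out in the preceding claims.
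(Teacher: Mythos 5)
You correctly isolate the heart of the matter---the three-regime fact that every disk $e\in D^-$ must meet the disk of radius $0.11$, $0.5$, or $1$ centered at $c^*$ according to whether $r(e)\le 5$, $5<r(e)\le 20$, or $r(e)>20$---but the half-plane decomposition you build on top of it breaks in two places. First, in the upper half-plane your recipe for the leftover regions (``exhibit the extremal disk through $a$ and $b$ of the regime's radius, tangent to $\ell_1$ or $\ell_1'$, verify it is disjoint from the central disk, conclude by Observation~\ref{radius}'') cannot be executed with $a=(-2.5,1)$, $b=(2.5,1)$: these points are at distance $5$ apart, so any disk through both has radius at least $2.5$, and every disk through both whose center lies below the line $y=3.625$ (in particular every one tangent to $\ell_1$ from above, which has center $(0,25/16)$ and radius $41/16$) contains the origin. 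Hence no disk through this pair is disjoint from \emph{any} of the three central disks, and Observation~\ref{radius} is never applicable in this configuration; for instance, a disk of $D^-$ centered at $(10,0.5)$ with radius $9.6$ meets $d^*_{0.5}$ but is caught neither by your strip case nor by Observation~\ref{outside-strip} with $\ell=\ell_1'$ (its center is on the wrong side of $\ell_1'$). Second, the ``intermediate band'' of the lower half-plane---which you yourself flag as the delicate part---is left genuinely unproved: you defer it to an analogue of the appendix proof of Claim~\ref{cl:blabla}, but that argument hinges on the anchor disk $d\in D^{-}_{\le 2}$ of radius at most $2$ to bound the slope of $\ell_2$ (``otherwise $d$ does not intersect $\ell_2$''), and in the present case $\rmin>2$, so no such disk exists; the base setting only locates the largest angle of $\Delta$, which is too weak to run that case analysis. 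Moreover that appendix argument establishes intersection with $\ell_1'$, not containment of a stabbing point, so it is the wrong tool even in spirit.

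Both gaps are closed by the one idea you are missing, which is how the paper actually proceeds: pair the side point with the bottom point, not with its mirror image. Working in the right half-plane (the left is symmetric), let $l$ be the line through $a=(2.5,1)$ and $b=(0,-1.52)$ and let $V$ be the strip bounded by the perpendiculars to $l$ at $a$ and $b$. For $k=2,5,20$ one exhibits the disk $d_k$ through $a$ and $b$ whose center lies on the far side of $l$ from the origin, and checks by direct computation that $d_2$ misses $d^*_{0.11}$, that $d_5$ misses $d^*_{0.5}$, and that $d_{20}$ misses $d^*$. Applying Observations~\ref{outside-strip} and~\ref{radius} with $\eta$ equal to the central disk of the appropriate regime then stabs, by $a$ or $b$, every disk of $D^-$ centered on the far side of $l$---all directions in the lower-right at once, with no angular case split and no use of $\ell_2,\ell_3$ whatsoever. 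What remains is exactly what the paper finishes with: disks centered above the line $y=1$ are handled as in the case $\rmin\ge 4$ of Section~\ref{secDminBig} (Observation~\ref{outside-strip} outside the vertical strip, Observation~\ref{b} inside it, using that disks of $D^-$ avoid $c^*$), and disks centered in the remaining central triangle are stabbed because the union of the radius-$2$ disks around the four chosen points covers that triangle and every disk of $D^-$ has radius greater than $2$. Your proposal would have to be rebuilt around this pairing to go through; as written, the regions where your tools fail are precisely the ones the slanted line is designed for.
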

\begin{proof}
  We prove that any disk $e \in D^-$ whose center has positive $x$-coordinate is stabbed by at least one of the points $(2.5,1)$ or $(0,-1.52)$.
	The case where $e$'s center has negative  $x$-coordinate is symmetric.

	Let $d_{k}$  be the disk of radius $k$ having points $(0,-1.52)$ and $(2.5,1)$ on its boundary and its center is below the line $l$ that passes through points
	$(0,-1.52)$ and $(2.5,1)$. Let $V$ be the strip defined by the two lines perpendicular to $l$ and passing through the points $(2.5,1)$ and $(0,-1.52)$.

	Consider the disk $d_{2}$ (i.e., $k=2$), whose center $(x_2 , y_2)$ is below $l$ that is obtained by the following equations
		\[    (x_{2} - 2.5)^2 + (y_{2} - 1)^2 = 4, \qquad  x_{2}^2 +(y_{2} + 1.52)^2 = 4,    \qquad (c(e) \text{ below } l ) \]
		which solve to
		$x_2 = \frac{5}{4} + \frac{63 \sqrt{\frac{8499}{109}}}{850}$
		and $y_2 = \frac{-13}{50} - \frac{5 \sqrt{\frac{8499}{109}}}{68} $.

	We have that $\delta(d_2) = \sqrt{ x_2^2 + y_2^2 } -2 > 0.11$, that is disk $d_2$ does not intersect the disk
	$d^*_{0.11}$ of radius 0.11 centered at the origin.
 By our assumptions any disk $e \in D^{-}_{ \leq 5}$ intersects   $d^*_{0.11}$.
By Observations~\ref{outside-strip} and~\ref{radius}, we have that $e$ (recall that all disks in $D^-$ have radius at least 2) contains at least one of the points $(0,-1.52)$ or  $(2.5,1)$,
where  $d^*_{0.11}$, $e$,         $(2.5,1)$ and $(0,-1.52)$ play the role of
       $\eta$,        $\epsilon$,   $a$       and        $b$, respectively.

	Moreover, we have that $\delta(d_5) > 0.5$ (this can be verified as above by computing the location of the center of $d_{5}$ and its distance to the origin),
	that is disk $d_5$ does not intersect the disk
	$d^*_{0.5}$ of radius 0.5 centered at the origin.
 By our assumptions any disk $e \in D^{-}_{ \leq 20}$ intersects   $d^*_{0.5}$.
 By Observations~\ref{outside-strip} and~\ref{radius}, we have that $e$ with $r(e) \geq 5$ contains at least one of the  points $(0,-1.52)$ or $(2.5,1)$,
  where  $d^*_{0.5}$, $e$,         $(2.5,1)$ and $(0,-1.52)$ play the role of
       $\eta$,        $\epsilon$,   $a$       and        $b$, respectively.

	Finally, we have that $d_{20}$  does not intersect $d^*$ (this can be verified as above).
 By Observations~\ref{outside-strip} and~\ref{radius}, we have that $e \in D^-$ with $r(e) \geq 20$ contains at least one of the points $(0,-1.52)$ or  $(2.5,1)$ (since $e$ intersects $d^*$),
  where  $d^*$, $e$,         $(2.5,1)$ and $(0,-1.52)$ play the role of
       $\eta$,        $\epsilon$,   $a$       and        $b$, respectively.

	For a disk $e \in D^-$ centered above the line containing $(-2.5, 1)$ and $(2.5,1)$,
	the argument follows as the case where $\rmin \geq 4$.

	Finally, the union of disks of radius 2 centered at points
	$(0,0), \ (2.5,1), \ (-2.5,1),$ and $(0,-1.52)$ covers the region (below the line containing $(-2.5, 1)$ and $(2.5,1)$,
	above the line containing $(0, -1.52)$ and $(2.5,1)$, and
	above the line containing $(0, -1.52)$ and $(-2.5,1)$)
	that is not considered by the above cases.
\end{proof}

\end{document}